\PassOptionsToPackage{prologue,dvipsnames}{xcolor}
\documentclass[letterpaper]{scrartcl}
\usepackage[total={16.2cm, 23.5cm}]{geometry}
\usepackage{microtype}
\usepackage{amsmath,amsthm,amssymb, mathtools}
\usepackage{tikz}
\usetikzlibrary{patterns,decorations.pathreplacing,arrows.meta,shapes.geometric}
\usepackage{hyperref}
\hypersetup{
pdfencoding=auto, psdextra,
colorlinks=true,citecolor=green!50!black,linkcolor=red!60!black,
}
\usepackage[sort&compress,nameinlink]{cleveref}
\usepackage{nicefrac}
\usepackage[square,numbers,sort]{natbib}
\usepackage{comment}
\usepackage{paralist}
\usepackage{xcolor}
\usepackage{colortbl}
\usepackage{subcaption}
\usepackage{booktabs}
\usepackage{listings}
\usepackage[linesnumbered,ruled,vlined]{algorithm2e}
\usepackage[linecolor=green!70!black, backgroundcolor=green!10, bordercolor=black, textsize=small]{todonotes}
\definecolor{winered}{rgb}{0.6,0.1,0.1}
\usepackage{pgfplots}\pgfplotsset{compat=newest}
\usepackage{pifont}

\renewcommand*{\leq}{\leqslant}

\renewcommand*{\geq}{\geqslant}
\renewcommand{\epsilon}{\varepsilon}
\newcommand{\pabulib}{{{\textsc{Pabulib}}}}

\crefname{table}{Table}{Tables}
\Crefname{table}{Table}{Tables}
\crefname{figure}{Figure}{Figures}
\crefname{theorem}{Theorem}{Theorems}
\crefname{definition}{Definition}{Definitions}
\crefname{corollary}{Corollary}{Corollaries}
\crefname{observation}{Observation}{Observations}
\crefname{question}{Question}{Question}
\crefname{lemma}{Lemma}{Lemmas}
\crefname{example}{Example}{Examples}
\crefname{reduction}{Reduction}{Reductions}
\crefname{construction}{Construction}{Constructions}
\crefname{subsection}{Section}{Sections}
\crefname{section}{Section}{Sections}
\crefname{proposition}{Proposition}{Propositions}
\crefname{algorithm}{Algorithm}{Algorithms}
\crefname{algocf}{Algorithm}{Algorithms}
\Crefname{equation}{Inequality}{Inequalities}
\crefname{lstlisting}{listing}{listings}

\newcommand{\myemph}[1]{{\color{winered}\emph{#1}}}
\newcommand{\naturals}{{{\mathbb{N}}}}
\newcommand{\feasibles}{{{\mathcal{F}}}}
\newcommand{\reals}{{{\mathbb{R}}}}
\newcommand{\cost}{{{\mathrm{cost}}}}

\newcommand{\best}{{{\mathrm{best}}}}
\newcommand{\money}{{{\mathrm{m}}}}
\newcommand{\avg}{{{\mathrm{avg}}}}

\newcommand{\rem}{{{C_\mathrm{rem}}}}

\newcommand{\spent}{{{\mathrm{spent}}}}

\newcommand{\scaling}{{{\mathrm{scaling}}}}

\newcommand{\nxt}{{{\mathrm{next}}}}

\renewcommand{\part}{{{\mathrm{part}}}}

\theoremstyle{definition}
\newtheorem{definition}{Definition}
\newtheorem{example}{Example}
\newtheorem{lemma}{Lemma}

\theoremstyle{plain}

\newtheorem{theorem}{Theorem}
\newtheorem{corollary}{Corollary}

\makeatletter
\newtheorem*{rep@theorem}{\rep@title}
\newcommand{\newreptheorem}[2]{%
\newenvironment{rep#1}[1]{%
 \def\rep@title{#2 \ref{##1}}%
 \begin{rep@theorem}}%
 {\end{rep@theorem}}}
\makeatother
\newreptheorem{theorem}{Theorem}
\newreptheorem{proposition}{Proposition}
\newreptheorem{lemma}{Lemma}
\newreptheorem{corollary}{Corollary}

\newcommand{\argmin}{{\mathrm{argmin}}}
\makeatletter
\newcommand{\pushright}[1]{\ifmeasuring@#1\else\omit\hfill$\displaystyle#1$\fi\ignorespaces}
\newcommand{\pushleft}[1]{\ifmeasuring@#1\else\omit$\displaystyle#1$\hfill\fi\ignorespaces}
\makeatother

\begin{document}

\title{A General Theory of Proportionality with Additive Utilities}

\author{
Piotr Skowron\\
  University of Warsaw\\
  {\small \url{p.skowron@mimuw.edu.pl}}}
\date{}

\renewcommand{\zeta}{z}

\maketitle

\vspace{-1cm}
\begin{abstract}
We consider a model where a subset of candidates must be selected based on voter preferences, subject to general constraints that specify which subsets are feasible. This model generalizes committee elections with diversity constraints, participatory budgeting (including constraints specifying how funds must be allocated to projects from different pools), and public decision-making. Axioms of proportionality have recently been defined for this general model, but the proposed rules apply only to approval ballots, where each voter submits a subset of candidates she finds acceptable. We propose proportional rules for cardinal ballots, where each voter assigns a numerical value to each candidate corresponding to her utility if that candidate is selected. In developing these rules, we also introduce methods that produce proportional rankings, ensuring that every prefix of the ranking satisfies proportionality.
\end{abstract}

\lstset{
    keywords={input, output, for, while, if, else, return, break},
    comment=[l]{//},
    frame=single,
    mathescape=true,
    float,
    captionpos=b,
    numbers=left,
    breaklines=true,
}

\allowdisplaybreaks
\renewcommand{\contentsname}{\vspace{-2em}}
\tableofcontents

\section{Introduction}

We consider a model in which the objective is to select a subset from a set of available candidates $C$, based on voters' preferences. The selected subset must be \myemph{feasible}, where the feasibility constraints are defined by a family of subsets $\feasibles \subseteq 2^C$; a subset $W \subseteq C$ is a feasible solution if $W \in \feasibles$. This general framework captures a broad range of settings studied in social choice theory~\cite{mas-pie-sko:group-fairness}. For instance, when $\feasibles$ consists of all subsets of a specified cardinality, the model corresponds to committee elections~\cite{FSST-trends,lac-sko:multiwinner-book}. Our goal is to select a feasible subset such that \myemph{voters' preferences are represented in a fair manner}. Fairness is formalized through the notion of proportionality: if a group of voters constituting a fraction $\gamma$ of the electorate shares sufficient agreement on certain candidates, then this group should be able to influence the selection of approximately a $\gamma$ fraction of the chosen candidates. This intuition has been captured by various proportionality axioms. Among the most prominent are variants of \myemph{justified representation}, originally introduced in the context of committee elections~\cite{justifiedRepresentation,pjr17,pet-sko:laminar,kal-liu-kem:full-pjr}.

In this paper, we design \myemph{rules} (algorithms) for selecting proportionally fair outcomes, assuming voters express their preferences through additive utilities---that is, each voter $i$ assigns a utility value $u_i(c)$ to every candidate $c$. The problem of finding and computing proportional outcomes, specifically outcomes satisfying variants of justified representation, is well understood either for very specific types of feasibility constraints (e.g., for committee elections and participatory budgeting) or when voters have simple approval-based preferences, meaning each voter assigns a utility of either 1 or 0 to each candidate.

\subsection*{Feasibility Constraints}

As we have already noted our framework captures the model of committee elections; moreover, it allows to accommodate diversity constraints that impose additional structural requirements on the composition of the elected committee (e.g., that the elected committee should be gender-balanced)~\cite{azi:committees-soft-constraints,conf/aaai/BredereckFILS18,conf/ijcai/CelisHV18}. When $\feasibles$ consists of all subsets whose total cost does not exceed a predetermined budget, we obtain the model of participatory budgeting~\cite{pet-sko:laminar, pet-pie-sko:c:participatory-budgeting-cardinal,rey-mal:survey-on-pb}.

Furthermore, by introducing auxiliary candidates and imposing additional constraints, it is possible to model more complex voting scenarios that do not, at first glance, appear to involve selecting a subset of candidates~\cite{mas-pie-sko:group-fairness}. As an illustrative example, consider committee elections in which voters may approve, disapprove, or remain neutral toward each candidate. One can exploit the expressiveness of the general constrained model to reduce this richer preference setting to the simpler framework in which voters have only non-negative utilities over candidates~\cite{kra-pap-pie-sko:prop-negative-votes}. Specifically, for each candidate $c \in C$, one can introduce an auxiliary candidate $\neg c$ representing the option of \emph{not selecting} $c$. A voter approves $\neg c$ if and only if they disapproved of $c$ in the original election. In this way, voters can derive utility from the exclusion of candidates they oppose, thus still operating within a non-negative utility model. To ensure that this construction behaves as intended, we must impose additional feasibility constraints stating that $c$ and $\neg c$ cannot be selected simultaneously. The same way, our model captures the problem of public decisions where we need to make decisions on a number of independent issues~\cite{conitzer2017fair,fre-kah-pen:variable_committee_elections,sko-gor:proportional_public_decisions,cha-goe-pet:seq-decision-making,lac:perpetual-voting,bulteau2021jr}, the general model of judgement aggregation~\cite{lis-pol:judgment-aggregation,End15JA} or the problem of constructing a schedule of jobs based on the collective preferences of individuals~\cite{sko-rza-pas:collective_scheduling}. For more discussion on how different social choice problems can be modeled via constraints we refer to the recent paper by~\citet{mas-pie-sko:group-fairness}.

Very few rules are known handle even these specific models in the presence of additive utilities. A notable exception is the recent work of \citet{mav-mun-she:committees-with-constraints}, who analyzed a rule based on maximizing a smoothed variant of Nash welfare~\cite{fain2018fair} for general constraints; however, this rule satisfies only weaker notions of proportionality (see the discussion by \citet{mas-pie-sko:group-fairness}) and, in particular, fails variants of justified representation even for participatory budgeting~\cite{pet-sko:laminar}.

In the specific context of participatory budgeting, a rule known as the method of equal shares has been proposed that satisfies extended justified representation (EJR) up-to-one under cardinal utilities~\cite{pet-sko:laminar, pet-pie-sko:c:participatory-budgeting-cardinal,pap-pis-ski-sko-was:bos,BP-ejrplus}. However, this rule is designed specifically for participatory budgeting constraints, and it is not evident how to generalize it to arbitrary feasibility constraints. 

For committee elections with cardinal utilities, a notion stronger than justified representation axioms is often considered, namely the core~\cite{justifiedRepresentation}. Since the core can be empty, considerable effort has been devoted to designing algorithms that satisfy approximate variants of it~\cite{FMS18,mav-mun-she:committees-with-constraints,cheng2019group,jiang2019approx,min-yah-wan:core-approx}, yet these algorithms can in principle fail justified representation axioms quite severely.

\subsection*{Approval Preferences versus Additive Utilities}

Extending the axioms of justified representation to models with general feasibility constraints is conceptually involved. Nevertheless, definitions in this broader setting have recently been proposed~\cite{mas-pie-sko:group-fairness}, and they preserve many of the appealing properties of their original counterparts. However, the rules currently known for computing proportional outcomes under general feasibility constraints are tailored to approval-based preferences~\cite{mas-pie-sko:group-fairness}.

Under approval preferences, each voter simply specifies the set of candidates she approves of. Formally, the approval-based model can be viewed as a special case of the additive-utility model in which a voter assigns utility~1 to each approved candidate and~0 to all others. It remains unclear how rules designed for approval preferences can be extended to richer preference models---in particular, to accommodate additive utilities. Further, some approval-based rules (e.g., Proportional Approval Voting) are known to lose their good properties pertaining to proportionality, when applied to the model with additive utilities in the most natural way~\cite{pet-sko:laminar}.

Even when voters submit approval ballots, in many cases it is essential to use methods that rely on an implicit cardinal utility interpretation. For example, in participatory budgeting it is typically assumed that a voter derives utility from an approved project equal to its cost, and zero utility otherwise~\cite{fal-fli-pet-pie-sko-sto-szu-tal:pb-experiments, bri-for-lac-mal-pet:proportionality-in-abc-pb}. This observation further underscores the importance of developing proportional rules for additive utilities, even when the voters vote using simpler ballot formats.

\subsection*{Algorithms Based on Buying Candidates with Virtual Money}

Our rules view candidate selection as a process of purchasing candidates with virtual currency, and they are inspired by the Phragm\'en sequential rule~\cite{aaai/BrillFJL17-phragmen,skowron:prop-degree,mas-pie-sko:group-fairness,lac-sko:multiwinner-book}. Voters earn virtual money gradually at a constant rate. In the original Phragm\'en rule, whenever voters approving a candidate~$c$ accumulate enough money to cover its cost, they purchase it (their accounts are set to zero), any candidates that would violate feasibility constraints are removed, and the process continues until no unpurchased, non-removed candidate remains.

For additive utilities, the purchasing process is more involved. At any given moment, voters may choose not to purchase an affordable candidate if a more desirable one might become affordable soon. Specifically, each voter performs an internal simulation to estimate what fraction of each available candidate could be purchased at the current time. For each such candidate, the voter calculates the expected utility, scaled by this fraction. This is used to compute the best-case ratio of payment to utility for purchasing a fractional candidate. Voters then use this ratio to determine how much they are willing to spend on specific candidates, ensuring their payment-to-utility ratio does not exceed that of the best fractional purchase. The precise process is described in \Cref{sec:prop_rank}.

This algorithm can also construct a \myemph{proportional ranking} over candidates, that is a ranking, where each prefix, viewed as a committee, is proportional~\cite{proprank}. To achieve this, we simply do not remove infeasible candidates during the purchasing process and rank all candidates by their purchase timestamps. For this reason we call the algorithm \myemph{PropRank}. This addresses another important problem with many applications: constructing proportional rankings in the presence of additive utilities. A similar question has been recently addressed for ranked voters' preferences~\cite{azi-led-pet-pet-rit:prop-rank-ordinal}.

\subsection*{Method of Equal Shares for General Constraints}

The idea behind PropRank can be further developed to generalize the method of equal shares and its variant with bounded overspending to models with general feasibility constraints. The approach is similar to that described previously, but we delay purchasing affordable candidates as long as possible. Specifically, we allow voters to gradually accumulate virtual coins and identify the earliest time at which, if voters were to make their payments, the purchased candidates would violate the feasibility constraints. We stop immediately before this point and purchase candidates at that moment. We remove all infeasible candidates from consideration and continue the process.

We analyze our rules both in terms of theoretical properties and based on simulations performed on real participatory budgeting instances~\cite{fal-fli-pet-pie-sko-sto-szu-tal:pb-experiments}.

\section{The Model}

An \myemph{election} is a tuple $(N, C, \cost, \{u_i\}_{i \in N})$, where:
\begin{enumerate}
\item $N=\{1, 2, \ldots n\}$ is the set of \myemph{voters} and $C = \{c_1, \ldots, c_m\}$ is the set of  \myemph{candidates}.
\item $\cost\colon C \to \naturals$ is a function that associates each candidate with a \myemph{cost}, and 
\item for each $i \in N$, $u_i\colon C \to \reals_{\geq 0}$ is a \myemph{utility function}, which specifies how much voter $i$ appreciates the specific candidates. Without loss of generality, we assume that each candidate gets a positive utility from at least one voter. By $u_{\max}$ we denote the highest utility a voter assigns to a candidate, $u_{\max} = \max_{i \in N, c \in C} u_i(c)$.
\end{enumerate}
If the utilities of all the voters come from the two-element set $\{0, 1\}$, then we call the election \myemph{approval-based}. Another important special case of the model is when the costs of the candidates are all equal; in such a case we speak of \myemph{uniform costs}. We use the notation:
\begin{align*}
A_i = \{c \in C \colon u_{i}(c) > 0\} \qquad \text{and} \qquad N_c = \{i \in N \colon u_{i}(c) > 0\} \text{.}
\end{align*}

\subsection{Feasibility Constraints}

We define \myemph{feasibility constraints} as a family of subsets of candidates, $\feasibles \subseteq 2^C$, that is downward closed---that is, for all $A \subseteq B \subseteq C$, if $B \in \feasibles$, then $A \in \feasibles$.\footnote{The assumption of downward closure is purely technical and does not restrict the generality of our results. One can equivalently define $\feasibles$ as the family of subsets that can be extended to a feasible solution, as explained in the recent work of \citet{mas-pie-sko:group-fairness}.} We refer to the elements of $\feasibles$ as \myemph{feasible subsets}. A \myemph{selection rule} is a function that takes an election and feasibility constraints as input, and returns a feasible subset of the candidates.

A simple example of elections with feasibility constraints is the classic model of committee elections~\cite{lac-sko:multiwinner-book, FSST-trends}, where all subsets of candidates of size at most $k$, for some fixed constant $k$, are feasible. \citet{mas-pie-sko:group-fairness} discuss several other important examples of elections that are naturally defined through feasibility constraints.
One particularly important class of feasibility constraints is that given by matroid constraints.

\begin{definition}[Matroid constraints]
The feasibility constraints $\feasibles$ form a matroid if, for all $X, Y \in \feasibles$ with $|X| < |Y|$, there exists an element $c \in Y \setminus X$ such that $X \cup \{c\} \in \feasibles$. \hfill $\lrcorner$
\end{definition}

Examples of settings that can be modeled using matroid constraints include standard committee elections~\cite{lac-sko:multiwinner-book, FSST-trends}, committee elections with gender quotas~\cite{azi:committees-soft-constraints,conf/aaai/BredereckFILS18,conf/ijcai/CelisHV18,mas-pie-sko:group-fairness}, public decision-making~\cite{sko-gor:proportional_public_decisions, conitzer2017fair, fre-kah-pen:variable_committee_elections}, and sequential decision-making~\cite{lac:perpetual-voting}; see the discussion in the paper by \citet{mas-pie-sko:group-fairness}.

When we state that feasibility constraints have a matroid structure, we implicitly assume that all candidate costs are uniform and equal to one. Indeed, in our context, these costs intuitively correspond to the ``amount of space'' each candidate occupies within the feasibility constraints, and under the matroid assumption, all candidates intuitively carry the same weight.

Another important class of elections corresponds to \myemph{participatory budgeting (PB)}~\cite{participatoryBudgeting, knapsackVoting, pet-pie-sko:c:participatory-budgeting-cardinal}.
\begin{definition}[Participatory budgeting (PB) constraints]
The family $\feasibles$ forms participatory budgeting constraints if there exists a constant $b > 0$, called the \myemph{budget}, such that
\begin{align*}
	\feasibles = \left\{ W \subseteq C \colon \sum_{c \in W} \cost (c) \leq b \right\} \text{.} \tag*{\llap{$\lrcorner$}}
\end{align*} 
\end{definition}

\subsection{Proportionality Axioms}

The proportionality axioms that we consider are all based on the idea of a cohesiveness. Below provide a definition that extends the two provided by \citet{mas-pie-sko:group-fairness} and \citet{pet-pie-sko:c:participatory-budgeting-cardinal}. 

\begin{definition}[Cohesiveness for general feasibility constraints]\label{def:cohesiveness}
Consider an election $E = (N, C, \cost, \{u_i\}_{i \in N})$. We say that a group of voters $S$ is $(\alpha, \beta)$-\myemph{cohesive}, $\alpha \in \reals$, $\beta \in \naturals$, if for each feasible set $T \in \feasibles$ with size bounded by
\begin{align*}
|T| < \beta \cdot \frac{n - |S|}{|S|} \text{,}
\end{align*}
there exists a set $X$ such that $|X| \leq \beta$, $\sum_{c \in X} \min_{i \in S}u_i(c) \geq \alpha$, and $X \cup T \in \feasibles$. A group of voters $S$ is $\alpha$-\myemph{cohesive} if it is $(\alpha, \beta)$-\myemph{cohesive} for some $\beta \in \naturals$.
\hfill $\lrcorner$
\end{definition}

The definition is somewhat involved, but at a high level, it can be summarized as follows. Intuitively, a group $S$ agrees that a candidate $c$ represents a value of at least $\min_{i \in S} u_i(c)$; thus, the sum $\sum_{c \in X} \min_{i \in S} u_i(c)$ can be viewed as a certain strong form of agreement among the voters in $S$ with respect to the value of $X$. The group $S$ is considered $(\alpha, \beta)$-cohesive if, intuitively, it has the right to select $\beta$ candidates on which $S$ agrees their value is at least $\alpha$.
The key question is: when should such a group be granted the right to $\beta$ candidates? To establish this, consider the voters from $N \setminus S$. In the worst case from $S$’s perspective, these voters might have preferences disjoint from those of $S$, leading to a potential conflict with respect to which candidates should be selected. If $S$ is entitled to $\beta$ candidates, then, by proportionality, the remaining voters would be entitled to roughly $\beta \cdot \nicefrac{n - |S|}{|S|}$ candidates. 
Thus, $S$ should have the right to $\beta$ candidates if their selection would not reduce the number of candidates available to $N \setminus S$ (within feasibility constraints) below $\beta \cdot \nicefrac{n - |S|}{|S|}$; formally, if such a selection would not deprive the voters from $N \setminus S$ of the right to buy a set $T$ of size lower than $\beta \cdot \nicefrac{n - |S|}{|S|}$.

A more specific definition for participatory budgeting constrains follows~\cite{pet-pie-sko:c:participatory-budgeting-cardinal}:

\begin{definition}[Cohesiveness for participatory budgeting constraints with the budget $b$]\label{def:pb_constraints}
We say that a group of voters $S$ is $\alpha$-\myemph{cohesive}, $\alpha \in \reals$, if there exists $X \subseteq C$ such that the voters from $S$ given a proportional share of the budget $b$ can afford to buy candidates from $X$ (that is
$b \cdot \nicefrac{|S|}{n} \geq \cost(X)$), and if $\sum_{c \in X} \min_{i \in S}u_i(c) \geq \alpha$. \hfill $\lrcorner$
\end{definition}

In this paper we introduce a further strengthening of the condition of cohesiveness in \Cref{def:cohesiveness} by requiring that the voters must agree on the utility of all candidates in $X$ simultaneously. This leads us to a stronger notion of cohesiveness, and as a result to a weaker notion of proportionality. 

\begin{definition}
    We say that a group of voters $S$ is strongly $(\alpha, \beta)$-\myemph{cohesive}, $\alpha \in \reals$, $\beta \in \naturals$, if for each feasible set $T \in \feasibles$ with size bounded by $|T| < \beta \cdot \nicefrac{n - |S|}{|S|}$, there exists a set $X$ with $|X| = \beta$, $\beta \cdot \min_{i \in S, c \in X}u_i(c) \geq \alpha$, and $X \cup T \in \feasibles$. \hfill $\lrcorner$
\end{definition}

One of the most important proportionality axioms is extended justified representation~\cite{justifiedRepresentation,mas-pie-sko:group-fairness,pet-pie-sko:c:participatory-budgeting-cardinal}.

\begin{definition}[Extended Justified Representation]
Given an election $E = (N, C, \cost, \{u_i\}_{i \in N})$ we say that an outcome $W$ satisfies \myemph{extended justified representation (EJR)} if for each $\alpha \in \reals$, and $\alpha$-cohesive group of voters $S$ there exists $i \in S$ such that $u_i(W) \geq \alpha$. An outcome $W$ satisfies \myemph{weak extended justified representation (EJR)} if for each $\alpha \in \reals$, and strongly $\alpha$-cohesive group of voters $S$ there exists $i \in S$ such that $u_i(W) \geq \alpha$. We say that a selection rule satisfies EJR (respectively, weak EJR) if it always returns outcomes that satisfy EJR (resp., weak EJR).  \hfill $\lrcorner$
\end{definition}

It is worth noting that for approval ballots cohesiveness and strong cohesiveness are equivalent, which also implies equivalence of EJR and weak EJR for approval ballots.

\subsection{Ranking Rules}

A \myemph{ranking rule} is a function that, given an election as input, produces a ranking---i.e., a linear order---of the candidates. Since the ranking rule orders all available candidates, it does not need to consider feasibility constraints. To define the proportionality of ranking rules, we follow the approach outlined by \citet{proprank}.

\begin{definition}[Extended Justified Representation for Ranking Rules]
Given an election $E = (N, C, \cost, \{u_i\}_{i \in N})$ we say that a ranking of candidates $\succ$ satisfies extended justified representation if each prefix of the ranking $W$ satisfies EJR for participatory budgeting constraints with the budget set to $b = \sum_{c \in W}\cost(c)$.
We say that a ranking rule satisfies EJR if it always returns rankings that satisfy EJR.  \hfill $\lrcorner$
\end{definition}

\section{Proportionality Degree with Additive Utilities}

In the definition of EJR, we require the existence of at least one satisfied voter. However, this definition implicitly ensures that the overall utility within the cohesive group is high. When we remove from a group $S$ the voter $i$ who is guaranteed to have high utility, we obtain a smaller cohesive group $S \setminus {i}$ to which the axiom still applies (though it may imply an appropriately smaller guarantee for this reduced group). By iteratively applying this reasoning, we can deduce that the entire group $S$ is guaranteed to have high utility. The axiom of proportionality degree implements this intuition more explicitly, and provides a means to quantify the extent to which the preferences of the group of voters are considered. Proportionality degree was first proposed in the context of approval preferences~\cite{skowron:prop-degree}; below we generalize it for additive utilities.

\begin{definition}[Proportionality Degree]\label{def:ejr_generalization}
Consider an election $E = (N, C, \cost, \{u_i\}_{i \in N})$, and a subset of candidates $W \subseteq C$, $\cost(W) \leq b$. We say that $W$ has the \myemph{proportionality degree} of $d\colon \reals \to \reals$ if for each $\alpha \in \reals$, and $\alpha$-cohesive group of voters $S$ 
the average satisfaction of these voters equals to at least $d\left(\alpha\right)$:
\begin{align*}
\frac{1}{|S|} \sum_{i \in S} u_i(W) \geq d\left(\alpha\right) \text{.}
\end{align*}
A selection rule has the proportionality degree of $d$ if it always returns outcomes that have the proportionality degree of $d$.  \hfill $\lrcorner$
\end{definition}

Analogously, we formulate the definition of weak proportionality degree by simply substituting the condition of cohesiveness with strong cohesiveness in \Cref{def:ejr_generalization}.

Now, we establish the relation between proportionality degree and EJR, and show that this relation is essentially the same as in the special case of approval-based committee elections.

\begin{theorem}\label{thm:prop_degree_ejr}
Consider participatory budgeting constraints, and assume that the number of voters is divisible by the budget. If a committee $W$ satisfies EJR, then it has the proportionality degree of $d(\alpha) = (\alpha - u_{\max})/2$, where  $u_{\max}$ is the highest utility a voter assigns to a candidate.
\end{theorem}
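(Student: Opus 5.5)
We will prove the statement by a peeling argument in the style of the approval-based proof of \citet{skowron:prop-degree}. Fix an $\alpha$-cohesive group $S$ under the PB cohesiveness of \Cref{def:pb_constraints}, witnessed by a set $X$ with $\cost(X) \leq b\cdot\nicefrac{|S|}{n}$ and $\sum_{c\in X}\min_{i\in S}u_i(c)\geq\alpha$. We will build a chain $S=S_0\supsetneq S_1\supsetneq\dots$ by repeatedly removing one voter whose utility is guaranteed by EJR. At each step we apply EJR to a subgroup that is cohesive for a reduced value $\alpha'$. We then sum the guarantees over the removed voters and average over $|S|$.

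\textbf{Monotonicity step.} If $S'\subseteq S$, the same $X$ still gives $\sum_{c\in X}\min_{i\in S'}u_i(c)\geq\alpha$, since a minimum over a smaller set is larger. However, the budget share $b\cdot\nicefrac{|S'|}{n}$ shrinks, so $S'$ may no longer afford all of $X$. We therefore pass to a subset $X'\subseteq X$ of cost at most $b\cdot\nicefrac{|S'|}{n}$ and lose some value.

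This is where the divisibility assumption enters. Since $n$ is divisible by $b$, each voter's share is $\nicefrac{b}{n}=\nicefrac{1}{q}$ for an integer $q=\nicefrac{n}{b}$. Costs are natural numbers, so every group of $q$ voters carries exactly one unit of budget.

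\textbf{Choice of $X'$.} We order $X$ by decreasing ratio of $\min_{i\in S}u_i(c)$ to $\cost(c)$, i.e.\ value density, and take the longest prefix fitting the budget $b\cdot\nicefrac{|S'|}{n}$. A fractional-knapsack estimate then gives value at least $\alpha\cdot\nicefrac{|S'|}{|S|}-u_{\max}$, where the $-u_{\max}$ accounts for the one candidate cut off at the boundary. Thus $S'$ is $(\alpha\nicefrac{|S'|}{|S|}-u_{\max})$-cohesive. EJR then yields a voter $i\in S'$ with $u_i(W)\geq \alpha\nicefrac{|S'|}{|S|}-u_{\max}$.

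\textbf{Summation.} Removing such a voter each time with $|S'|=|S|,|S|-1,\dots,1$ and summing gives
\[
\sum_{i\in S}u_i(W)\geq \sum_{j=1}^{|S|}\Bigl(\alpha\tfrac{j}{|S|}-u_{\max}\Bigr)^{+}.
\]
The averages $\tfrac{1}{|S|}\sum_j \alpha j/|S|=\alpha(|S|+1)/(2|S|)\geq\alpha/2$ already match the target, but subtracting $u_{\max}$ per voter yields $\alpha/2-u_{\max}$, not $(\alpha-u_{\max})/2$.

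\textbf{Main obstacle: reducing the loss to $u_{\max}/2$.} This sharpening of the error term is where I expect the difficulty. I would first try to make the boundary loss proportional to the removed budget rather than a full $u_{\max}$ per step. Removing a block of $q$ voters reduces the budget by exactly $1$, which by divisibility and integer costs removes at most one unit of cost from the prefix of $X$. The truncation loss should then be charged only once per removal of $q$ voters, or alternatively charged against the top $u_{\max}$ candidate just once overall, by peeling the value-density prefix of $X$ monotonically.

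Concretely, I would define $X_j$ as the prefix of the sorted $X$ affordable with budget $j/q$. Let $v_j=\sum_{c\in X_j}\min_{i\in S}u_i(c)$, and show $v_j\geq \alpha j/|S| - u_{\max}\cdot[\text{a boundary term}]$. The goal is for the boundary terms to average to at most $u_{\max}/2$ over the chain. The cleanest route is to bound $u_i(W)\geq\max(v_j,\,0)$ and use the pairing $v_j+v_{|S|-j}\geq \alpha-u_{\max}$: the two prefixes together cover all of $X$ except possibly one split candidate, which immediately gives average at least $(\alpha-u_{\max})/2$.
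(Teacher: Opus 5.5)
\textbf{Gap in the final step.} Your peeling framework and your first estimate are fine. The step that is supposed to sharpen the error term fails as stated. $X_j$ and $X_{|S|-j}$ are both maximal prefixes of the \emph{same} density order, so they are nested: their union is just the longer one. They therefore do not cover $X$ up to one split candidate, and $v_j+v_{|S|-j}\ge\alpha-u_{\max}$ can be false. For instance, let $b=n$ (so $q=1$) and $|S|=4$. Take $X=\{a,b\}$ with $\cost(a)=3$, $\cost(b)=1$, $\alpha(a)=u_{\max}$ and $\alpha(b)=u_{\max}/4$, where $\alpha(c)=\min_{i\in S}u_i(c)$. Then $a$ comes first, $X_2=\emptyset$, and $v_2+v_{4-2}=0<u_{\max}/4=\alpha-u_{\max}$. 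Your heuristic that removing $q$ voters costs at most one unit of the prefix is also wrong: lowering the budget by one can drop an arbitrarily expensive candidate from the maximal prefix.

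\textbf{How to repair it.} Use a different witness for the complementary group size. Suppose $X_j\neq X$; let $c_j$ be the first candidate after $X_j$ and set $R_j=X\setminus(X_j\cup\{c_j\})$. Maximality of $X_j$ gives $\cost(X_j)+\cost(c_j)>jb/n$. Hence $\cost(R_j)<\cost(X)-jb/n\le(|S|-j)b/n$, so every $(|S|-j)$-subset of $S$ is $\sum_{c\in R_j}\alpha(c)$-cohesive. Moreover $X_j$, $\{c_j\}$ and $R_j$ partition $X$. Let $w_k$ be the largest value $\sum_c\alpha(c)$ of a subset of $X$ of cost at most $kb/n$. Then:
\begin{itemize}
\item $w_j+w_{|S|-j}\ge\alpha-u_{\max}$ for $1\le j\le|S|-1$ (trivially if $X_j=X$), and $w_{|S|}\ge\alpha$;
\item your chain gives $\sum_{i\in S}u_i(W)\ge\sum_{k=1}^{|S|}w_k$;
\item pairing gives $2\sum_{k=1}^{|S|}w_k\ge(|S|-1)(\alpha-u_{\max})+2\alpha\ge|S|(\alpha-u_{\max})$.
\end{itemize}
Alternatively, keep your prefix values and average them. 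They are nondecreasing in the budget, so their average is at least $\frac{1}{L}\int_0^L v$ with $L=|S|b/n$. The fractional-knapsack curve averages at least $\alpha/2$ by concavity. The truncation loss integrates to $\frac12\sum_{c\in X}\alpha(c)\cost(c)\le\frac{u_{\max}}{2}L$.

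\textbf{Comparison with the paper.} Once repaired, your route is genuinely different from the paper's, which inducts on $|S|$. The paper picks, by pigeonhole, one candidate $c'$ with $(|S|-s(c'))/|S|\ge\alpha(c')/\sum_{c\in X}\alpha(c)$. It peels $|S|-s(c')$ voters at level $\sum_{c\in X\setminus\{c'\}}\alpha(c)$. It then applies the induction hypothesis to exactly $s(c')=\frac{n}{b}(\cost(X)-\cost(c'))$ remaining voters, and this is where divisibility is needed. The repaired pairing argument needs neither the density order nor the divisibility assumption.
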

\begin{proof}
Consider an election $E = (N, C, \cost, \{u_i\}_{i \in N})$, a budget value $b$, and a subset of candidates $W \subseteq C$ that satisfies EJR. We will prove the statement of the theorem by induction on the size of cohesive groups. For cohesive groups of size zero the statement clearly holds. Let $S \subseteq N$ be an $\alpha$-cohesive group of voters, and let us assume that the statement holds for all cohesive groups of size strictly smaller than $|S|$; we will prove it for $S$. Let $X$ be the set that witnesses that $S$ is cohesive. For each $c' \in X$ let $\alpha(c') = \min_{i \in S} u_i(c')$, and
for each $c' \in X$ we define $s(c')$ as:
\begin{align*}
s(c') = \frac{n}{b} \cdot (\cost(X) - \cost(c')) \text{.} 
\end{align*}
Now, observe that we have:
\begin{align*}
1 - \frac{s(c')}{|S|} \geq \left(1 - \frac{s(c')}{\cost(X) \cdot \frac{n}{b}}\right) = \left(1 - \frac{\frac{n}{b} \cdot (\cost(X) - \cost(c'))}{\cost(X) \cdot \frac{n}{b}}\right) 
                 = \frac{\cost(c')}{\cost(X)} \text{.}
\end{align*}
By summing up this expression over all $c' \in X$, we get:
\begin{align*}
\sum_{c' \in X} \frac{|S| - s(c')}{|S|} \geq \sum_{c' \in X} \frac{\cost(c')}{\cost(X)} = 1 = \sum_{c' \in X} \frac{\alpha(c')}{\sum_{c \in X}\alpha(c)}\text{.}
\end{align*}
Consequently, there must exist $c'$ such that 
\begin{align}\label{eq:pigeonhole_prop_degree_implication}
\frac{|S| - s(c')}{|S|} \geq \frac{\alpha(c')}{\sum_{c \in X} \alpha(c)} \text{.}
\end{align}
Let $X' = X \setminus \{c'\}$. Observe that a group $S' \subseteq S$ with $|S'| \geq s(c')$ is $(\alpha - \alpha(c'))$-cohesive.
Thus, by EJR, in each such a group there exists a voter $i$ such that $u_i(W) \geq \sum_{c \in X'} \alpha(c)$. We apply this reasoning to $S' = S$, finding a voter $i$ with appropriately high utility, next we apply it to $S' = S \setminus \{i\}$, etc., until we end up with a group of size $s' = s(c')$. Now, using the inductive assumption we can assess the total utility of the voters from $S$:
\begin{align*}
\sum_{i \in S} u_i(W) &\geq \underbrace{(|S| -  s') \sum_{c \in X'} \alpha(c)}_{\text{EJR applied several times}} +  \underbrace{\frac{1}{2} \cdot s' \cdot \left(\sum_{c \in X'} \alpha(c) - u_{\max} \right)}_{\text{inductive assumption}} \\
&\geq  \frac{1}{2} \cdot |S| \left(\sum_{c \in X} \alpha(c) - u_{\max}\right) + \frac{1}{2} \cdot |S|(u_{\max} - \alpha(c')) + \frac{1}{2} \cdot |S| \cdot \sum_{c \in X'} \alpha(c) \\
&\qquad\qquad - s' \sum_{c \in X'} \alpha(c) + \frac{1}{2} \cdot s' \cdot \left(\sum_{c \in X'} \alpha(c) -u_{\max}\right) \\
&\geq \frac{1}{2} \cdot |S| \left(\sum_{c \in X} \alpha(c) - u_{\max}\right) + \frac{1}{2} \cdot |S|(u_{\max} - \alpha(c')) + \frac{1}{2} \cdot |S| \cdot \sum_{c \in X'} \alpha(c) \\
&\qquad\qquad - \frac{1}{2} \cdot s' \cdot \left(\sum_{c \in X'} \alpha(c) + u_{\max}\right)  \\
&\geq \frac{1}{2} \cdot |S| \left(\sum_{c \in X} \alpha(c) - u_{\max}\right) - \frac{1}{2} \cdot |S|\alpha(c') + \frac{1}{2} \cdot |S| \cdot \sum_{c \in X'} \alpha(c) \\
&\qquad\qquad - \frac{1}{2} \cdot s' \cdot \sum_{c \in X'} \alpha(c) + \frac{1}{2} \cdot u_{\max} \cdot (|S| - s') \\
&\geq \frac{1}{2} \cdot |S| \left(\sum_{c \in X} \alpha(c) - u_{\max}\right) - \frac{1}{2} \cdot |S|\alpha(c') + \frac{1}{2} \cdot (|S| - s')\cdot \sum_{c \in X'} \alpha(c) \\
&\qquad\qquad + \frac{1}{2} \cdot u_{\max} \cdot (|S| - s') \\
&\geq \frac{1}{2} \cdot |S| \left(\sum_{c \in X} \alpha(c) - u_{\max}\right) \underbrace{- \frac{1}{2} \cdot |S|\alpha(c') + \frac{1}{2} \cdot (|S| - s')\cdot \sum_{c \in X} \alpha(c)}_{\text{we can estimate this by \eqref{eq:pigeonhole_prop_degree_implication}}} \\
&\geq \frac{1}{2} \cdot |S| \left(\sum_{c \in X} \alpha(c) - u_{\max}\right) \text{.}
\end{align*}  
This completes the inductive step, hence the proof.
\end{proof}

Note that \Cref{thm:prop_degree_ejr} provides an implication that applies to each cohesive group of voters $S$. By examining the proof of the theorem, we can see that $u_{\max}$ actually corresponds to the highest utility a voter in $S$ assigns to a candidate within a set that demonstrates the group's cohesiveness. Although this formulation is stronger, we prefer the original for its simplicity. 

An analogous implication holds for approval-based elections with matroid constraints~\cite{mas-pie-sko:group-fairness}. \Cref{ex:ejr_lack_of_implication} shows an instance with matroid constraints but with additive utilities where the implication does not hold. In \Cref{thm:prop_degree_ejr_matroid} however we show a weaker implication that depends on the size of the cohesive group, and applies to all types of constraints.

\begin{example}\label{ex:ejr_lack_of_implication}
We will present an algorithm for constructing an instance in which EJR does not guarantee a proportionality degree of $d(\alpha) = (\alpha - u_{\max})/2$. Fix a natural number $z \in \naturals$, and consider an instance of public decisions, where the set of candidates is divided into disjoint pairs $C = C_1 \cup \ldots C_{2z}$, $|C_i| = \{a_i, b_i \}$ for each $i \in \{1, \ldots, 2z\}$; the feasible outcomes are those that involve at most one candidate from each pair. The number of voters equals to $n = z$. Consider a group of voters $S$ with $|S| = z$ who unanimously assign the utility of $(z - i)^2$ to $a_i$ for $i \leq z$. Additionally, the $i$-th voter from $S$ assigns the utility of $w_i$ (specified later on) to $a_{z + i}$. Consider a subset $S' \subseteq S$, and let us estimate the highest value of $\alpha$ such that $S'$ is $(\alpha, \beta)$-cohesive for some fixed $\beta\in \naturals$. For that our algorithm considers each possible value of $\beta \in \{1, \ldots z\}$. For a fixed value of $\beta$ we construct an adversarial set $T$, such that:
\begin{align*}
|T| = \left\lfloor\beta \cdot \frac{n - |S'|}{|S'|} \right\rfloor \text{.}
\end{align*} 
In the worst case, $T$ consists of $|T|$ candidates $b_i$ with the lowest indices. Then, the best corresponding set $X$ gives the utility of at most:
\begin{align*}
\sum_{i = |T|}^{\min(z, \beta + |T|)} (z - i)^2 \text{.}
\end{align*}
The maximum value of the above expression over all $\beta$ is the previously mentioned utility $w_{|S'|}$. 
Thus, the solution that consists of all $b_i$ for $i \leq z$ and $a_i$ with $i > z$ satisfies EJR. The average of utilities of the voters from $S$ is thus:
\begin{align*}
\frac{1}{z} \cdot \sum_{i = 1}^{z} w_i \text{.}
\end{align*}
The original group $S$ is at least $\alpha$-cohesive for:
\begin{align*}
\alpha_S =  \sum_{i = 1}^{z} (z - i)^2  \text{.}
\end{align*}
The value of the expression $(u_S + z^2)/\alpha_S$ for $z = 200$ equals to approximately $0.38$ which is lower than $0.5$. \hfill $\lrcorner$
\end{example}

\begin{theorem}\label{thm:prop_degree_ejr_matroid}
If an outcome $W$ satisfies EJR, then for each $(\alpha, \beta)$-cohesive group of voters $S$, the average utility from $W$ within the group $S$ equals at least:
\begin{align*}
\frac{1}{|S|} \sum_{i \in S} u_i(W) \geq \alpha \cdot  (\gamma - 1) \cdot \left(\gamma \cdot \log\left( \frac{\gamma}{\gamma-1}\right) - 1 \right) - u_{\max}
\end{align*}
where $\gamma = \nicefrac{n}{|S|}$, and $u_{\max}$ is the highest utility a voter assigns to a candidate.
\end{theorem}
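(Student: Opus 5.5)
We peel off voters from $S$ one at a time and, for each resulting subgroup, use the cohesiveness of $S$ to certify that the subgroup is still cohesive at a somewhat smaller level; EJR then hands us one satisfied voter per step. Summing these per-voter guarantees gives a Riemann sum, which we compare with an integral that evaluates to the expression in \Cref{thm:prop_degree_ejr_matroid}.

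\textbf{Shrinking the group.} Let $S$ be $(\alpha,\beta)$-cohesive, $s=|S|$. Order the voters of $S$ as $i_1,\dots,i_s$, where $i_1$ is the voter guaranteed by EJR for $S$ itself. Inductively, $i_{j+1}$ is the voter guaranteed by EJR for $S_j = S\setminus\{i_1,\dots,i_j\}$, at whatever level $S_j$ can be certified to be cohesive. For the certification, put $t=s-j$ and choose $\beta_t \le \beta$ with
\begin{align*}
\beta_t\cdot\frac{n-t}{t} \le \beta\cdot\frac{n-s}{s}+(\beta-\beta_t) \text{,}
\end{align*}
i.e., $\beta_t \lesssim \beta\cdot\frac{t}{s}\cdot\frac{n-s+s}{n-t+t}\cdot\ldots$; concretely $\beta_t=\left\lfloor \beta\frac{n/s - 1 + 1}{n/t - 1 + 1}\cdots\right\rfloor$, which simplifies to $\beta_t \approx \beta\cdot\frac{t(n-s)+ts}{s(n-t)+ts}$. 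The value of $\beta_t$ is only needed up to the integral approximation.

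Now take any feasible $T$ with $|T| < \beta_t\frac{n-t}{t}$. We first extend $T$ by arbitrary padding up to size just under $\beta\frac{n-s}{s}$, if $T$ is smaller than that. Cohesiveness of $S$ then supplies a set $X$, $|X|\le\beta$, of value $\ge\alpha$ for $S$ (hence for $S_j \subseteq S$) with $X\cup T$ feasible. The set $T$ may block some elements of $X$. Since $T$ exceeds the baseline size by at most $\beta-\beta_t$ elements, we delete the $\beta-\beta_t$ elements of $X$ that $T$ could exclude. Each deleted element contributes at most $u_{\max}$ to the value, but to get the stated bound we instead delete the lowest-value elements, so that the remaining $\beta_t$ elements keep at least a $\beta_t/\beta$ fraction of $\alpha$. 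The conclusion is that $S_j$ is $(\alpha\beta_t/\beta,\beta_t)$-cohesive, up to rounding, which costs one $u_{\max}$ in total.

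\textbf{Main obstacle.} The delicate step is the exchange argument just described. For general downward-closed $\feasibles$, adding elements to $T$ beyond $\beta\frac{n-s}{s}$ may invalidate the whole $X$, not just a few elements of it. I would try first to avoid the exchange entirely by reformulating: apply cohesiveness of $S$ directly to $T$ whenever $|T|<\beta\frac{n-s}{s}$. In that case $S_j$ inherits cohesiveness at level $\alpha$ with parameter $\beta'$, where $\beta'$ is the largest integer satisfying $\beta'\frac{n-t}{t}\le \beta\frac{n-s}{s}$. This only reuses $X$'s lowest-value subset of size $\beta'$, and by averaging its value is $\ge \alpha\beta'/\beta$ (taking the top $\beta'$ elements, which works because definition-level cohesiveness holds for each such $T$). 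So voter $i_{j+1}$ gets utility $\ge \alpha\cdot\frac{t(n-s)}{s(n-t)} - u_{\max}$, where the floor loses at most one element's worth of value.

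\textbf{Integration.} Summing over $t=1,\dots,s$ and dividing by $s$, write $x=t/s\in(0,1]$. The average is
\begin{align*}
\frac{1}{s}\sum_{t=1}^{s} \alpha\cdot\frac{(\gamma-1)\,x}{\gamma-x} - u_{\max} \;\gtrsim\; \alpha(\gamma-1)\int_0^1\frac{x}{\gamma-x}\,dx - u_{\max} \text{.}
\end{align*}
Since the integrand is increasing in $x$, the right-endpoint Riemann sum dominates the integral. Finally,
\begin{align*}
\int_0^1\frac{x}{\gamma-x}\,dx=\gamma\log\frac{\gamma}{\gamma-1}-1 \text{,}
\end{align*}
which yields exactly the bound stated in \Cref{thm:prop_degree_ejr_matroid}.
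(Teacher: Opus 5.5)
Once you drop the padding/exchange sketch in the first paragraph, which would need a matroid exchange step and is not valid for general downward-closed $\feasibles$, your argument is correct and is essentially the paper's proof: each subgroup of size $t$ is $\bigl(\frac{\alpha}{\beta}\beta_t,\beta_t\bigr)$-cohesive with $\beta_t=\bigl\lfloor\beta\frac{t(n-s)}{s(n-t)}\bigr\rfloor$, witnessed by the \emph{top} $\beta_t$ elements of $X$ (not the lowest-value ones, and at level $\frac{\alpha}{\beta}\beta_t$ rather than $\alpha$), followed by repeated EJR and summation. The only difference is cosmetic: you lower-bound the sum by a right-endpoint Riemann integral of $\frac{x}{\gamma-x}$, whereas the paper uses $\sum_{s'=1}^{|S|}\frac{1}{n-s'}\geq\log\frac{n}{n-|S|}$, and the two are equivalent.
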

\begin{proof}
Let $S \subseteq N$ be an $(\alpha, \beta)$-cohesive group of voters. 
First, we will show that each group $S' \subseteq S$ is $(\alpha', \beta')$-cohesive for
\begin{align*}
 \beta' = \beta \cdot \frac{(n - |S|)|S'|}{(n - |S'|)|S|}       \qquad     \alpha' = \frac{ \alpha}{\beta} \cdot \lfloor \beta' \rfloor \geq \frac{ \alpha}{\beta} \cdot \beta' - u_{\max}\text{.}
\end{align*}

Indeed, consider a feasible set of candidates $T \in \feasibles$. If there exists a set $X$ with  $|X| \leq \beta$, $\sum_{c \in X} \alpha(c) \geq  \alpha$, and $X \cup T \in \feasibles$, then we can simply pick $\lfloor \beta' \rfloor$ candidates $c \in X$ with the highest values of $\alpha(c)$, and the condition of cohesiveness is satisfied. Otherwise, by the fact that $S$ is $(\alpha, \beta)$-cohesive we know that:
\begin{align*}
|T| \geq \beta \cdot \frac{n - |S|}{|S|} \geq \beta \cdot \frac{(n - |S|)|S'|}{(n - |S'|)|S|}  \cdot \frac{n - |S'|}{|S'|} = \beta' \cdot \frac{n - |S'|}{|S'|} \text{,}
\end{align*}
and so, the condition for cohesiveness is also satisfied.

Now, similarly, as in the proof \Cref{thm:prop_degree_ejr} we will apply EJR multiple times, to different subsets of $S$. As the result, we get that:
\begin{align*}
\sum_{i \in S} u_i(W) &\geq \sum_{s' = 0}^{|S|}   \frac{ \alpha}{\beta} \cdot \lfloor \beta' \rfloor \geq \sum_{s' = 0}^{|S|}  \alpha \cdot \frac{(n - |S|)s'}{(n - s')|S|} - |S| \cdot u_{\max} \\
                      &= \alpha \cdot \frac{n - |S|}{|S|} \cdot \sum_{s' = 1}^{|S|} \frac{s' - n + n}{n - s'} - |S| \cdot u_{\max} \\
		     &= \alpha \cdot \frac{n - |S|}{|S|} \cdot \sum_{s' = 1}^{|S|} \left(\frac{n}{n - s'} - 1\right) - |S| \cdot u_{\max} \\
		     &= \alpha \cdot \frac{n(n - |S|)}{|S|} \cdot \sum_{s' = 1}^{|S|} \frac{1}{n - s'} - \alpha (n - |S|) - |S| \cdot u_{\max} \\
		     &\geq \alpha \cdot \frac{n(n - |S|)}{|S|} \cdot \log\left( \frac{n}{n-|S|}\right) - \alpha (n - |S|) - |S| \cdot u_{\max} \\
		     &= |S| \alpha \cdot \frac{|S|\gamma (|S|\gamma - |S|)}{|S|^2} \cdot \log\left( \frac{|S|\gamma}{|S|\gamma-|S|}\right) - \alpha (|S|\gamma - |S|) - |S| \cdot u_{\max} \\
		     &= |S| \alpha \cdot \left( \gamma (\gamma - 1) \cdot \log\left( \frac{\gamma}{\gamma-1}\right) -  (\gamma - 1) \right) - |S| \cdot u_{\max} \text{.}
\end{align*}
This completes the proof.
\end{proof}
In order to better understand the formula of proportionality degree in \Cref{thm:prop_degree_ejr_matroid} we depict it in \Cref{fig:prop_degree_general}. We can see that the function quickly approaches $\nicefrac{1}{2}$ which is the theoretical guarantee for participatory budgeting constraints (\Cref{thm:prop_degree_ejr}) and approval-based elections with matroid constraints~\cite{mas-pie-sko:group-fairness}.

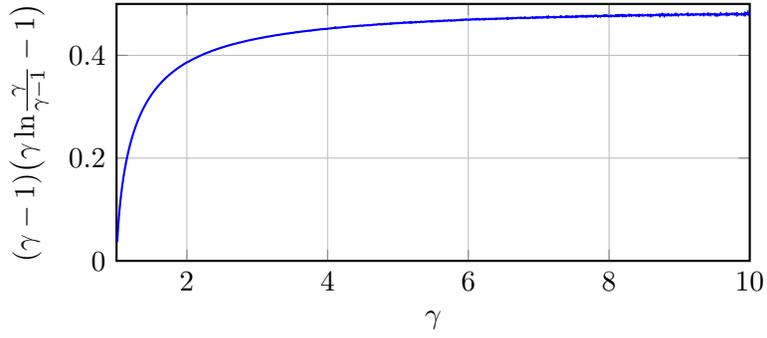
\begin{figure}[t!]
\begin{center}
\begin{tikzpicture}
  \begin{axis}[
    domain = 1.01:10,
    samples = 1000,
    xlabel = {$\gamma$},
    ylabel = {$(\gamma - 1)\bigl(\gamma \ln\!\frac{\gamma}{\gamma - 1} - 1\bigr)$},
    xmin = 1, xmax = 10,
    ymin = 0, ymax = 0.5,
    grid  = both,
    thick,
    height = 5cm,  
    width = 10cm
  ]
    \addplot[blue]
      {(x-1)*(x*ln(x/(x-1)) - 1)};
  \end{axis}
\end{tikzpicture}
\end{center}
\caption{The plot shows the proportionality degree implied by the EJR axiom as a function of the size of the cohesive group of voters~$S$, with $\gamma = n / |S|$. It illustrates the value of the proportionality coefficient (the coefficient next to $\alpha$ in the proportionality degree) which, in the case of participatory budgeting constraints and approval-based elections with matroid constraints, equals to $\nicefrac{1}{2}$. In general case the coefficient converges to~$\nicefrac{1}{2}$ in the limit. The coefficient is approximately $0.39$ and $0.43$ for $\gamma = 2$ and $\gamma = 3$, respectively.}\label{fig:prop_degree_general}
\end{figure}

Further, a stronger implication holds for the weaker variant of EJR.

\begin{theorem}\label{thm:matroid_weak_ejr}
Consider matroid constraints. If a committee $W$ satisfies weak EJR, then it has weak proportionality degree of $d(\alpha) = (\alpha - u_{\max})/2$, where  $u_{\max}$ is the highest utility a voter assigns.
\end{theorem}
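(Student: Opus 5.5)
We argue by induction on $|S|$, mirroring the proof of \Cref{thm:prop_degree_ejr}, but exploiting that strong cohesiveness gives a uniform per-candidate value and that the matroid structure lets us shrink a witness one candidate at a time. Fix a strongly $(\alpha,\beta)$-cohesive group $S$ and write $u = \alpha/\beta$, so that for every admissible $T$ there is a witness $X$ with $|X| = \beta$ and $\min_{i\in S, c\in X} u_i(c) \geq u$. The goal is $\sum_{i\in S} u_i(W) \geq |S|(\beta u - u_{\max})/2$.

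\textbf{Step 1 (cohesiveness of subgroups).} We show that for every $\beta' \leq \beta$, every subgroup $S' \subseteq S$ with $|S'| \geq s_{\beta'} := \lceil |S|\beta'/\beta \rceil$ is strongly $(\beta' u, \beta')$-cohesive. The natural intuition, that $S'$ holds a fraction $|S'|/|S| \geq \beta'/\beta$ of the entitlement, gives the threshold $|S'|\,\beta/|S| \geq \beta'$.

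Take a feasible $T$ with $|T| < \beta'(n-|S'|)/|S'|$. We first check that $|T| < \beta(n-|S|)/|S|$, so that the $(\alpha,\beta)$-cohesiveness of $S$ applies. This holds because $\beta'(n-|S'|)/|S'| \leq \beta(n-|S|)/|S|$ when $|S'|/\beta' \geq |S|/\beta$: the map $x \mapsto (n-x)/x$ is decreasing, so $\beta'(n/|S'| - 1) \leq \beta'(n\beta/(\beta'|S|) - 1) = \beta n/|S| - \beta' \leq \beta n/|S| - \beta$ fails in general. The last comparison is not automatic, and this is the delicate point. If the threshold has to be adjusted (for example $|S'| \geq |S|\beta'/\beta$ combined with $n$-dependent slack), we will re-derive it from the inequality $\beta'(n-|S'|)/|S'| \leq \beta(n-|S|)/|S|$ directly.

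Given the witness $X$ for $T$ with $|X|=\beta$ and $X\cup T\in\feasibles$, downward closure lets us take any $\beta'$-subset of $X$. Its members all have value at least $u$ for every voter of $S' \subseteq S$, which yields strong $(\beta' u,\beta')$-cohesiveness. The matroid assumption is used to guarantee that the per-candidate bound $u$ is preserved when passing to subsets. It is also used, via the exchange property, to ensure that whenever a smaller $T$ is only required to be blocked, a witness of full size $\beta$ can be completed; this is exactly what fails in \Cref{ex:ejr_lack_of_implication}.

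\textbf{Step 2 (peeling).} By weak EJR, any group of size at least $s_{\beta'}$ contains a voter with utility at least $\beta' u$. Repeatedly removing the satisfied voter, we obtain an ordering of $S$ in which the $j$-th removed voter, counted from the end, has utility at least $u\cdot\lfloor j\beta/|S| \rfloor$. Summing over $j = 1,\ldots,|S|$ gives at least $u\sum_j (j\beta/|S| - 1) \geq \tfrac{|S|}{2}\beta u - |S|u$. Since $u \leq u_{\max}$, this is at least $|S|(\alpha - 2u_{\max})/2$. To sharpen this to $(\alpha - u_{\max})/2$, we bound the floor loss more carefully: $\sum_j \lfloor j\beta/|S|\rfloor \geq \tfrac{(\beta-1)|S|}{2}$. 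This standard sum-of-floors estimate gives $\sum_{i\in S} u_i(W) \geq |S|(\alpha - u)/2 \geq |S|(\alpha-u_{\max})/2$.

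\textbf{Main obstacle.} The hard part is Step 1's comparison of the two size bounds for $T$. The expected fix is to use the matroid rank function. For $S'$, blocking requires showing that any $T$ of rank below $\beta'(n-|S'|)/|S'|$ can be augmented by $\beta'$ elements of value at least $u$. We would take the witness for a set $T_0 \subseteq$ a basis extension of $T$ of the largest allowed size for $S$, and use augmentation to transfer $\beta'$ of its elements to $T$.
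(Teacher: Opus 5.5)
Your Step 2 (peeling with thresholds $\lceil |S|\beta'/\beta\rceil$ and the bound $\sum_j \lfloor j\beta/|S|\rfloor \geq (\beta-1)|S|/2$) matches the paper's final computation and is fine. The genuine gap is Step 1, and it is the step everything else depends on.

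You observe it yourself: for $|S'| = |S|\beta'/\beta$ one has $\beta'(n-|S'|)/|S'| = \beta n/|S| - \beta'$. This exceeds $\beta(n-|S|)/|S| = \beta n/|S| - \beta$ by exactly $\beta-\beta'$. So the cohesiveness of $S$ cannot be applied to $T$ itself. Without this, no subgroup is shown to be strongly cohesive, and weak EJR cannot be invoked at all.

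Your proposed repair, taking a witness for some $T_0$ inside a basis extension of $T$, goes in the wrong direction. Augmenting $T$ from $X\cup T_0$ may return elements of $T_0\setminus T$ rather than elements of $X$, so you lose control of how many witness candidates survive. Also, the matroid is not needed to keep the per-candidate bound $u$ on subsets of $X$; that is automatic. And the instance in \Cref{ex:ejr_lack_of_implication} is itself a partition matroid, so the matroid property is not what fails there.

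The missing idea is to \emph{shrink} $T$ rather than enlarge it.
\begin{itemize}
\item Delete $k=\beta-\beta'$ arbitrary elements of $T$ (all of $T$ if $|T|<k$) to get $T'\subseteq T$.
\item Then $|T'| < \beta'(n-|S'|)/|S'| - (\beta-\beta') = \beta' n/|S'| - \beta \leq \beta n/|S| - \beta$. The last step uses $\beta'/|S'|\leq \beta/|S|$, so the slack you found is absorbed exactly.
\item Apply the strong cohesiveness of $S$ to $T'$. This gives $X$ with $|X|=\beta$, all values at least $u$, and $X\cup T'\in\feasibles$.
\item Now use the exchange property to augment $T$ from the independent set $X\cup T'$. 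Since $T'\subseteq T$, every added element lies in $X\setminus T$, and at least $|X\cup T'|-|T|$ of them are added.
\item Together with $X\cap T$ this yields $X'\subseteq X$ with $X'\cup T\in\feasibles$ and $|X'|\geq \beta+|T'|-|T| = \beta-k=\beta'$, using $X\cap T'\subseteq X\cap T$.
\end{itemize}
This is exactly the paper's argument: it deletes $\lceil\beta(|S|-|S'|)/|S|\rceil = \beta-\lfloor\beta|S'|/|S|\rfloor$ elements. It is the only place where the matroid assumption is used.
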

\begin{proof}
The proof follows a similar strategy to the proofs of \Cref{thm:prop_degree_ejr} and \Cref{thm:prop_degree_ejr_matroid}. Let $S \subseteq N$ be an $(\alpha, \beta)$-cohesive group of voters. 
First, we will show that each group $S' \subseteq S$ is $(\alpha', \beta')$-cohesive for
\begin{align*}
 \beta' = \left\lfloor \beta \cdot \frac{|S'|}{|S|}\right\rfloor    \qquad     \alpha' = \frac{ \alpha}{\beta} \cdot \beta' \text{.}
\end{align*}

Consider a feasible set $T \in \feasibles$ with $|T| < \beta' \cdot \frac{n - |S'|}{|S'|}$. From $T$ we remove arbitrary $\left\lceil\beta \cdot \frac{|S| - |S'|}{|S|}\right\rceil$ we obtain a set $T'$ such that:
\begin{align*}
|T'| &= |T| - \left\lceil\beta \cdot \frac{|S| - |S'|}{|S|} \right\rceil \leq |T| - \beta \cdot \frac{|S| - |S'|}{|S|} \\
     &< \beta' \cdot \frac{n - |S'|}{|S'|} - \beta \cdot \frac{|S| - |S'|}{|S|} =  \beta \cdot \frac{n - |S'|}{|S|} - \beta \cdot \frac{|S| - |S'|}{|S|} = \beta \cdot \frac{n - |S|}{|S|} \text{.}
\end{align*}

Since $S$ is cohesive, we know there exists a set $X$ with  $|X| \leq \beta$, $\beta \min_{c \in X} \min_{i \in S} u_i(c) \geq \alpha$, and $X \cup T' \in \feasibles$. Since the feasibility constraints form a matroid we know that there exists a set $X' \subseteq X$ with $|X'| = |X| - \left\lceil\beta \cdot \frac{|S| - |S'|}{|S|}\right\rceil$ such that $X' \cup T \in \feasibles$. Note that:
\begin{align*}
|X'| = |X| - \left\lceil\beta \cdot \frac{|S| - |S'|}{|S|}\right\rceil = \left\lfloor\beta - \beta \cdot \frac{|S| - |S'|}{|S|}\right\rfloor = \left\lfloor\beta \cdot \frac{|S'|}{|S|}\right\rfloor = \beta'\text{.}
\end{align*}
Further, the utility that $S'$ agrees on with respect to $X$ is at least:
\begin{align*}
\beta' \min_{c \in X'} \min_{i \in S} u_i(c) \geq \beta' \cdot \frac{\alpha}{\beta} \geq \left\lfloor \beta \cdot \frac{|S'|}{|S|}\right\rfloor \cdot \frac{\alpha}{\beta} \text{.}
\end{align*}
This shows that $S'$ is indeed $(\alpha', \beta')$-cohesive.

Now, similarly, as in the proof \Cref{thm:prop_degree_ejr} we will apply EJR multiple times, to different subsets of $S$. As the result, we get that:
\begin{align*}
\sum_{i \in S} u_i(W) &\geq \sum_{i = 0}^{|S|} \left\lfloor \beta \cdot \frac{i}{|S|}\right\rfloor \cdot \frac{\alpha}{\beta} \geq \sum_{i = 0}^{\beta-1} i \cdot \frac{|S|}{\beta} \cdot \frac{\alpha}{\beta} 
            = \frac{|S|}{\beta} \cdot \frac{\alpha}{\beta} \cdot \frac{\beta(\beta-1)}{2}\text{.}
\end{align*}
Thus, the average utility withom the group $S$ equals to at leats:
\begin{align*}
\alpha \cdot \frac{(\beta-1)}{2\beta} = \frac{\alpha}{2} - \frac{\alpha}{2\beta} \geq \frac{\alpha}{2} - \frac{u_{\max}}{2}\text{.}
\end{align*}
This completes the proof.
\end{proof}

\section{The PropRank Rule}\label{sec:prop_rank}

In this section, we present our main algorithm. We begin with its basic version and in the subsequent sections we discuss its several enhancements. The core idea is inspired by the Phragm\'en rule, in which voters continuously accumulate credits and collectively purchase approved candidates as soon as they can afford them. In our setting, however, candidates may provide different levels of utility to different voters, which introduces additional complexity. In particular, it may be more profitable for a voter to refrain from purchasing a currently affordable candidate and instead save their credits for a potentially more cost-effective one---that is, a candidate with a lower payment-per-utility ratio. This idea is reminiscent of how the method of equal shares proceeds. In our algorithm, voters attempt to anticipate whether a more preferable candidate is likely to become affordable in the future, and make their purchasing decisions accordingly.

\subsection{General Scheme}\label{sec:proprank_general_scheme}

The rule is defined as a process of purchasing the candidates over time. We assume that the voters start with empty budgets and each of them earns money with a constant speed of one dollar per time unit. Thus, at time $t$ each voter earned the total amount of $t$ dollars and spent at most $t$ dollars in total on buying some candidates. The rule works in rounds, starting with an empty outcome $W = \emptyset$, and  in each round buying one candidate that is available, and adding it to $W$. We say that a candidate $c$ is \myemph{available} given an already selected set of candidates $W$ if $c \notin W$, and $W \cup \{c\} \in \feasibles$. Each round takes place in a certain time, and multiple rounds can happen at the same time. 

Let $p_i(r)$ denote the amount of money that voter $i$ holds in round $r$; let $t(r)$ denote the time of round $r$. Clearly, $p_i(r)$ can be lower than $t(r)$ since the voter could already make some purchases in previous rounds. Now a new element of the algorithm appears, where we assume that the voter agrees to spend only a certain fraction of her money on certain candidates. The algorithm is parameterized by $\kappa \in [0, 1]$, which controls how much each voter is willing to spend on a candidate. The main properties of the algorithm are preserved for all values of $\kappa$, although we will later discuss the most recommended choice. Specifically, in round~$r$, a voter $i$ agrees to spend at most $x_i(c, r)$ dollars on a candidate $c$, where:
\begin{align}\label{eq:caps}
x_i(c, r) = \kappa \cdot \frac{2p_i(r)u_i(c)}{u_{i}(c) + \max(\lambda_i(r), u_i(c))} + (1 - \kappa) \cdot \frac{p_i(r)u_i(c)}{\max(\lambda_i(r), u_i(c))} \text{.} 
\end{align}
where $\lambda_i(r)$ is a scaling factor, discussed below.
Note that $x_i(c, r)$ cannot exceed $p_i(r)$, the total amount of money the voter holds at round $r$.
Whenever there exists a candidate $c$ that is affordable, meaning $\sum_{i \in N} x_i(c, r) \geq \cost(c)$, then it is considered for purchase. Specifically, we compute a minimal value of $\rho$ such that 
\begin{align*}
\sum_{i \in N} \min\big(x_i(c, r), u_i(c) \cdot \rho \big) = \cost(c) \text{,}
\end{align*}
select a candidate considered for purchase that corresponds to this value of $\rho$, and adjust voters budgets, by setting
\begin{align*}
p_i(r) := p_i(r) - \min\big(x_i(c, r), u_i(c) \cdot \rho \big) \text{.}
\end{align*}
The cnadidate is included in $W$, and all candidates $c'$ such that $W \cup \{c'\} \notin \feasibles$ are removed.

\subsection{Scaling Factors}\label{sec:scaling_factors}

For each voter $i \in N$ we keep a value, called \myemph{global spending factor} and denoted as $g_i(r)$. The value of the spending factor is initially set to zero, and then it is updated in each round.
Before we explain how it is computed we need to introduce some additional notation. We say that a candidate $c$ is $(\mu, S, \alpha)$-affordable, $\mu, \alpha \in \reals$, $S \subseteq N$, if all voters from $S$ value $c$ at least as $\alpha$, and have enough money to buy the $\mu$ fraction of $c$. Formally, $c$ is $(\mu, S, \alpha)$-affordable if:
\begin{align*}
\sum_{i \in S} p_i(r) = \mu \cdot \cost(c) \qquad \text{and} \qquad \min_{i \in S} u_i(c) \geq \alpha \text{.}
\end{align*}
The idea of considering partial purchases of the candidates is loosely inspired by the mechanism of the method of equal shares with bounded overspending~\cite{pap-pis-ski-sko-was:bos}.

For each voter $i$ we compute the maximal value of $\mu\alpha$, where the maximum is taken over all $(\mu, S, \alpha)$-affordable and available candidates $c'$ with $i \in S$. Then the scaling factor $\lambda_i(r)$ is set to maximum from the aforementioned maximal value of $\mu\alpha$ and $g_i(r)$. If a candidate $c$ is purchased, then for each voter $i \in N$ its global spending factor $g_i$ is updated to the maximum of $g_i$ and the scaling factor from the current round. In other words, global scaling factors are introduced in order to ensure that the scaling factors for each voter are monotonic in consecutive rounds.

The pseudo-code of the algorithm is provided in \Cref{alg:base_prop_rank_algorithm}. The algorithm uses discretization, increasing time in fixed increments of $\Delta_t$. While the exact calculation---determining the earliest time a candidate can be purchased---is achievable in polynomial time, the discretized version is easier to illustrate the main idea of the algorithm.  
Our implementation employs binary search to determine the earliest time a candidate can be purchased, which makes it run significantly faster than computing the time exactly or than using fixed time increments.

\SetKw{Return}{return}
\SetKw{Input}{Input:}
\begin{algorithm}[!thb]
	\small
	\captionsetup{labelfont={sc,bf}, labelsep=newline}
	\caption{Pseudo-code of the PropRank Rule.}\label{alg:base_prop_rank_algorithm}
	\Input{an election with feasibiity constraints $\feasibles$\\}
	\DontPrintSemicolon
	\SetAlgoNoEnd
	\SetAlgoLined
	$W \gets \emptyset$ \; 
	$\rem \gets C$ {\color{winered}\tcc*[l]{remaining candidates}}
	$\spent_i \gets 0  \text{~for each~} v_i \in V$  {\color{winered}\tcc*[l]{money spent by each voter}}
	$g_i \gets 0  \text{~for each~} v_i \in V$  {\color{winered}\tcc*[l]{global scaling factors}}
	$t \gets 0$   {\color{winered}\tcc*[l]{time}}
	\While{$\rem \neq \emptyset$}{
		$\lambda_i = g_i  \text{~for each~} v_i \in V$ {\color{winered}\tcc*[l]{local scaling factors}}
		{\color{winered}\tcc{Efficiently computing local scaling factors.}}
		\For{$c \in \rem$}{ \label{alg:loc-scalings-begin}
			$\money \gets \sum_{i \in N_c} (t - \spent_i)$ \;
			$\scaling \gets 0$ \;
			\For{$i \in N_c \text{~sorted in the increasing order of~} u_i(c)$}{
				$\mu \gets \money / \cost(c)$ \;
				$\scaling \gets \max(\scaling, \mu \cdot u_i(c))$ \;
				$\lambda_i \gets \max(\lambda_i,\scaling)$\;
				$\money \gets \money - (t - \spent_i)$\; \label{alg:loc-scalings-end}
			}
		} 
		{\color{winered}\tcc{This part is similar to the method of equal shares.}}
		\For{$c \in \rem$}{
			$x_i(c) \gets 2 \cdot (t - \spent_i) \cdot u_i(c) / (u_{i}(c) + \max(\lambda_i, u_i(c)))  \text{~for each~} v_i \in V$ \;
			\eIf{$\sum_{i \in N} x_i(c) < \cost(c)$}{
				$\rho(c) \gets +\infty$ {\color{winered}\tcc*[l]{candidate not affordable by what voters agree to pay}}
			}{
			$p \gets \cost(c)$ \;
			$u \gets \sum_{i \in N_c} u_i(c)$ \;
			\For{$i \in N_c \text{~sorted in the increasing order of~} x_i(c) / u_i(c)$}{
				\If{$x_i(c) \cdot p \cdot u_i(c)/u $}{
					$\rho(c) \gets p / u$ \;
					\textbf{break} {\color{winered}\tcc*[l]{we have found the optimal $\rho(c)$}}}
					$p \gets p - x_i(c)$\;
					$u \gets u - u_i(c)$\;
				}
			}
		}
		\eIf{$\rho(c) = \infty \text{~for all~} c \in \rem$}{ 
			$t \gets t + \Delta_t$ {\color{winered}\tcc*[l]{$\Delta_t$ is a parameter that controls the increase of time}}
		}{
			$c^* \gets \argmin_{c \in \rem} \rho(c)$ \;
			$W \gets W \cup \{c^*\}$ \;
			$\rem \gets \rem \setminus \big(\{c^*\} \cup \{ c\in \rem \colon W \cup \{c\} \notin \feasibles\}\big)$  \;
			{\color{winered}\tcc{Updating global scaling factors}}
			\For{$i \in N$}{
				$g_i \gets \max(g_i, \lambda)$\;
			}
			{\color{winered}\tcc{Updating budgets}}
			\For{$i \in N_{c^*}$}{
				$\spent_i \gets \spent_i + \min(x_i(c^*), \rho(c^*) \cdot u_i(c^*))$ \;
			}
		}
	}
	\Return{W}\;
\end{algorithm}

\begin{lemma}\label{lem:scaling_factor}
The value of the scaling factor will never be greater than $u_{\max}$.
\end{lemma}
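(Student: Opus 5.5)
The plan is an induction over the run of the (exact, continuous-time) algorithm. I will show that the invariant ``$\lambda_i(r) \leq u_{\max}$ and $g_i(r) \leq u_{\max}$ for every voter $i$'' can never be broken. Two observations make this reduce to controlling the single quantity $\mu\alpha$.

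\emph{Global factors.} The global factor $g_i$ is only ever updated to a maximum of itself and previous local scaling factors. So it suffices to show that the quantity $\max \mu\alpha$, taken over $(\mu,S,\alpha)$-affordable available candidates with $i\in S$, never exceeds $u_{\max}$.

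\emph{Monotonicity.} This quantity changes in only two ways. Between rounds, time grows and the $p_i$ grow continuously, so $\mu$ increases continuously. In a round, some $p_i$ decrease and the set of available candidates shrinks. Purchases therefore never increase $\mu\alpha$. Hence, if the bound were ever violated, there would be a first moment $t^*$ at which some triple $(c,S,\alpha)$ with $c$ available attains $\mu\alpha = u_{\max}$ and is about to exceed it. Up to and including $t^*$, all $\lambda_j$ and $g_j$ are still at most $u_{\max}$.

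\emph{Key step: $c$ is affordable at $t^*$.} Since $\alpha \leq \min_{i\in S}u_i(c) \leq u_{\max}$, we get $\mu \geq u_{\max}/\alpha \geq 1$, i.e., $\sum_{i\in S}p_i = \mu\cdot\cost(c) \geq \frac{u_{\max}}{\alpha}\cost(c)$. Now bound the caps from \eqref{eq:caps}. For $i\in S$ we have $u_i(c)\geq\alpha$ and $\max(\lambda_i,u_i(c))\leq u_{\max}$, so the $(1-\kappa)$-term gives $p_i u_i(c)/\max(\lambda_i,u_i(c)) \geq p_i\alpha/u_{\max}$. The $\kappa$-term gives $2p_iu_i(c)/(u_i(c)+\max(\lambda_i,u_i(c))) \geq p_iu_i(c)/\max(\lambda_i,u_i(c)) \geq p_i\alpha/u_{\max}$. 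Hence $x_i(c)\geq p_i\alpha/u_{\max}$ for every $\kappa\in[0,1]$, and
\begin{align*}
\sum_{i\in N}x_i(c) \geq \sum_{i\in S}x_i(c) \geq \frac{\alpha}{u_{\max}}\sum_{i\in S}p_i \geq \cost(c).
\end{align*}
So $c$ is affordable at time $t^*$.

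\emph{Conclusion.} By the rules of the process, time does not advance while some candidate is affordable. Rounds keep being executed at time $t^*$, each one purchasing a candidate, until either $c$ is bought or removed, or the money of voters in $S$ has dropped enough that $c$ is no longer affordable. The second case requires $\sum_{i\in S}p_i < \frac{u_{\max}}{\alpha}\cost(c)$, by the inequality above. In all cases, when time resumes, every triple involving an available candidate satisfies $\mu\alpha \leq u_{\max}$. This contradicts the choice of $t^*$ as a moment where the bound is about to be exceeded.

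The main obstacle is making this ``first violation'' argument rigorous. It needs continuity of $\mu$ in $t$ and the exact-time semantics, in which all affordable purchases happen at the same time stamp before time moves on. In the discretized pseudo-code with steps $\Delta_t$ the bound would only hold up to an $O(\Delta_t)$ error. I would therefore state the proof for the exact version described in \Cref{sec:proprank_general_scheme}, with the rounds occurring at the earliest affordable time, and remark that the discretized version converges to it.
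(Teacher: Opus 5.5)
Your proof is correct. It shares the paper's skeleton: take the first moment the bound would break, and show that the candidate $c$ witnessing a large $\mu\alpha$ must already be affordable. The key estimate, however, is different.

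The paper works at a moment where $\mu\alpha > u_{\max}$. It takes $c$ to be the maximizer of $\mu\alpha$, so that $\lambda_i = \mu\alpha \leq \mu u_i(c)$ for $i \in S$. It then uses $\mu > 1$ (through $\frac{2}{1+\mu} > \frac{1}{\mu}$) to get $x_i(c) \geq p_i/\mu$, hence $\sum_{i\in S} x_i(c) \geq \cost(c)$, which contradicts that $c$ ``could not have been bought just before time $t$.''

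You instead stop at the threshold $\mu\alpha = u_{\max}$ and use the inductive invariant $\lambda_i \leq u_{\max}$. This yields $x_i(c) \geq p_i\alpha/u_{\max}$, which is exactly the range the paper records right after the lemma, and therefore $\sum_{i\in S}x_i(c) \geq \frac{\mu\alpha}{u_{\max}}\cost(c)$. Your route buys three things:
\begin{itemize}
\item You never need the self-referential value of $\lambda_i$ or the choice of a global maximizer.
\item You make the exact-time semantics explicit: all rounds at $t^*$ push every $\mu\alpha$ below $u_{\max}$ before time resumes. The paper's timing argument leaves this informal.
\item You correctly observe that the $\Delta_t$-discretized pseudo-code satisfies the bound only up to an $O(\Delta_t)$ error.
\end{itemize}

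Your version also avoids a slip in the paper's displayed chain. There, $(1-\kappa)\frac{p_i u_i(c)}{\mu u_i(c)}$ is simplified to $(1-\kappa)p_i$ rather than $(1-\kappa)p_i/\mu$. With the correct term, the paper's argument gives $\sum_{i\in S}x_i(c) \geq \cost(c)$ directly.

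One small fix to your write-up: in the final step, state the strict inequality $\mu\alpha < u_{\max}$, which you did derive from non-affordability. Strictness together with continuity is what gives an interval after $t^*$ with no violation; the non-strict $\mu\alpha \leq u_{\max}$ alone would not contradict the choice of $t^*$.
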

\begin{proof}
For the sake of contradiction assume that this is not the case, and consider the first time~$t$ when the scaling factor for some voter becomes greater than $u_{\max}$. This means that at this time there must exist an $(\mu, S, \alpha)$-affordable candidate $c$ such that the new scaling factor becomes $\mu \alpha > u_{\max}$. In particular, $\mu > 1$. Without loss of generality, assume that $c$ is the candidate with the highest value of $\mu \alpha$ at the moment. 
For each voter $i \in S$ we have that:
\begin{align*}
x_i(c, r) &= \kappa \cdot \frac{2p_i(r)u_i(c)}{u_{i}(c) + \max(\lambda_i(r), u_i(c))} + (1 - \kappa) \cdot \frac{p_i(r)u_i(c)}{\max(\lambda_i(r), u_i(c))} \\
		  &> \kappa \cdot \frac{2p_i(r)u_i(c)}{u_{i}(c) + \mu u_i(c)} + (1 - \kappa) \cdot \frac{p_i(r)u_i(c)}{\mu u_i(c)} \\
		  &= \kappa \cdot \frac{2p_i(r)}{1 + \mu} + (1 - \kappa) \cdot p_i(r) > \kappa \cdot \frac{p_i(r)}{\mu} + (1 - \kappa) \cdot p_i(r) \\
		  &= \kappa \cdot \frac{p_i(r)}{\sum_{j \in S}p_j(r) / \cost(c)} + (1 - \kappa) \cdot p_i(r) \text{.}
\end{align*}
Since this candidate could not have been bought just before time $t$ it must be the case that:
\begin{align*}
\cost(c) &> \sum_{i \in S} x_i(c, r) > \kappa \cdot \sum_{i \in S} \frac{p_i(r)}{\sum_{j \in S}p_j(r) / \cost(c) } + (1 - \kappa) \sum_{i \in S} p_i(r) \\
         &= \kappa \cdot \cost(c) + (1 - \kappa) \sum_{i \in S} p_i(r)
\end{align*}
This means that:
\begin{align*}
\cost(c) > \sum_{i \in S} p_i(r) \text{.}
\end{align*}
From that we infer that $\mu < 1$, a contradiction.
\end{proof}

\Cref{lem:scaling_factor} gives us a certain intuitive estimation of how much a voter is willing to pay for a candidate $c$, namely $x_i(c, r)  \in [p_i(r)\cdot \nicefrac{u_i(c)}{u_{\max}}, p_i(r)]$. Interestingly, this means that for approval-based utilities $x_i(c, r) = p_i(r)$, and so we get the following corollary that holds for every $\kappa \in [0, 1]$.

\begin{corollary}
For for approval-based utilities PropRank becomes the Phragm\'en sequential rule.
\end{corollary}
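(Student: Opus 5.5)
The plan is to show directly that, for approval-based utilities, the caps $x_i(c,r)$, the choice of $\rho$, and the payments coincide with those of the sequential Phragm\'en rule. The argument has three steps.

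\textbf{Step 1: the caps equal the whole budget.} Fix an approval-based election, so $u_i(c)\in\{0,1\}$ and $u_{\max}=1$. By \Cref{lem:scaling_factor}, every scaling factor satisfies $\lambda_i(r)\leq u_{\max}=1$. Hence for every candidate $c$ with $u_i(c)=1$ we have $\max(\lambda_i(r),u_i(c))=1$. Substituting into \eqref{eq:caps} gives
\begin{align*}
x_i(c,r)=\kappa\cdot\frac{2p_i(r)}{1+1}+(1-\kappa)\cdot\frac{p_i(r)}{1}=p_i(r)
\end{align*}
for every $\kappa\in[0,1]$. If $u_i(c)=0$, then $x_i(c,r)=0$. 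So each voter is willing to spend her entire current budget on any approved candidate and nothing on the others. The scaling factors therefore play no role: every $\kappa$, and every value of $\lambda_i(r)\le 1$, gives the same behaviour.

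\textbf{Step 2: the purchase step matches Phragm\'en.} A candidate $c$ is affordable in round $r$ exactly when $\sum_{i\in N_c}p_i(r)\geq\cost(c)$, which is Phragm\'en's condition that the supporters of $c$ jointly hold enough money. The payment rule becomes
\begin{align*}
\sum_{i\in N_c}\min(p_i(r),\rho)=\cost(c),
\end{align*}
so $\rho$ is the minimal uniform per-voter payment, capped by each voter's balance. Time increases continuously, so each candidate is bought at the earliest moment its supporters can afford it. At that moment the affordability condition holds with equality, and the minimal $\rho$ is at least the largest balance among $N_c$. Consequently every supporter of $c$ pays her entire balance and her account is reset to zero, exactly as in Phragm\'en.

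\textbf{Step 3: ties, removals and ordering.} If several candidates become affordable at the same moment, the algorithm picks the one with the smallest $\rho$. I would check that this coincides with Phragm\'en's choice, namely the candidate whose purchase equalizes loads earliest; if the two rules break ties differently, the corollary should be read modulo tie-breaking. Removing candidates that become infeasible is identical in both rules. The order of purchases is therefore the same, and so is the output.

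\textbf{Main obstacle.} The delicate point is the end of Step 2: showing that at the earliest affordability time every supporter's cap $\min(p_i,\rho)$ equals $p_i$. This needs the monotone-time argument that a candidate cannot have been affordable strictly earlier, combined with the fact that the caps are exactly $p_i(r)$. Once this is settled, the rest is bookkeeping.
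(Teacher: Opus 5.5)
Your proposal is correct and follows the paper's route: your Step 1 is exactly the paper's argument, namely that \Cref{lem:scaling_factor} bounds every scaling factor by $u_{\max}=1$, which forces $x_i(c,r)=p_i(r)$ on approved candidates for every $\kappa\in[0,1]$. Your Steps 2--3 (at the affordability moment equality holds, so the minimal $\rho$ is the largest supporter balance and every supporter pays her whole balance; simultaneous purchases are tied in Phragm\'en, so smallest-$\rho$ selection is an admissible tie-break) spell out details the paper leaves implicit.
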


\subsection{Proportionality Degree of PropRank}

We now move to discussing strong proportionality guarantees of the PropRank algorithm. We will present our main theoretical results, which essentially shows that when moving to additive utilities we do not loose any of the proportionality degree of the Phragm\'en sequential rule. However, first we will describe the strategy of the proof and provide certain useful lemmas. 

Consider an election $E = (N, C, \cost, \{u_i\}_{i \in N})$ and a committee $W$ returned by PropRank for~$E$. Further, consider an $\alpha$-cohesive group of voters $S$. We will estimate how much on average a voter from $S$ pays for a unit of satisfaction. For that we analyse the following potential function.
Similarly as in the definition of the rule, let $p_i(t)$ denote the amount of money held by voter $i \in S$ at time $t$; further, let $p_{\avg}(t) = \nicefrac{1}{|S|}\sum_{i \in S} p_i(t)$ denote the average amount of money held by the voters from $S$. For a time $t$ we define the potential value $\phi(t)$ as:
\begin{align*}
\phi(t) = \sum_{i \in S} \big(p_i(t) - p_{\avg}(t) \big)^2\text{.} 
\end{align*}
Roughly speaking, $\phi(t)$ is a variance of the balances of the voters in $S$. A similar potential function has been considered in the estimation of the proportionality degree of the Phragm\'{e}n's rule~\cite{skowron:prop-degree}.

First, observe that the potential value remains unchanged when the number of credits of each voter is incremented (earning credits does not change the potential value). Next, we analyse what happens when the voters use their money to pay for a candidate. The voters pay for a given candidate simultaneously, but for the sake of our analysis, we will split this into separate payments, one per each voter from $S$ (we will not consider the payments made by the voters outside of $S$).

Let $\lambda_{\mathrm{loc}}(c, S, t)$ be the local scaling factor resulting from considering candidate $c$ and a group $S$ only. Further, let $\lambda_{\mathrm{glo}}(S, t)$ be the highest value of such a local scaling factors considered at times $t' < t$, and over all candidates $c$ that were available at that time moment $t'$. Finally, let 
\begin{align*}
\lambda(c, S, t) = \max\left(\lambda_{\mathrm{glo}}(S, t), \lambda_{\mathrm{loc}}(c, S, t)\right) \text{.}
\end{align*}
Clearly, for a fixed subset $S$ the values $\lambda(c, S, t)$ are non-decreasing with respect to time $t$. Further, for each $i \in S$ the local scaling factor of $i$ at time $t$ is greater than or equal to $\lambda(c, S, t)$, for every candidate $c$ that is available at time $t$. Finally, we use the additional notation:
\begin{align*}
\gamma(c) = \frac{|S| \alpha(c)}{2\cost(c)} \text{.}
\end{align*}
Let $t(c)$ be the time when the first voter is about to make her first payment for $c$, and let $t^{\rightarrow}(c)$ be the time just after the purchase of $c$.

\begin{lemma}\label{lem:potential_change}
Assume that in $t(c)$ there was a not-yet elected candidate, call it $c_T$. Consider a purchase of a candidate, call it $c$, in which some voters from $S$ are involved. Take a voter $i \in S$ who paid $x$ dollars for $c$. Let $\Delta_{\phi}$ denote the change of the potential function due to such a single purchase of $i$, Then:
\begin{align}\label{eq:potential_change}
\Delta_{\phi} & \leq \lambda(c_T, S, t(c)) \cdot x \cdot \left(\frac{1}{\gamma(c_T)}  - \frac{x}{u_i(c)} \right) \text{.}
\end{align}
\end{lemma}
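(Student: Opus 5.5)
The plan is to compute $\Delta_\phi$ exactly as a function of $x$, and then bound the two resulting terms separately. One term will be controlled by the cap $x_i(c,r)$ in \eqref{eq:caps}; the other by the scaling factor that $c_T$ induces.

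\textbf{Step 1 (exact change).} Let $p_i$ and $p_{\avg}$ denote the balance of $i$ and the average balance in $S$ immediately before $i$'s split payment. After the payment, $p_i$ drops by $x$ and $p_{\avg}$ drops by $x/|S|$. A direct expansion of $\sum_{j\in S}(p_j-p_{\avg})^2$, using $\sum_{j \in S}(p_j-p_{\avg})=0$, gives
\begin{align*}
\Delta_\phi = -2x(p_i-p_{\avg}) + x^2\Bigl(1-\frac{1}{|S|}\Bigr) \leq \bigl(x^2 - 2xp_i\bigr) + 2xp_{\avg}.
\end{align*}
It then suffices to show two inequalities:
\begin{align*}
\text{(a)}\quad x^2 - 2xp_i \leq -\lambda\frac{x^2}{u_i(c)}, \qquad \text{(b)}\quad 2p_{\avg} \leq \frac{\lambda}{\gamma(c_T)},
\end{align*}
where $\lambda=\lambda(c_T,S,t(c))$.

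\textbf{Step 2 (term (b), via $c_T$).} At time $t(c)$ the candidate $c_T$ is still available. Hence the group $S$ witnesses that $c_T$ is $(\mu,S,\alpha(c_T))$-affordable with $\mu = |S|p_{\avg}(t(c))/\cost(c_T)$. Therefore
\begin{align*}
\lambda \geq \lambda_{\mathrm{loc}}(c_T,S,t(c)) \geq \mu\,\alpha(c_T) = 2\gamma(c_T)\,p_{\avg}(t(c)).
\end{align*}
The split payments only decrease balances, so the average $p_{\avg}$ at the moment of $i$'s payment is at most $p_{\avg}(t(c))$, and (b) follows.

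\textbf{Step 3 (term (a), via the cap).} Inequality (a) is equivalent to $x\bigl(u_i(c)+\lambda\bigr)\le 2p_iu_i(c)$. We know $x\le x_i(c,r)$, and by the remark after the definition of $\lambda(c,S,t)$, the voter's own factor satisfies $\lambda_i(r)\ge\lambda$. The function $\ell\mapsto 2p_iu_i(c)/(u_i(c)+\ell)$ is decreasing, so it suffices to show $x_i(c,r)\le 2p_iu_i(c)/\bigl(u_i(c)+\max(\lambda_i(r),u_i(c))\bigr)$ in the two cases below.
\begin{itemize}
\item If $\lambda_i(r)\ge u_i(c)$: the $\kappa$-term of \eqref{eq:caps} equals this bound exactly. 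The $(1-\kappa)$-term is $p_iu_i(c)/\lambda_i(r)$, which is no larger because $u_i(c)+\lambda_i(r)\le 2\lambda_i(r)$. Any convex combination of the two terms is then also at most the bound.
\item If $\lambda_i(r)<u_i(c)$: then $x_i(c,r)=p_i$, and $2p_iu_i(c)/(u_i(c)+\lambda)\ge p_i$ because $\lambda\le\lambda_i(r)<u_i(c)$.
\end{itemize}
Here $p_i$ is the balance before the payment, which is the same quantity that enters \eqref{eq:caps}.

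\textbf{Main obstacle.} Combining (a) and (b) gives \eqref{eq:potential_change}. The delicate point is the bookkeeping in Steps 2 and 3. The cap $x_i(c,r)$ and $\lambda_i(r)$ are evaluated in the round at time $t(c)$, while the potential is updated voter by voter. I would handle this by fixing all quantities at $t(c)$ and using only monotonicity: balances are non-increasing, and $\lambda_{\mathrm{glo}}$ is non-decreasing. I would also need to check that the domination $\lambda_i(r)\ge\lambda(c_T,S,t(c))$ holds, including its global component, which comes from the $g_i$ update in the algorithm.
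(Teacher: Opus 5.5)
Your proposal is correct and follows essentially the same route as the paper: the same exact expansion of $\Delta_\phi$ with the $-x^2/|S|$ term dropped, the same bound $2p_{\avg}\le \lambda(c_T,S,t(c))/\gamma(c_T)$ from the $(\mu,S,\alpha(c_T))$-affordability of $c_T$, and the same bound on $x-2p_i$ via the cap $x_i(c,r)$. Your two-case check in Step 3 just makes explicit the paper's one-line remark that the cap is largest when $\kappa=1$. The domination $\lambda_i(r)\ge\lambda(c_T,S,t(c))$ that you flag is also taken as given in the paper, so it is not a gap relative to the paper's proof.
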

\begin{proof}
 We introduce an additional time $t(i, c)$ which formally equals to~$t(c)$, but corresponds to the moment where the voters who were paying before $i$ for $c$ already made their purchases, and~$i$ is about to make the payment. Additionally, let $t^{\rightarrow}(i, c)$ denote the time just after voter $i$ makes the payment for $c$ (formally, these two are of course equal to $t(c)$).

Let us assess the change of the potential function due to such a single purchase of $i$:
\begin{align*} 
\Delta_{\phi} &= \sum_{j \in S\colon j \neq i} \left(p_j(t(i, c)) - \left(p_{\avg}(t(i, c)) - \frac{x}{|S|}\right) \right)^2 + \left(p_i(t(i, c)) - x - \left(p_{\avg}(t(i, c)) - \frac{x}{|S|}\right) \right)^2 \\
                     &\qquad\qquad- \sum_{j \in S} \big(p_j(t(i, c)) - p_{\avg}(t(i, c)) \big)^2 \\
                     &= \sum_{j \in S} \left(p_j(t(i, c)) - \left(p_{\avg}(t(i, c)) - \frac{x}{|S|}\right) \right)^2 - \sum_{j \in S} \big(p_j(t(i, c)) - p_{\avg}(t(i, c)) \big)^2 \\
                     &\qquad\qquad + x^2 -2x\left(p_i(t(i, c))- p_{\avg}(t(i, c)) + \frac{x}{|S|}\right) \\
                     &= \sum_{j \in S}\frac{x}{|S|} \left(2p_j(t(i, c)) - 2p_{\avg}(t(i, c)) + \frac{x}{|S|} \right) + x^2 -2x\left(p_i(t(i, c))- p_{\avg}(t(i, c)) + \frac{x}{|S|}\right) \\
\end{align*} 
Since $\sum_{j \in S} \left( 2p_j(t(i, c)) - 2p_{\avg}(t(i, c))\right) = 0$, we get:
\begin{align*} 
\Delta_{\phi} &= |S| \cdot \frac{x}{|S|} \cdot \frac{x}{|S|} + x^2 -2x\left(p_i(t(i, c))- p_{\avg}(t(i, c)) + \frac{x}{|S|}\right) \\
                    &= x\left(2p_{\avg}(t(i, c)) - 2p_i(t(i, c))+  x - \frac{x}{|S|}\right) \leq x\left(2p_{\avg}(t(i, c)) - 2p_i(t(i, c))+  x\right)
\end{align*} 

Recall that in $t(c)$ there was a not-yet elected candidate $c_T$ available, and assume that that $c_T$ was $(\mu, S, \alpha)$-affordable. Further, we know that
\begin{align*}
p_{\avg}(t(i, c)) = \frac{1}{|S|} \cdot \sum_{j \in S} p_j(t(i, c)) \leq \frac{1}{|S|} \cdot \sum_{j \in S} p_j(t(c)) = \frac{\mu}{|S|} \cdot \cost(c_T) \leq \frac{\lambda(c_T, S, t(c))}{|S|\alpha} \cdot \cost(c_T)\text{.}
\end{align*}
Additionally, the value of $x$ is the highest when $\kappa = 1$, and so we know that:
\begin{align*}
- 2p_i(t(i, c))+  x &= - 2p_i(t(c))+  x = x \cdot \left( \frac{- 2p_i(t(c))}{x} + 1  \right) \\
                &\leq x \cdot \left( \frac{- 2p_i(t(c)) (u_{i}(c) + \max(\lambda_i(t(c)), u_i(c)))}{2p_i(t(c))u_i(c)} + 1  \right) \\
                &= x \cdot \frac{- 2p_i(t(c)) \max(\lambda_i(t(c)), u_i(c))}{2p_i(t(c))u_i(c)} \\
				&= - \frac{x}{u_i(c)}  \cdot \max(\lambda_i(t(c)), u_i(c)) \leq  - \frac{x}{u_i(c)}  \cdot \lambda(c_T, S, t(c)) \text{.}
\end{align*}
Thus, we can further estimate the change of the potential function:
\begin{align*} 
\Delta_{\phi} & \leq \lambda(c_T, S, t(c)) \cdot x \cdot \left(\frac{2\cost(c_T)}{|S|\alpha}  - \frac{x}{u_i(c)} \right) \text{.}
\end{align*}
This completes the proof of the lemma.
\end{proof}

Intuitively, \Cref{lem:potential_change} says that until $c_T$ is available all purchases for the candidates will be efficient, meaning that the payment per utility will be on average at most equal to $\nicefrac{1}{\gamma(c_T)}$. Below, we make this observation formal. 

\begin{lemma}\label{lem:util_estimation}
Consider a set of selected candidates $W$ and label $W = \{a_1, a_2, \ldots, a_r\}$ so that the indices of the candidates correspond to the order of them being selected. Fix the indices $p\leq r$, and let $c_T(a_j)$ denote the candidate $c_T$ with the highest value of $\gamma(c_T)$ among those that were available just before $a_j$ was bought. Let $x_i(c)$ denote the amount of money paid by voter $i$ for a candidate $c$. Then:
\begin{align*}
\sum_{j = 1}^p \sum_{i \in S}u_i(a_j) \geq \sum_{j = 1}^p \left( \gamma(c_T(a_j)) \cdot \sum_{i \in S}x_i(a_j) \right) \text{.}
\end{align*}
\end{lemma}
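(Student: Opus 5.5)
The plan is a potential argument. Since $\phi(0)=0$, $\phi\ge 0$ at all times, and earning credits leaves $\phi$ unchanged, the total change of $\phi$ caused by the payments of voters from $S$ for $a_1,\dots,a_p$ is nonnegative. The same holds for every prefix $a_1,\dots,a_q$ with $q\le p$. I will bound each individual payment by \Cref{lem:potential_change} applied with $c_T=c_T(a_j)$; this is legitimate because $c_T(a_j)$ was available just before $a_j$ was bought. For a voter $i\in S$ paying $x=x_i(a_j)$, write $u=u_i(a_j)$, $\gamma_j=\gamma(c_T(a_j))$ and $\lambda_j=\lambda(c_T(a_j),S,t(a_j))$. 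The bound \eqref{eq:potential_change} can then be rewritten as
\begin{align*}
\Delta_\phi \le \lambda_j\, x\left(\frac{1}{\gamma_j}-\frac{x}{u}\right) = K_j\cdot \frac{\gamma_j x}{u}\cdot (u-\gamma_j x), \qquad K_j=\frac{\lambda_j}{\gamma_j^2}.
\end{align*}
(Voters with $u_i(a_j)=0$ pay nothing, so they can be ignored.)

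\textbf{Step 1 (removing the weight $\gamma_j x/u$).} Within a single candidate $a_j$, let $d_i = u_i(a_j)-\gamma_j x_i(a_j)$. Whenever $d_i<0$ we have $\gamma_j x/u>1$, and whenever $d_i\ge 0$ we have $\gamma_j x/u\le 1$. In both cases $d_i\cdot \gamma_j x/u \ge d_i$ fails in the wrong direction for a lower bound, but it gives exactly the inequality we need: $d_i\ge d_i\cdot(\gamma_j x/u)$ holds termwise, because multiplying a negative number by a factor larger than $1$ makes it smaller, and multiplying a nonnegative number by a factor at most $1$ does not increase it. Setting $a_j=\sum_{i\in S} d_i$ and $b_j=K_j\sum_{i\in S} d_i\gamma_j x_i(a_j)/u_i(a_j)$, we therefore get $a_j\ge b_j/K_j$, and $b_j$ upper-bounds the potential change caused by $a_j$. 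Consequently every prefix sum $B_q=\sum_{j\le q} b_j$ is at least the total change of $\phi$ up to that point, so $B_q\ge 0$.

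\textbf{Step 2 (removing the weights $K_j$ by Abel summation).} I will argue that $K_j$ is nondecreasing in $j$.
\begin{itemize}
\item The factors $\lambda(c,S,t)$ are nondecreasing in $t$, as noted before \Cref{lem:potential_change}. However, $\lambda_j$ is evaluated at a different candidate $c_T(a_j)$ for each $j$, so this monotonicity needs a little care.
\item The value $\gamma_j$ is nonincreasing, because the set of available candidates only shrinks and $\gamma_j$ is a maximum over that set.
\end{itemize}
Given that $1/K_j$ is nonincreasing, Abel summation yields
\begin{align*}
\sum_{j=1}^p a_j \ \ge\ \sum_{j=1}^p \frac{b_j}{K_j} = \sum_{j=1}^{p-1} B_j\left(\frac{1}{K_j}-\frac{1}{K_{j+1}}\right)+\frac{B_p}{K_p}\ \ge 0.
\end{align*}
This is exactly $\sum_{j\le p}\sum_{i\in S}u_i(a_j)\ge \sum_{j\le p}\gamma(c_T(a_j))\sum_{i\in S}x_i(a_j)$.

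\textbf{Main obstacle.} I expect the difficulty to lie in the monotonicity of $K_j$, specifically of $\lambda_j$, since it is evaluated at the changing candidate $c_T(a_j)$. My first approach is to replace $\lambda_j$ in \Cref{lem:potential_change} by the common quantity $\lambda_{\mathrm{glo}}$ together with the running maximum of the local factors. The proof of \Cref{lem:potential_change} only uses that this factor is at most the voters' actual scaling factors and at least $\mu\alpha$ for $c_T$, and both properties survive taking a running maximum over time. A second, minor point is the degenerate case $\lambda_j=0$, where $K_j=0$ and we cannot divide by it. I plan to handle it by perturbing the factor to a small $\varepsilon>0$, or by noting that the bound then forces the relevant payments to be zero.
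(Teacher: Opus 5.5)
Your proposal is correct and follows the paper's proof. Your Step 1 is the paper's estimate $\frac{z}{\gamma(\gamma+z)}\le\frac{z}{\gamma^2}$ with $z=u/x-\gamma$, which gives $\sum_{i\in S}u_i(a_j)-\gamma_j\sum_{i\in S}x_i(a_j)\ge \Delta\phi\cdot\gamma_j^2/\lambda_j$, and your Abel summation is the rigorous form of the paper's telescoping chain, which uses non-increasing coefficients $\gamma^2/\lambda$ and $\phi\ge 0$. The running-maximum replacement for $\lambda_j$ and your treatment of $\lambda_j=0$ usefully tighten the paper's bare claim that these coefficients are monotone, which is not automatic when $c_T(a_j)$ changes or several purchases share a timestamp. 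Do not reuse $a_j$ for both the candidates and the sums $\sum_{i\in S}d_i$.
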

\begin{proof}
Let us use the notation from \Cref{lem:potential_change}.
We set $\zeta_{i, c} = \frac{u_i(c)}{x} - \gamma(c_T)$, and further rewrite Inequality~\eqref{eq:potential_change}:
\begin{align*}
\Delta_{\phi} &\leq \lambda(c_T, S, t(c)) \cdot x \cdot \left(\frac{1}{\gamma(c_T)}  - \frac{1}{ \gamma(c_T) + \zeta_{i, c}} \right) \\
              &= \lambda(c_T, S, t(c)) \cdot x \cdot \frac{\zeta_{i, c}}{ \gamma(c_T)(\gamma(c_T) + \zeta_{i, c})} \leq x \zeta_{i, c} \cdot \frac{\lambda(c_T, S, t(c))}{\gamma(c_T)^2} \text{.}
\end{align*}

Let us interpret the above estimation of $\Delta_{\phi}$ intuitively, and next we will make the reasoning formal. Until $c_T$ is selected, the voters on average get at least the utility of $\gamma(c_T)$ per dollar. Indeed, if a voter gets the utility lower by $|\zeta_{i, c}|$ per dollar (that is, $\zeta_{i, c}$ is negative), then for each such a dollar spent the potential function decreases by at least $|\zeta_{i, c}|\lambda_i(c_T, S, t(c))/\gamma(c_T)^2$. Conversely, if a voter gets the utility greater by $|\zeta_{i, c}|$ per dollar, then for each dollar spent the potential function will increase by at most $|\zeta_{i, c}| \lambda_i(c_T, S, t) /\gamma(c_T)^2$. Since the potential function is always non-negative, we will infer that on average the voters get at least $\gamma(c_T)$ per dollar. A similar reasoning can be applied to the case when $c_T$ is already selected, but $c_T'$ is not.

According to our notation, for each $c \in W$ we can write:
\begin{align*} 
\sum_{i \in S}u_i(c) &= \sum_{i \in S}\Big(x_i(c)\gamma(c_T) + x_i(c)\zeta_{i, c}\Big) = \gamma(c_T)\sum_{i \in S}x_i(c) + \sum_{i \in S}x_i(c)\zeta_{i, c} \\
&\geq \gamma(c_T)\sum_{i \in S}x_i(c) + \Big(\phi(t^{\rightarrow}(c)) - \phi(t(c))\Big) \cdot \frac{\gamma(c_T)^2}{\lambda(c_T, S, t(c))} \text{.}
\end{align*} 

By summing up over all $c \in \{a_1, \ldots, a_q\}$, and using the fact that coefficients $\frac{\gamma(c_T)^2}{\lambda(c_T, S, t(c))}$ are non-increasing over time, and that the potential function is non-negative, we get:
\begin{align*} 
&\sum_{j = 1}^p \sum_{i \in S}u_i(a_j) = \sum_{i \in S}u_i(a_1) + \sum_{j = 2}^p \sum_{i \in S}u_i(a_j) \\
 &\qquad \qquad \geq \gamma(c_T(a_1))\sum_{i \in S}x_i(a_1) + \Big(\phi(t^{\rightarrow}(a_1)) - \phi(t(a_1))\Big) \cdot \frac{\gamma(c_T(a_1))^2}{\lambda(c_T(a_1), S, t(a_1))} + \sum_{j = 2}^p \sum_{i \in S}u_i(a_j)\\
 &\qquad \qquad \geq \gamma(c_T(a_1))\sum_{i \in S}x_i(a_1) + \Big(\phi(t^{\rightarrow}(a_1)) - \phi(t(a_1))\Big) \cdot \frac{\gamma(c_T(a_2))^2}{\lambda(c_T(a_2), S, t(a_2))} + \sum_{j = 2}^p \sum_{i \in S}u_i(a_j)\\
 &\qquad \qquad \geq \ldots \geq \sum_{j = 1}^p \left( \gamma(c_T(a_j)) \cdot \sum_{i \in S}x_i(a_j) \right) + \Big(\phi(t^{\rightarrow}(a_p)) - \phi(t(a_1))\Big) \cdot \frac{\gamma(c_T(a_p))^2}{\lambda(c_T(a_p), S, t(a_p))}  \\
 &\qquad \qquad \geq \sum_{j = 1}^p \left( \gamma(c_T(a_j)) \cdot \sum_{i \in S}x_i(a_j) \right) \text{.}
\end{align*} 
This completes the proof.
\end{proof}

Now, we are ready to formulate our main theorems.

\begin{theorem}\label{thm:phrag_guarantee}
PropRank has the proportionality degree of $d(\alpha) = (\alpha - u_{\max})/2$ for participatory budgeting constraints.
\end{theorem}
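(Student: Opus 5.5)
We fix an $\alpha$-cohesive group $S$ with witness set $X$, $\cost(X) \leq b\cdot |S|/n$, $\sum_{c\in X}\alpha(c)\geq \alpha$, and $\alpha(c)=\min_{i\in S}u_i(c)$. The idea is to mimic the Phragm\'en proportionality-degree argument: combine \Cref{lem:util_estimation}, which says the voters in $S$ obtain at least $\gamma(c_T)$ units of utility per dollar while some good candidate $c_T$ is still available, with an accounting of how much money $S$ must have spent by the end. We order the candidates of $X$ by decreasing $\gamma(c)=|S|\alpha(c)/(2\cost(c))$, i.e.\ by utility-per-cost density.

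\textbf{Step 1.} We first bound when each candidate of $X$ becomes ``forced''. Under PB constraints, a candidate $c\in X$ is removed only when the spent budget exceeds $b-\cost(c)$. Before that moment it stays available, so at every purchase $\gamma(c_T(a_j))\geq \gamma(c)$ for every still-available $c\in X$. Next we argue that $c$ cannot stay available and unbought for long. By the choice of scaling factors and \Cref{lem:scaling_factor}, a candidate $c$ with $\sum_{i\in S}p_i\geq \cost(c)$ satisfies $x_i(c)\geq p_i\,u_i(c)/\max(\lambda_i,u_i(c))$. We aim to show that the money held by $S$ at any time is at most about $\cost(c)\cdot u_{\max}/\alpha(c)$-ish, or else $c$ would be purchasable and some candidate would be bought. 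A cleaner route is the standard Phragm\'en-style time accounting. At time $t$ the total money earned is $nt$, so when the rule has spent total $b'$, time is at least $b'/n$, and $S$ has earned $|S|t$. The key inequality to obtain is: as long as a candidate $c\in X$ is available, the amount held by $S$ is bounded (by roughly $2\cost(c)$ via the $\kappa$-mix in \eqref{eq:caps} and the fact that $\mu\alpha\leq\lambda$). Hence $S$ has spent at least $|S|t-O(\cost(c))$.

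\textbf{Step 2.} Combine with \Cref{lem:util_estimation}. Splitting the run into phases according to which element of $X$ is the best available $c_T$, we get
\[\sum_{i\in S}u_i(W)\geq \sum_j \gamma(c_T(a_j))\sum_{i\in S}x_i(a_j).\]
The total spent by $S$ during the phase in which the top remaining element is $c$ is lower-bounded from Step 1. Because $\cost(X)\leq b|S|/n$, the time at which the spent budget reaches $b-\cost(c)$ is at least $(b-\cost(c))/n$. Then $S$ has earned at least $|S|(b-\cost(c))/n \geq \cost(X)-\cost(c)\cdot|S|/n$, which has the same shape as the quantity $s(c')$ in the proof of \Cref{thm:prop_degree_ejr}. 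Summing $\gamma\cdot(\text{money})$ over elements of $X$ in density order gives a staircase sum. This sum evaluates to $\tfrac12\sum_{c\in X}\alpha(c)$ minus a boundary term. The factor $\tfrac12$ comes from the $2$ in $\gamma$, exactly the triangle $\int_0^1(1-s)\,ds$. The boundary losses, from a candidate of $X$ that is partially funded or the last element, are charged to $u_{\max}$ via \Cref{lem:scaling_factor}. Dividing by $|S|$ then yields $(\alpha-u_{\max})/2$.

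\textbf{Main obstacle.} The hardest part is Step 1's bound on the money $S$ can hoard while a candidate of $X$ is available, since voters may refuse to pay, with $x_i<p_i$. I would try to show that $\sum_{i\in S}x_i(c)\geq\cost(c)$ whenever $\mu$ for $c$ is at least $1$. This uses $\lambda_i\geq \mu\alpha(c)$ and the computation in the proof of \Cref{lem:scaling_factor} run in reverse. If that holds, the unspent money of $S$ is at most $\cost(c)\leq\cost(X)$, which leaks only an additive $u_{\max}$-scale loss after multiplying by $\gamma$.
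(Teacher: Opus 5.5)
There is a genuine gap, and it is at the step you identify as the main obstacle. You propose to show that $\sum_{i\in S}x_i(c)\geq\cost(c)$ whenever $\sum_{i\in S}p_i\geq\cost(c)$, so that $S$ never holds more than $\cost(c)$ (or ``roughly $2\cost(c)$'') while $c\in X$ is available. This is false, and the mechanism you invoke runs the wrong way. The inequality $\lambda_i\geq\mu\alpha(c)$ is a \emph{lower} bound on $\lambda_i$, so it gives an \emph{upper} bound on $x_i(c)$, not a lower one. Moreover, a voter's scaling factor can be driven up by a different candidate. Take $S=\{i\}$ with $\cost(c)=u_i(c)=1$, and a second candidate $c'$ valued only by $i$, with $\cost(c')=10$ and $u_i(c')=u_{\max}=100$ (budget large). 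For $t\in[1,10)$, voter $i$ holds $t\geq\cost(c)$ dollars, but $\lambda_i\geq 10t$ because of $c'$. Hence $x_i(c)\leq\kappa\cdot\frac{2t}{1+10t}+(1-\kappa)\cdot\frac{1}{10}<\frac15$ for every $\kappa$, so $c$ stays unaffordable while $i$ accumulates up to $10\cost(c)$.

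The correct bound is the one you wrote first and then abandoned. By \Cref{lem:scaling_factor}, $\lambda_i\leq u_{\max}$, so $x_i(c)\geq p_i\alpha(c)/u_{\max}$ for every $\kappa$. Hence, just before $c$ is removed, $S$ holds less than $\cost(c)u_{\max}/\alpha(c)$. This hoard can be much larger than $\cost(c)$. It is harmless only because it is charged at the slope of the same candidate. Let $c$ be the highest-density element of $X$ that gets removed. Beyond the spending level at which this happens (or at which the last higher-density element is bought), the concave lower bound has slope at most $\gamma(c)=|S|\alpha(c)/(2\cost(c))$. Then $\gamma(c)\cdot\cost(c)u_{\max}/\alpha(c)=|S|u_{\max}/2$ exactly. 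Charging ``boundary losses'' generically to $u_{\max}$ misses this pairing.

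Your time accounting also loses too much. You use only that total spending reaches $b-\cost(c)$ no earlier than time $(b-\cost(c))/n$. That leaves $S$ a deficit of $|S|\cost(c)/n$ dollars, which costs a further $|S|^2\alpha(c)/(2n)$ in utility. Even with the correct hoard bound, you would get only about $(\alpha-(1+|S|/n)u_{\max})/2$. The paper avoids this with a case split. Suppose $S$ had spent less than $m=|S|b/n-\cost(c)u_{\max}/\alpha(c)$ when $c$ was removed at time $t'$. If $t'<b/n$, total spending would be below $(n-|S|)b/n+m=b-\cost(c)u_{\max}/\alpha(c)\leq b-\cost(c)$, so $c$ would still fit. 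If $t'\geq b/n$, the hoard bound forces spending at least $m$.

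Finally, drop the triangle heuristic. Here the $\tfrac12$ comes entirely from $\gamma$ in \Cref{lem:util_estimation}. The money side is linear: $S$ spends essentially its whole share, which is at least $\cost(X)$, giving $f(\cost(X))\geq\sum_{c\in X}\cost(c)\gamma(c)\geq|S|\alpha/2$. An $s(c')$-style triangular accounting on top of the $2$ in $\gamma$ would give $\tfrac14$.
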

\begin{proof}
Let $T$ be the set that witnesses that $S$ is $\alpha$-cohesive. Let us relabel $T = \{c_1, \ldots, c_z\}$ so that 
so that:
\begin{align*}
	\frac{\alpha(c_1)}{\cost(c_1)} \geq \frac{\alpha(c_2)}{\cost(c_2)} \geq \ldots \geq \frac{\alpha(c_z)}{\cost(c_z)} \text{.}
\end{align*}

Let $t = \nicefrac{b}{n}$, and let $f_S$ be a function that lower bounds how much utility the voters from $S$ get per given amount of money. From \Cref{lem:util_estimation} we get that:
\begin{enumerate}
	\item until $c_1$ is selected the function has the slope of at least $|S|\alpha(c_1) / 2\cost(c_1)$. Further, the function has this slope for the segment of arguments of size greater than or equal to $\cost(c_1)$.
	\item Then, until $c_2$ is selected $f_S$ has the slope of at least $|S|\alpha(c_2) / 2\cost(c_2)$. The function has this slope for the segment of arguments of size greater than or equal to $\cost(c_2)$. Etc.
\end{enumerate}
Clearly, $f_S$ is concave.
We need to only make sure that the candidates from $T$ have not been removed. Let $c_\mathrm{out}$ denote the candidate from $T$ with the highest ratio of $\frac{\alpha(c_\mathrm{out})}{\cost(c_\mathrm{out})}$ that had been removed. Let $t'$ denote the time of its removal. We will show that this must have happened when the voters spent at least $m_t$ dollars, where 
\begin{align*}
m_t = t |S| - \frac{\cost(c_\mathrm{out})u_{\max}}{\alpha(c_\mathrm{out})} \text{.}
\end{align*}
Let us assume, for the sake of contradiction, that this is not the case. First, we show that $t' \geq t$.
If this was not the case, then the voters at $t'$ in total had spent at most:
\begin{align*}
t \cdot n  - \frac{\cost(c_\mathrm{out})u_{\max}}{\alpha(c_\mathrm{out})} = b - \frac{\cost(c_\mathrm{out})u_{\max}}{\alpha(c_\mathrm{out})} \leq b - \cost(c_\mathrm{out})\text{.}
\end{align*}
Thus, $c_\mathrm{out}$ would not have been removed. Now, observe that just before $t'$ we had that:
\begin{align*}
\cost(c_\mathrm{out}) > \sum_{i \in S} x_i(c_\mathrm{out}, t) \geq \sum_{i \in S} \frac{p_i(r)\alpha(c_\mathrm{out})}{u_{\max}} \text{.}
\end{align*}
Thus, at time $t'$ the voters from $S$ have used at least the following amount of money:
\begin{align*}
 t |S| - \sum_{i \in S}p_i(t) \geq  t |S| -  \frac{\cost(c_\mathrm{out}) u_{\max}}{\alpha(c_\mathrm{out})} = m_t \text{.}
\end{align*}
This shows a contradiction, and so we get that $c_\mathrm{out}$ was removed after voters from $S$ spent at least $m_t$ dollars. Now, let $m_t'$ be the time when the last candidate with the index lower than $c_\mathrm{out}$ was selected or when $c_\mathrm{out}$ was removed, whichever is greater.

Now, let us estimate the value of $f_S$ at $m_t'$. Note that for arguments higher than $m_t'$ the slope of the function is at most $|S|\alpha(c_\mathrm{out}) / 2\cost(c_\mathrm{out})$.
Thus:
\begin{align*}
f(m_t') &\geq f(t |S|) - \frac{|S| \alpha(c_\mathrm{out})}{2\cost(c_\mathrm{out})} \cdot \frac{u_{\max} \cdot \cost(c_\mathrm{out})}{\alpha(c_\mathrm{out})} \geq  f(t |S|) - |S| \cdot \frac{u_{\max}}{2} \\
		&\geq f\left(\sum_{c \in T}\cost(c)\right) - |S| \cdot \frac{u_{\max}}{2} \geq \sum_{c \in T}\cost(c) \cdot \frac{|S|\alpha(c)}{2\cost(c)} - |S| \cdot\frac{u_{\max}}{2} = \frac{1}{2} \cdot |S| \cdot \left(\alpha - u_{\max}\right) \text{.}
\end{align*}
This completes the proof.
\end{proof}

Now, let us move to the case of general feasibility constraints.

\begin{theorem}\label{thm:phrag_guarantee_constraints}
Consider an outcome $W$ returned by the PropRank rule. For each $\alpha$-cohesive group of voters $S$, the average utility from $W$ within the group $S$ equals at least:
\begin{align*}
\frac{1}{|S|} \sum_{i \in S} u_i(W) \geq \alpha \cdot \frac{1 - \gamma}{2}  - \frac{ u_{\max}}{2} \text{.}
\end{align*}
where $\gamma = \nicefrac{|S|}{n}$, and $u_{\max}$ is the highest utility a voter assigns to a candidate.
\end{theorem}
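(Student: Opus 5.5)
We follow the template of the proof of \Cref{thm:phrag_guarantee}: the key tool is again \Cref{lem:util_estimation}, and the only new difficulty is that with general feasibility constraints a candidate of the witness set can be removed by purchases made by voters outside $S$, not merely because a budget runs out. Fix an $(\alpha,\beta)$-cohesive group $S$ with $\gamma = |S|/n$. Since we aim only for the coefficient $(1-\gamma)/2$ and not $1/2$, we will not track the witness set chosen for the original time horizon. Instead, we consider the moment $\tau$ at which $|W| \geq \beta\cdot\frac{n-|S|}{|S|}$ first holds, or the end of the algorithm if this never happens. Before $\tau$, the current outcome $W_\tau$ is a feasible set $T$ with $|T| < \beta\frac{n-|S|}{|S|}$. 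Cohesiveness therefore gives a set $X$ with $|X|\leq\beta$ and $\sum_{c\in X}\alpha(c)\geq\alpha$ such that $X\cup W_\tau\in\feasibles$. In particular, every candidate of $X\setminus W_\tau$ is still available at every earlier moment, since feasibility is downward closed.

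The first step is to estimate the utility obtained before $\tau$. As long as some $c\in X$ is available, \Cref{lem:util_estimation} ensures that each dollar spent by $S$ yields at least $\gamma(c_T)=|S|\alpha(c)/(2\cost(c))$ utility to $S$, where $c_T$ is the best available candidate of $X$. We again build the concave lower-bound function $f_S$ of utility versus money spent by $S$. Ordering $X$ by $\alpha(c)/\cost(c)$ as before, we get that $f_S$ has the slope of $c_j$ on a segment of length at least $\cost(c_j)$, unless $c_j$ was already bought. If $c_j$ was bought, $S$ gets at least $\alpha(c_j)$ per voter directly from it, which is even better. Under unit costs (the matroid-like reading of $|T|$), spending $\beta$ units per voter in total, that is $|S|\cdot|X|$ dollars, therefore gives $S$ at least $\tfrac12|S|\sum_{c\in X}\alpha(c)-\tfrac12|S|u_{\max}$. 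The $u_{\max}$ loss comes, exactly as in \Cref{thm:phrag_guarantee}, from the fact that the last candidate may not be affordable because the willingness-to-pay is scaled by $u_i(c)/\lambda_i \geq u_i(c)/u_{\max}$ (\Cref{lem:scaling_factor}).

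The second step, which we expect to be the main obstacle, is to show that $S$ has indeed had time to spend roughly that much money before $\tau$, or to cap the loss otherwise. Each selected candidate (unit cost) consumes one dollar of total money, and money is earned at rate $n$. Consequently, by the time $|W|$ reaches $\beta\frac{n-|S|}{|S|}$, at most that many dollars have been spent, while the time elapsed is at least $|W|/n$ minus the unspent money. We would argue that the money left unspent by $S$ is bounded by about $\cost(c)u_{\max}/\alpha(c)$, since otherwise a candidate of $X$ would be affordable by the argument in \Cref{thm:phrag_guarantee}. Hence $S$ has spent at least $|S|\cdot t_\tau - O(u_{\max}/\alpha(c))$. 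The point is to relate $t_\tau$ to $\beta$. Purchases by the other voters use at most $(n-|S|)t_\tau$ dollars, so $S$ can lose its candidates only after the whole electorate has funded about $\beta(n-|S|)/|S|$ units. This gives $t_\tau\gtrsim \beta(1-\gamma)/|S|\cdot(n/n)$, i.e.\ $S$ has spent about $\beta(1-\gamma)$ per voter relative to the budget needed for $X$. Feeding this fraction $(1-\gamma)$ of the budget into the concave $f_S$ yields average utility at least $(1-\gamma)\alpha/2 - u_{\max}/2$, using concavity to say that a $(1-\gamma)$ prefix of the spending already collects a $(1-\gamma)$ share of the bound. The delicate part is this accounting of time versus $|W|$: it requires a careful bound on idle money of voters outside $S$, and we would first try to assume that every purchase happens as soon as it is affordable, which keeps the total money idle in the system bounded by the affordability threshold.
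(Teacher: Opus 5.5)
Your plan is essentially the paper's proof. The paper fixes the horizon $t<\beta(n-|S|)/(|S|n)$ and notes $|T|\le tn$ for the set $T$ bought by then. It applies cohesiveness to $T$, so that $X\setminus T$ stays available, and bounds the money $S$ still holds at $t$ by $u_{\max}/\alpha(c_z)$ via \Cref{lem:scaling_factor}. It then combines \Cref{lem:util_estimation} with concavity of $f$ to get the factor $(1-\gamma)$, which is your threshold-crossing argument read in the opposite direction.

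The step you call the main obstacle is not one. With unit costs, $|W|$ is at most the total money spent, which is at most $tn$, regardless of idle money; idle money only delays the threshold. So no immediate-purchase assumption is needed.

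You should fix the bookkeeping, though. The quantity $\gamma(c)=|S|\alpha(c)/(2\cost(c))$ is group utility per group dollar, so covering $X$ requires $|X|\le\beta$ dollars in total, not $|S|\cdot|X|$. Likewise, $S$ has spent about $\beta(1-\gamma)$ dollars in total, not per voter. Your two slips cancel, which is why the fraction $(1-\gamma)$ still comes out right.
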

\begin{proof}
Consider an $(\alpha, \beta)$-cohesive group of voters $S\subseteq N$, and the purchases made until time
\begin{align*}
	t < \frac{\beta}{|S| n} \cdot (n - |S|) \text{.}
\end{align*}
Let $T$ denote the set of candidates that have been purchased until $t$; clearly $|T| \leq tn$. Since $S$ is $(\alpha, \beta)$-cohesive we know that there exists a set $X$ such that $|X| \leq \beta$, $\sum_{c \in X} \min_{i \in S}u_i(c) \geq \alpha$, and $X \cup T \in \feasibles$. Let us rename the candidates in $X$ so that $X = \{c_1, c_2, \ldots, c_{|X|}\}$ and 
\begin{align*}
	\min_{i \in S}u_i(c_1) \geq \min_{i \in S}u_i(c_2) \geq \ldots \geq \min_{i \in S}u_i(c_{|X|}) \text{.}
\end{align*}
Let $c_z$ be the element of $X \setminus T$ with the lowest index (that is, with the highest value of $\min_{i \in S}u_i(c_z)$). Using \Cref{lem:util_estimation} we define the function $f$ that lower bounds how much the utility that the voters from $S$ get per given amount of money. 
\begin{align*}
 f(|X|) = \sum_{i = 1}^z \frac{|S| \cdot \alpha(c_i)}{2} + (m - z) \cdot \frac{|S| \cdot \alpha(c_z)}{2} \geq |S| \cdot \frac{\alpha}{2} \text{.}
\end{align*}
Similarly, as in the proof of \Cref{thm:phrag_guarantee} we infer that at $t$ the voters have at most $\nicefrac{u_{\max}}{\alpha(c_z)}$ money left. Thus, they spent at least:
\begin{align*}
m_t = t |S| - \frac{u_{\max}}{\alpha(c_z)} \text{.}
\end{align*}
Now, consider the moment when $c_z$ was removed or when the last candidate with the index lower than $z$ was purchased, whichever is later. Assume that the voters form $S$ spent $m_t'$ of their virtual budgets by then. Clearly, $m_t' \geq m_t$, and for arguments greater than $m_t'$ the slope of the function $f$ is at most $\nicefrac{|S|\alpha(c_z)}{2}$.
Since $f$ is concave, we get that the total utility garnered by the voters for spending $m_t'$ is at least:
\begin{align*}
f(t|S|) - \frac{|S|\alpha(c_z)}{2} \cdot \frac{u_{\max}}{\alpha(c_z)} \geq f(|X|) \cdot \frac{t|S|}{|X|} - \frac{|S|}{2} \cdot u_{\max} \geq |S| \frac{\alpha}{2} \cdot \frac{n - |S|}{n}  - \frac{|S|}{2} \cdot u_{\max}
\end{align*}
Thus, the average utility can be lower-bounded by
\begin{align*}
\frac{\alpha}{2} \cdot \frac{n - |S|}{n}  - \frac{ u_{\max}}{2}  \geq \frac{\alpha}{2} \cdot \frac{n - \gamma n}{n}  - \frac{ u_{\max}}{2} = \alpha \cdot \frac{1 - \gamma}{2}  - \frac{ u_{\max}}{2} \text{.}
\end{align*}
This completes the proof.
\end{proof}

Finally, \Cref{thm:phrag_guarantee_matroid} below generalizes the analogous result of \citet{mas-pie-sko:group-fairness} stated for the Phragm\'en sequential rule.

\begin{theorem}\label{thm:phrag_guarantee_matroid}
	PropRank has the weak proportionality degree of $d(\alpha) = (\alpha - u_{\max})/2$ for matroid constraints.
\end{theorem}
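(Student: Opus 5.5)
The plan follows the template of \Cref{thm:phrag_guarantee} and \Cref{thm:phrag_guarantee_constraints}, with the matroid exchange argument of \Cref{thm:matroid_weak_ejr} in place of the budget argument. Fix a strongly $(\alpha,\beta)$-cohesive group $S$ and put $a = \nicefrac{\alpha}{\beta}$. Then every witnessing set $X$ has $|X| = \beta$, and every $c\in X$ satisfies $\min_{i\in S} u_i(c) \geq a$. Because all witnessing candidates have uniform cost $1$ and uniform agreed utility at least $a$, every candidate of $X$ that is still available yields $\gamma(c) \geq \nicefrac{|S| a}{2}$. By \Cref{lem:util_estimation}, as long as some such candidate is available, the voters of $S$ receive at least $\nicefrac{|S|a}{2}$ utility per dollar. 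What remains is to bound how long some witnessing candidate stays available, measured in money spent by $S$, and then to integrate.

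\textbf{Availability step.} I claim the following. Let $T$ be the outcome selected by some time $t$, and let $|S|\,t' = \beta'$ be the money earned by $S$ at an earlier time $t'$. If $S$ has spent at most $\beta - \beta'$ fewer dollars than it could have, then some candidate of a witnessing set is still available. The scheme is as follows.
\begin{enumerate}
\item Earned money is $t|S|$. When $S$ has earned roughly $\beta$ dollars, the outcome $T$ has size at most $tn \approx \beta\nicefrac{n}{|S|}$.
\item Instead of applying cohesiveness to $T$ directly, which only covers $|T| < \beta\nicefrac{(n-|S|)}{|S|}$, I remove from $T$ the candidates that $S$ itself paid for. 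At most about $t|S|$ units were paid by $S$, and each candidate costs $1$. The candidates paid by others number fewer than $t(n-|S|)$, so after removal the remainder $T'$ satisfies the cohesiveness bound.
\item Strong cohesiveness then gives $X$ with $X\cup T'\in\feasibles$.
\item The matroid exchange property then yields $X'\subseteq X$ with $|X'| \geq \beta - |T\setminus T'|$ and $X'\cup T\in\feasibles$. Hence some candidate of $X$ remains available whenever $S$ has bought fewer than $\beta$ candidates' worth.
\end{enumerate}
This mirrors the ``remove $\lceil\beta(|S|-|S'|)/|S|\rceil$ elements'' step of \Cref{thm:matroid_weak_ejr}. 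A witnessing candidate therefore remains available until $S$'s spending reaches about $\beta$, minus leftover money.

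\textbf{Leftover money and integration.} As in \Cref{thm:phrag_guarantee}, just before the last available witness $c$ is removed or bought, it was not affordable. Using $x_i(c,r) \geq p_i(r)\nicefrac{u_i(c)}{u_{\max}}$ (from \Cref{lem:scaling_factor}), the unspent money of $S$ is below $\nicefrac{u_{\max}}{a}$. So $S$ spends at least $\beta - \nicefrac{u_{\max}}{a}$ dollars at rate at least $\nicefrac{|S|a}{2}$ per dollar. This gives total utility at least $\nicefrac{|S|}{2}\,(\alpha - u_{\max})$, which is the weak proportionality degree $d(\alpha) = (\alpha - u_{\max})/2$.

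\textbf{Main obstacle.} The delicate step is the availability step. The guarantee must be phrased in terms of money spent by $S$ rather than wall-clock time, so that the $\nicefrac{|S|}{n}$ loss of \Cref{thm:phrag_guarantee_constraints} disappears. This requires charging $T$ so that candidates bought partly by $S$ are counted against $S$'s spending, and it requires handling fractional payments across groups. My first attempt would be to count the candidates of $T$ weighted by the fraction paid by outsiders: outsiders can fund at most $t(n-|S|)$ units in total, and the matroid rank argument is applied to the subset of $T$ paid mostly by outsiders.
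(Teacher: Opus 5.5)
Your overall architecture matches the paper's. Strong cohesiveness gives the per-candidate floor $a=\alpha/\beta$. \Cref{lem:util_estimation} gives rate $|S|a/2$ while some witness is available. A matroid exchange keeps a witness available until $S$ has spent about $\beta$, and non-affordability bounds the leftover by $u_{\max}/a$.

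The gap is in the availability step, where you flagged it, and the proposed fix does not close it. Payments in PropRank are shared, so the set of candidates ``paid for by $S$'' has no size bound in terms of $S$'s spending. Either reading of your step~2 fails:
\begin{itemize}
\item If you remove every candidate $S$ contributed to, then $S$ may pay a tiny share of many candidates, so $|T\setminus T'|$ far exceeds $\beta$ and step~4 gives nothing.
\item If you keep co-funded candidates in $T'$, take every candidate co-funded with $S$ paying a share $|S|/n$. Then $T'=T$ has about $\beta n/|S|$ elements, and cohesiveness does not apply.
\item Thresholding at ``mostly paid by outsiders'' loses a factor of two on both sides. Up to $2t(n-|S|)$ candidates can be at least half outsider-funded, and up to $2\beta$ can be more than half $S$-funded.
\end{itemize}
Separately, step~4 does not give an \emph{available} witness. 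Your bound $|X'|\ge\beta-|T\setminus T'|$ allows $X'\subseteq T$, which can happen as soon as $X$ meets $T'$, and then the rate argument stops. There is also an off-by-one issue. You may remove at most $\beta-1$ candidates but still need the strict bound $|T'|<\beta(n-|S|)/|S|$, and ``when $S$ has earned roughly $\beta$ dollars'' does not deliver that.

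The paper needs no attribution at all: only $|W|$ matters. It takes $t=\beta/|S|+(\Delta-1)/n$, where $\Delta\in[0,1]$ is the largest value such that $S$ still holds at least $\Delta$ unspent dollars at $t$. It lets $W$ be the candidates bought before the last purchase at $t$. All voters together have spent at most $tn-\Delta$, so $|W|<\beta n/|S|-1$. Removing the $\beta-1$ candidates of $W$ with the largest $\min_{i\in S}u_i(c)$ leaves $|T|<\beta(n-|S|)/|S|$, so strong cohesiveness applies to $T$.

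Which candidates to remove is the second idea you are missing. If $T\cap X\neq\emptyset$, then $W$ already contains $\beta$ candidates each worth at least $\alpha/\beta$ to every voter of $S$, and the bound holds outright. Otherwise $|T\cup X|=|W|+1$, and one augmentation step gives $c\in X\setminus W$ with $W\cup\{c\}\in\feasibles$. By downward closure, $c$ was available before every purchase up to $t$.

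The spending of $S$ by time $t$ is then bounded in two cases. When $S$ holds exactly $\Delta$, it is at least $\beta-1\ge\beta-u_{\max}\beta/\alpha$, using $\alpha/\beta\le u_{\max}$. When $\Delta=1$, your non-affordability bound gives at least $\beta-u_{\max}\beta/\alpha$. Your integration step then finishes the proof.
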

\begin{proof}
	Consider an $(\alpha, \beta)$-cohesive group of voters $S$.
	We first define the time $t$ as follows:
	\begin{align*}
	t = \frac{\beta}{|S|} + \frac{\Delta - 1}{n} \text{,}
	\end{align*}
	where $\Delta$ is the largest non-negative value in $[0, 1]$ such that at $t$ the voters from $S$ have at least $\Delta$ unspent dollars. Such a value is well defined since in particular $\Delta = 0$ satisfies the premise. There are two possibilities: (1) the voters from $S$ had exactly $\Delta$ unspent dollars at $t$ (it is possible that at $t$ there was made a purchase such that before the purchase the voters from $S$ had more than $\Delta$ unspent dollars, and after the purchase, they had less than $\Delta$ unspent dollars; in such a case, we simply consider the purchases---possibly fractional---made until they had exactly $\Delta$ unspent dollars), or (2) $\Delta = 1$, and the voters from $S$ have in total strictly more than $1$ dollar left.  

	Consider only the purchases made by the algorithm until time $t$, just before the last candidate was bought, that is when the voters still had more than $\Delta$ dollars. 
	Let $W$ denote the set of candidates bought until this time moment. From $W$ we remove $\beta-1$ candidates with the highest values of $\min_{i \in S}u_i(c)$, and let $T$ denote the set of candidates after such removals. We have that:
	\begin{align*}
	|T| &= |W| - (\beta - 1) < (t \cdot n - \Delta) - \beta + 1 = \frac{n \cdot \beta}{|S|} + \Delta - 1 - \Delta - \beta + 1 \\
	    &= \frac{n \cdot \beta}{|S|} - \beta  = \beta \cdot \frac{n - |S|}{|S|} \text{.}
	\end{align*}

	Thus, since $S$ is $(\alpha, \beta)$-cohesive, we know that there exists a set $X$ such that $|X| = \beta$, $|X| \cdot \min_{i \in S, c\in X}u_i(c) \geq \alpha$, and $X \cup T \in \feasibles$. Thus, $\min_{i \in S, c\in X}u_i(c) \geq \nicefrac{\alpha}{\beta}$.

	If $|T \cap X| \geq 1$, then we know that in $W$ there are at least $1 + (\beta - 1)$ candidates $c$ with $\min_{i \in S} u_i(c) \geq \nicefrac{\alpha}{\beta}$. This already gives our thesis. Otherwise, that is if $|T \cap X| = 0$ we know that:
	\begin{align*}
	|T \cup X| \geq |W| - (\beta - 1) + \beta = |W| + 1 > |W| \text{.}
	\end{align*}
	Given that $\feasibles$ form a matroid, we get that there is $c \in X$ such that $W \cup \{c\} \in \feasibles$. This means that before each purchase made until $t$, there always existed a not-removed candidate $c$ with $\min_{i \in S}u_i(c) \geq \nicefrac{\alpha}{\beta}$. Now, we can once again use \Cref{lem:util_estimation}, and get that the average utility per dollar for the purchases made by the voters from $S$ until time $t$ is at least $\nicefrac{|S|\alpha}{2\beta}$. 
	
	Let $m_t$ denote the amount of money that the voters from $S$ have used until time $t$. In case (1) we have that:
	\begin{align*}
		m_t = t |S| - \Delta = \beta + \frac{(\Delta - 1)|S|}{n} - \Delta \geq \beta - 1\text{.}
	\end{align*}
	In case (2), when $\Delta = 1$ we have that $t = \frac{\beta}{|S|}$, and 
	\begin{align*}
		\sum_{i \in S}p_i(t) \leq \sum_{i \in S}x_i(c, t)\cdot \frac{u_{\max}}{u_i(c)} \leq  \sum_{i \in S}x_i(c, t)\cdot \frac{u_{\max}\beta}{\alpha} \leq  \frac{u_{\max}\beta}{\alpha}
	\end{align*}
	so:
	\begin{align*}
		m_t = t |S| - \sum_{i \in S}p_i(t) \geq \beta - \frac{u_{\max}\cdot \beta}{\alpha} \text{.}
	\end{align*}
	In either case we get that 
	\begin{align*}
		m_t \geq \beta - \frac{u_{\max}\cdot \beta}{\alpha} \text{.}
	\end{align*}

	Consequently, we infer that the average utility gained by the voters from $S$ equals at least:
	\begin{align*}
	\frac{1}{|S|} \cdot m_t \cdot \frac{|S|\alpha}{2\beta} & \geq (\beta - \frac{u_{\max}\cdot \beta}{\alpha}) \cdot \frac{\alpha}{2\beta} \geq \frac{\alpha}{2} - \frac{u_{\max}}{2} \text{.}
	\end{align*}
	This completes the proof.
\end{proof}

\subsection{Proportional Justified Representation of PropRank}\label{pjr_degree}

A weaker notion of proportionality that is often considered is called Proportional Justified Representation (PJR). The axiom uses the same notion of cohesiveness, but the condition on the required voters' satisfaction is weaker. Below we formulate the quantitative variant of the axiom, which by analogy we call PJR degree.

\begin{definition}[Proportional Justified Representation Degree]\label{def:pjr_generalization}
	Consider an election $E = (N, C, \cost, \{u_i\}_{i \in N})$, and a subset of candidates $W \in \feasibles$. We say that $W$ has the \myemph{proportional justified representation degree (PJR degree)} of $d\colon \reals \to \reals$ if for each $\alpha \in \reals$, and $\alpha$-cohesive group of voters $S$ it holds that
	\begin{align*}
	\sum_{c \in W} \max_{i \in S} u_i(c) \geq d\left(\alpha\right) \text{.}
	\end{align*}
	A selection rule has the PJR degree of $d$ if it always returns outcomes with the PJR degree of $d$.  \hfill $\lrcorner$
\end{definition}

\begin{theorem}\label{thm:phrag_guarantee_pjr}
	PropRank with the parameter $\kappa$ has the PJR degree of $d(\alpha) = (\alpha - u_{\max})/(\kappa + 1)$ for participatory budgeting constraints.
\end{theorem}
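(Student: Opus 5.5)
We follow the scheme of the proof of \Cref{thm:phrag_guarantee}, with the potential-function argument replaced by a direct bound on the price per unit of the quantity $\sum_{c \in W}\max_{i \in S}u_i(c)$. That quantity is what PJR degree measures. The advantage is that we need no averaging over voters. It suffices to show that, as long as a candidate $c_T$ from the witness set is available, every purchase involving $S$ costs the voters of $S$ at most $(\kappa+1)\cost(c_T)/(|S|\alpha(c_T))$ dollars per unit of $\max_{i\in S}u_i(c)$.

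\textbf{Step 1 (per-purchase efficiency).} Fix a purchased candidate $c$ and a still-available $c_T\in X$. Here $X$ is the set witnessing cohesiveness, and $\alpha(c_T)=\min_{i\in S}u_i(c_T)$. By the definition of local scaling factors, each $i\in S$ has $\lambda_i \geq \mu\alpha(c_T)$, where $\mu=\sum_{j\in S}p_j/\cost(c_T)$. From \eqref{eq:caps} we get
\[
x_i(c) \leq \kappa\frac{2p_iu_i(c)}{u_i(c)+\lambda_i}+(1-\kappa)\frac{p_iu_i(c)}{\lambda_i}.
\]
Using $u_i(c)+\lambda_i\geq 2\lambda_i$ only loses, so instead we bound the two terms separately. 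The second term is at most $p_i u_i(c)/\lambda_i$. The first term is at most $2p_iu_i(c)/(u_i(c)+\lambda_i)\leq 2p_iu_i(c)/\lambda_i$, but also at most $2p_i$. Summing over $i\in S$ with $u_i(c)\le u^*:=\max_{i\in S}u_i(c)$, the total paid by $S$ is at most
\[
(1+\kappa)\cdot\frac{u^*\sum_{i\in S}p_i}{\mu\alpha(c_T)}=(1+\kappa)\frac{u^*\cost(c_T)}{\alpha(c_T)}.
\]
The delicate point is the factor $1+\kappa$ rather than $2$ in the $\kappa$-part. Here we would use $\frac{2u}{u+\lambda}\le \frac{2u}{\lambda}$ only when $u\le\lambda$; otherwise $\max(\lambda,u)=u$ and the payment is at most $p_i$. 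So the bound becomes $\max\big(1,\tfrac{2u}{u+\lambda}\cdot\tfrac{\lambda}{u}\big)\le 1+\kappa$ after mixing with weight $\kappa$ against the $(1-\kappa)$ term. We expect this inequality, $\kappa\frac{2\lambda}{u+\lambda}+(1-\kappa)\le 1+\kappa$ for $u\le\lambda$, which is trivially true, to be the crux. Hence the money of $S$ buys $\max$-utility at rate at least $\alpha(c_T)/((1+\kappa)\cost(c_T))$.

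\textbf{Step 2 (concave lower-bound function).} Sort $X$ by $\alpha(c)/\cost(c)$ as in \Cref{thm:phrag_guarantee}. This gives a concave function $f_S$ of the money spent by $S$. Its slope is at least $\alpha(c_j)/((1+\kappa)\cost(c_j))$ while $c_j$ is the best remaining available candidate of $X$. The slope holds over a segment of length at least $\cost(c_j)\cdot$(share of $S$); more precisely, $c_j$ cannot be bought without $S$ contributing, and we charge the money spent until $c_j$ leaves.

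\textbf{Step 3 (removal argument).} Reuse verbatim the argument of \Cref{thm:phrag_guarantee} showing that the first removed $c_{\mathrm{out}}\in X$ is removed only after $S$ has spent at least $t|S|-\cost(c_{\mathrm{out}})u_{\max}/\alpha(c_{\mathrm{out}})$, where $t=b/n$ and $t|S|\ge\cost(X)$. This uses $x_i(c)\ge p_iu_i(c)/u_{\max}$, which follows from \Cref{lem:scaling_factor}. Evaluating $f_S$ then gives at least $\frac{1}{1+\kappa}\big(\sum_{c\in X}\alpha(c)-u_{\max}\big)$, because the lost money $\cost(c_{\mathrm{out}})u_{\max}/\alpha(c_{\mathrm{out}})$ costs at most $u_{\max}/(1+\kappa)\le u_{\max}$ of value. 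Since this counts $\max_{i\in S}u_i(c)$ once per selected candidate, it yields $\sum_{c\in W}\max_{i\in S}u_i(c)\ge(\alpha-u_{\max})/(\kappa+1)$.
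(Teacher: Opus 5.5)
Your route differs from the paper's. Steps~1 and~3 are sound, but Step~2 as written has a hole.

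\textbf{How your route differs.} The paper tracks $m(t)$, the largest amount spent by any single voter of $S$. It charges each increase of $m$ to the utility $u_i(c)$ of the voter who attains the new maximum. This needs the bound $\lambda\geq|S|(t-m(t))\alpha(c_T)/\cost(c_T)$, taken from the poorest voter's balance, plus a small inequality comparing $m(t')-m_i$ with $m(t')-m(t)$. Its removal argument is phrased as ``some voter has spent at least $b/n-\cost(c_T)u_{\max}/(|S|\alpha(c_T))$''.

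You instead sum all of $S$'s payments and use $\lambda_i\geq\mu\alpha(c_T)$ with the exact $\mu=\sum_{j\in S}p_j/\cost(c_T)$. Then $\sum_{i\in S}p_i/\mu=\cost(c_T)$, and you get: while some $c_T\in X$ is available, each purchase $c$ costs $S$ in total at most $(1+\kappa)\frac{\cost(c_T)}{\alpha(c_T)}\max_{i\in S}u_i(c)$. This is correct and avoids the witness-voter algebra. It also lets you reuse the total-money removal argument of \Cref{thm:phrag_guarantee} verbatim.

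Some smaller points on Step~1:
\begin{itemize}
\item The $|S|$ in the denominator of your opening sentence is spurious. The total-money bound is what Step~1 proves and what Step~3 needs.
\item The factor $1+\kappa$ is not delicate: $u+\max(\lambda,u)\geq\max(\lambda,u)$ gives $2\kappa+(1-\kappa)$ at once, as in the paper.
\item Your aside $u_i(c)+\lambda_i\geq 2\lambda_i$ is false when $u_i(c)<\lambda_i$, though you never use it.
\end{itemize}

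What the paper's single-voter accounting buys is portability to \Cref{thm:mes_pjr}. There the cap is $\max(x_i,y_i)$, and a case split on the witness voter suffices. Summing over $S$ would force you to bound $\sum_{i\in S}\max(x_i,y_i)$, and the separate estimates on $\sum x_i$ and $\sum y_i$ only give twice the target.

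\textbf{The gap in Step~2.} In your units of total $S$-money, the rate $r_j=\alpha(c_j)/((1+\kappa)\cost(c_j))$ must act on a segment of length $\cost(c_j)$. Your reason for this, ``$c_j$ cannot be bought without $S$ contributing'', is false. Voters outside $S$ may fund $c_j$ entirely, even at time $0$. Then $r_1$ applies to none of $S$'s money, and a pure rate-times-money bound yields at most $r_2$ times $S$'s spending. That falls far short of the target when $r_1\gg r_2$.

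The missing idea is to credit a selected $c_j$ directly. It adds $\max_{i\in S}u_i(c_j)\geq\alpha(c_j)$ to the PJR sum while absorbing at most $\cost(c_j)$ of $S$'s money, since all payments for it sum to $\cost(c_j)$. With this, no concave function is needed.
\begin{itemize}
\item If no member of $X$ is removed, all of $X$ is selected and the PJR sum is at least $\alpha$.
\item Otherwise let $c_{j^*}$ be the removed member with the smallest index. Then $c_1,\ldots,c_{j^*-1}$ are all selected.
\item By your Step~1 applied to $c_{j^*}$ itself, every purchase before the removal of $c_{j^*}$ gains at least $r_{j^*}$ per dollar of $S$.
\item By Step~3, $S$ has spent at least $\cost(X)-L$ by then, where $L=\cost(c_{j^*})u_{\max}/\alpha(c_{j^*})$. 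So at least $\sum_{j\geq j^*}\cost(c_j)-L$ of it went to candidates other than $c_1,\ldots,c_{j^*-1}$.
\end{itemize}
Hence
\begin{align*}
\sum_{c\in W}\max_{i\in S}u_i(c)\geq\sum_{j<j^*}\alpha(c_j)+r_{j^*}\Big(\sum_{j\geq j^*}\cost(c_j)-L\Big)\geq\frac{\alpha-u_{\max}}{\kappa+1}\text{.}
\end{align*}
The last step uses $r_{j^*}\cost(c_j)\geq r_j\cost(c_j)=\alpha(c_j)/(\kappa+1)$ for $j\geq j^*$, and $r_{j^*}L=u_{\max}/(\kappa+1)$.
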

\begin{proof}
	Consider an election $E = (N, C, \cost, \{u_i\}_{i \in N})$ and an outcome $W$ returned by PropRank for $E$. Consider an $\alpha$-cohesive group of voters $S$, and let $T$ denote the subset of candidates that witnesses that $S$ is $\alpha$-cohesive. For each candidate $c'$ we use the notation $\alpha(c') = \min_{i \in S} u_i(c')$. Let us relabel these candidates as $T = \{c_1, c_2, \ldots, c_z\}$ so that:
	\begin{align*}
		\frac{\alpha(c_1)}{\cost(c_1)} \geq \frac{\alpha(c_2)}{\cost(c_2)} \geq \ldots \geq \frac{\alpha(c_z)}{\cost(c_z)} \text{.}
	\end{align*}
	For each time $t$ let $m(t)$ denote the maximal amount of money that some voter from $S$ has spent until $t$. Consider a purchase that changed this value from $m(t)$ to $m(t')$, which is witnessed by a voter $i \in S$. The purchased happened in time $t$, and before the purchase the voter had spent $m_i$ dollars; clearly $m_i \leq m(t)$. Assume that at this moment the candidate $c_T \in T$ has not yet been selected. Then the local scaling factor is lower-bounded by:
	\begin{align*}
		\lambda \geq \frac{|S|(t - m(t))}{\cost(c_T)} \cdot \alpha(c_T)
	\end{align*} 
	Thus, the satisfaction of some voter in $S$ due to the purchase increased by the value $u$ such that
	\begin{align*}
		m(t') - m_i &\leq \kappa \cdot \frac{2 (t - m_i)u}{u + \max(\lambda, u)} + (1 - \kappa) \cdot \frac{(t - m_i)u}{\max(\lambda, u)} \\
		             & \leq \frac{(2\kappa + 1 - \kappa) \cdot (t - m_i) \cdot u}{\max(\lambda, u)} \leq \frac{(\kappa + 1) \cdot (t - m_i) \cdot u}{\lambda} \\
					 & \leq \frac{(\kappa + 1) \cdot (t - m_i) \cdot u \cdot \cost(c_T)}{ \alpha(c_T) |S|(t - m(t))} \text{.}
	\end{align*} 
	After reformulation we get that:
	\begin{align*}
		\frac{u}{m(t') - m(t)} \geq \frac{1}{\kappa + 1} \cdot \frac{\alpha(c_T)|S|}{\cost(c_T)} \cdot \frac{(m(t') - m_i) \cdot (t - m(t))}{(m(t') - m(t)) \cdot (t - m_i)} \geq \frac{1}{\kappa + 1} \cdot \frac{\alpha(c_T)|S|}{\cost(c_T)} \text{.}
	\end{align*}

	Now, as in the previous proofs we consider a function $f$ that given the value of $m(t)$ lower bounds the satisfaction from the condition of PJR (namely: $\sum_{c \in W'} \max_{i \in S} u_i(c)$ ) that can be obtained with the purchases such that no voter pays more than $m(t)$. We infer that for $x = (\kappa + 1) \cdot \frac{\cost(c_1)}{|S|}$ the value of the function is at least $\alpha(c_1)$. Similarly, for $x = (\kappa + 1) \cdot \left(\frac{\cost(c_1)}{|S|} + \frac{\cost(c_2)}{|S|}\right)$ the value of the function is at least $\alpha(c_1) + \alpha(c_2)$. Further, note that:
	\begin{align*}
		\sum_{c_T \in T}(\kappa + 1) \frac{\cost(c_T)}{|S|} \leq \sum_{c_T \in T} (\kappa + 1) \cdot \cost(c_T) \cdot \frac{b}{\cost(T) \cdot n} = (\kappa + 1) \cdot \frac{b}{n} \text{.}
	\end{align*}
	Thus, the value of the function for $x = \nicefrac{b}{n}$ equals at least $\nicefrac{\alpha}{\kappa + 1}$.
	
	Consider the first time when a candidate $c_T \in T$ was removed. We will show that some voter must have spent at least $\nicefrac{b}{n} - \frac{\cost(c_T)u_{\max}}{|S|\alpha(c_T)}$ money when this happened. For the sake of contradiction, assume that this is not the case. We will first show that $c_T$ could not have been removed at time $t' < \nicefrac{b}{n}$. Indeed, the total money spent at $t'$ would be at most:
	\begin{align*}
		t'(n - |S|) + |S| \cdot \left(\nicefrac{b}{n} - \frac{\cost(c_T)u_{\max}}{|S|\alpha(c_T)}\right) &< \nicefrac{b}{n} \cdot (n - |S|) + |S| \cdot \left(\nicefrac{b}{n} - \frac{\cost(c_T)u_{\max}}{|S|\alpha(c_T)}\right) \\
		&= b - \frac{\cost(c_T)u_{\max}}{\alpha(c_T)} \leq b - \cost(c_T) \text{.}
	\end{align*}
	This gives a contradiction, and so $c_T$ must had been removed at least at time $t'' = \nicefrac{b}{n}$. At this time it must had been the case that:
	\begin{align*}
		\cost(c_T) &> \sum_{i \in S} \kappa \cdot \frac{2 (t'' - m(t''))u_i(c_T)}{u_i(c_T) + \max(\lambda, u_i(c_T))} + (1 - \kappa) \cdot \frac{(t'' - m(t''))u_i(c_T)}{\max(\lambda, u_i(c_T))} \\
		           &\geq \sum_{i \in S} \cdot \frac{(t'' - m(t''))u_i(c_T)}{\max(\lambda, u_i(c_T))} \geq |S| \cdot \frac{(t'' - m(t''))\alpha(c_T)}{u_{\max}} \text{.}
	\end{align*} 
	After reformulation:
	\begin{align*}
		t'' - m(t'') < \frac{\cost(c_T)u_{\max}}{|S|\alpha(c_T)} \text{.}
	\end{align*}
	Which shows that some voter must have spent at least $\nicefrac{b}{n} - \frac{\cost(c_T)u_{\max}}{|S|\alpha(c_T)}$ money. Let $\tau$ be the maximal sepnding of the voter from $S$ when the last candidate with the index lower than $c_T$ was purchased or when $c_T$ was removed, whichever is greater. Clearly $\tau \geq \nicefrac{b}{n} - \frac{\cost(c_T)u_{\max}}{|S|\alpha(c_T)}$ and the function $f$ for arguments greater or equal than $\tau$ has a scope of at most $\frac{1}{\kappa + 1} \cdot \frac{\alpha(c_T)|S|}{\cost(c_T)}$.

	Thus, we can assess the value of the function $f$ for $\tau$ as:
	\begin{align*}
		f(\tau) \geq f(\nicefrac{b}{n}) - \frac{\cost(c_T)u_{\max}}{|S|\alpha(c_T)} \cdot \frac{1}{\kappa + 1} \cdot \frac{\alpha(c_T)|S|}{\cost(c_T)} \geq \frac{\alpha}{\kappa + 1} - u_{\max} \cdot \frac{1}{\kappa + 1} \text{.}
	\end{align*}
	This completes the proof.
\end{proof}

\begin{corollary}
	PropRank with the parameter $\kappa = 0$ satisfies PJR up to the maximal utility.
\end{corollary}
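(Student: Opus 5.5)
The corollary should follow by specializing \Cref{thm:phrag_guarantee_pjr} to $\kappa = 0$ and unpacking what ``PJR up to the maximal utility'' means. I read the latter as: for every $\alpha$-cohesive group $S$ (in the participatory budgeting sense of \Cref{def:pb_constraints}), the outcome $W$ satisfies $\sum_{c \in W} \max_{i \in S} u_i(c) \geq \alpha - u_{\max}$. This is the standard PJR condition, $\sum_{c \in W}\max_{i\in S}u_i(c)\geq \alpha$, weakened by an additive $u_{\max}$.

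First I will substitute $\kappa = 0$ into the bound $d(\alpha) = (\alpha - u_{\max})/(\kappa+1)$, which gives $d(\alpha) = \alpha - u_{\max}$. By \Cref{def:pjr_generalization}, for each $\alpha$-cohesive $S$ this yields $\sum_{c\in W}\max_{i\in S}u_i(c) \geq \alpha - u_{\max}$, which is exactly the claim.

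I also want to confirm that the proof of \Cref{thm:phrag_guarantee_pjr} remains valid at $\kappa=0$. The only place $\kappa$ enters is the bound
\begin{align*}
m(t')-m_i \leq \frac{(\kappa+1)(t-m_i)u}{\lambda}.
\end{align*}
At $\kappa=0$ this is the sharpest case, since the cap in \eqref{eq:caps} reduces to $p_i(r)u_i(c)/\max(\lambda_i(r),u_i(c))$ and no slack factor of $2$ arises. The slope estimate for $f$ therefore becomes $\alpha(c_T)|S|/\cost(c_T)$. The removal argument needs $u_{\max} \geq \alpha(c_T)$, which holds trivially, and $\kappa$ does not appear there at all.

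The main (minor) obstacle is definitional. The paper never formally defines ``PJR up to the maximal utility,'' so the proof should state the intended reading explicitly, namely an additive loss of $u_{\max}$ relative to the PJR threshold $\alpha$. With that reading the corollary is immediate.
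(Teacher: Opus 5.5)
Your proposal is correct and is exactly how the paper obtains the corollary: it is stated without a separate proof as the $\kappa = 0$ instance of \Cref{thm:phrag_guarantee_pjr}, where $d(\alpha) = (\alpha - u_{\max})/(\kappa+1)$ becomes $\alpha - u_{\max}$. Your reading of ``PJR up to the maximal utility'' as $\sum_{c \in W}\max_{i \in S} u_i(c) \geq \alpha - u_{\max}$ for every $\alpha$-cohesive $S$ under PB constraints is the intended one, and your re-check of the theorem's proof at $\kappa = 0$ is not needed (for the record, its removal step also uses \Cref{lem:scaling_factor} to bound $\max(\lambda, u_i(c_T))$ by $u_{\max}$, not only $u_{\max} \geq \alpha(c_T)$).
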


\Cref{thm:phrag_guarantee_pjr} suggests that the best value of $\kappa$ is $0$, in which case the algorithm simplifies and becomes somewhat more intuitive. However, this is not the value we recommend, as our experiments provide strong evidence in favor of higher values. The intuitive explanation is that a larger $\kappa$ gives voters more flexibility to purchase candidates and reduces the amount of money left unused at the end of the procedure. Hence, we believe that $\kappa = 1$ should be the defaul version. 

Note that an analogous results to \Cref{thm:phrag_guarantee_pjr} apply to other types of constraints, and the proof follows the same idea as the analogous proofs of \Cref{thm:phrag_guarantee_constraints} and \Cref{thm:phrag_guarantee_matroid}.

\begin{theorem}\label{thm:phrag_pjr_guarantee_constraints}
Consider an outcome $W$ returned by the PropRank rule. For each $\alpha$-cohesive group of voters $S$, it holds that
\begin{align*}
\sum_{c \in W} \max_{i \in S} u_i(c) \geq \alpha \cdot \frac{1 - \gamma}{\kappa + 1}  - \frac{ u_{\max}}{\kappa + 1} \text{.}
\end{align*}
where $\gamma = \nicefrac{|S|}{n}$, and $u_{\max}$ is the highest utility a voter assigns to a candidate.
\end{theorem}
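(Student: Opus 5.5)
We combine the PJR-type accounting from the proof of \Cref{thm:phrag_guarantee_pjr} with the time-horizon argument from the proof of \Cref{thm:phrag_guarantee_constraints}. Fix an $(\alpha,\beta)$-cohesive group $S$.

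\textbf{Step 1 (time horizon and witness set).} Consider the purchases made up to the time
\begin{align*}
t < \frac{\beta}{|S| n}\,(n-|S|).
\end{align*}
Let $T$ be the set of candidates bought by then. With uniform unit costs, at most $tn$ dollars have been spent, so $|T| \le tn < \beta\frac{n-|S|}{|S|}$. Cohesiveness then yields a set $X$ with $|X|\le\beta$, $\sum_{c\in X}\alpha(c)\ge\alpha$, and $X\cup T\in\feasibles$. Order $X=\{c_1,\dots,c_{|X|}\}$ by non-increasing $\alpha(c_j)=\min_{i\in S}u_i(c_j)$. By downward closure, every element of $X\setminus T$ remains available during all purchases made before $t$.

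\textbf{Step 2 (slope bound).} Let $m(t)$ be the maximal spending of a voter in $S$, and let $f$ lower-bound $\sum_{c\in W'}\max_{i\in S}u_i(c)$ as a function of $m(t)$. As in \Cref{thm:phrag_guarantee_pjr}, while some $c_j\in X$ is available and unselected, the local scaling factor satisfies $\lambda\ge |S|(t-m(t))\alpha(c_j)$. Repeating the same chain of inequalities on the caps \eqref{eq:caps}, each increase of $m$ produces PJR-satisfaction at rate at least
\begin{align*}
\frac{|S|\,\alpha(c_j)}{\kappa+1}.
\end{align*}
Hence $f$ is concave and piecewise linear. On a segment of length $1/|S|$ it has slope $\frac{|S|\alpha(c_1)}{\kappa+1}$, then slope $\frac{|S|\alpha(c_2)}{\kappa+1}$ on the next such segment, and so on, with the slope frozen at $\alpha(c_z)$ once the first unselected $c_z$ is the bottleneck. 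Scaling this over the horizon, which is a $\frac{n-|S|}{n}$ fraction of the length $\beta/|S|$ needed to cover all of $X$, gives by concavity
\begin{align*}
f(t) \ge \frac{\alpha}{\kappa+1}\cdot\frac{n-|S|}{n} = \alpha\,\frac{1-\gamma}{\kappa+1}.
\end{align*}

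\textbf{Step 3 (removal loss).} We must argue that the voters actually reach spending close to $t$ before the relevant $c_z$ disappears. The same estimate as in \Cref{thm:phrag_guarantee_pjr} applies, using \Cref{lem:scaling_factor} to get $x_i(c_z)\ge p_i\,\alpha(c_z)/u_{\max}$. It shows that whenever $c_z$ is still affordable-but-unbought at time $t$, we have
\begin{align*}
t-m(t) < \frac{u_{\max}}{|S|\alpha(c_z)}.
\end{align*}
The lost segment therefore has length at most $\frac{u_{\max}}{|S|\alpha(c_z)}$. Beyond $m'$, defined as the later of the removal of $c_z$ and the last purchase of an earlier-indexed $c_j$, the slope of $f$ is at most $\frac{|S|\alpha(c_z)}{\kappa+1}$. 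The loss is thus at most $\frac{u_{\max}}{\kappa+1}$, which gives the claimed bound.

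The main obstacle is Step 3 combined with Step 2. In the PJR setting, $m(t)$ tracks the \emph{maximal} spender rather than the total, so the removal argument must be phrased per-voter. We also need to check that removals caused by general (non-budget) constraints before $t$ are excluded by the cohesiveness witness, since $X$ is compatible with everything bought before $t$. I would handle this exactly as in \Cref{thm:phrag_guarantee_constraints}: re-choose $X$ against the set bought up to the current time, and pass to the limit $t\to\frac{\beta(n-|S|)}{|S|n}$.
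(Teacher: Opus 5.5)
Your proposal is correct and follows the paper's route. The paper gives no separate proof of this theorem, but its detailed proof of the analogous \Cref{thm:mes_guarantee_constraints} has exactly your structure: a horizon $t\approx\frac{\beta(n-|S|)}{|S|n}$ with $|T|\le tn$, a witness $X$ sorted by $\alpha(c)$ with $c_z$ the first element of $X\setminus T$, the per-purchase PJR rate $\frac{|S|\alpha(c_j)}{\kappa+1}$ from \Cref{thm:phrag_guarantee_pjr}, concavity giving $\frac{n-|S|}{n}\cdot\frac{\alpha}{\kappa+1}$, and a removal loss of $\frac{|S|\alpha(c_z)}{\kappa+1}\cdot\frac{u_{\max}}{|S|\alpha(c_z)}=\frac{u_{\max}}{\kappa+1}$. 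One wording slip in Step~3: the bound $t-m(t)<\frac{u_{\max}}{|S|\alpha(c_z)}$ comes from $c_z$ being \emph{available} but not purchasable ($\sum_{i\in S}x_i(c_z)<\cost(c_z)$), not ``affordable''; your strict horizon with a limit is slightly cleaner than setting $t$ equal to the threshold as that proof does.
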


\begin{theorem}\label{thm:phrag_pjr_guarantee_matroid}
	PropRank has the weak PJR degree of $d(\alpha) = (\alpha - u_{\max})/(\kappa + 1)$ for matroid constraints.
\end{theorem}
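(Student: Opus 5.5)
We follow the proof of \Cref{thm:phrag_guarantee_matroid}, replacing the potential-function accounting of \Cref{lem:util_estimation} with the ``maximal spending'' accounting from the proof of \Cref{thm:phrag_guarantee_pjr}.

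\textbf{Setting up the stopping time.} Fix a strongly $(\alpha, \beta)$-cohesive group $S$. Define $t = \nicefrac{\beta}{|S|} + \nicefrac{(\Delta-1)}{n}$ exactly as before, with the two cases (1) and (2). Let $W$ be the set of candidates bought strictly before the last purchase at $t$. Remove from $W$ the $\beta - 1$ candidates with the largest $\min_{i\in S}u_i(c)$ to obtain $T$. The computation from \Cref{thm:phrag_guarantee_matroid} gives $|T| < \beta\cdot\nicefrac{(n-|S|)}{|S|}$, so strong cohesiveness yields a set $X$ with $|X| = \beta$, $X \cup T \in \feasibles$, and $\min_{i\in S, c\in X}u_i(c) \geq \nicefrac{\alpha}{\beta}$.

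\textbf{Reducing to the good case.} If $T \cap X \neq \emptyset$, then $W$ already contains $\beta$ candidates each with $\max_{i\in S}u_i(c) \geq \min_{i\in S}u_i(c) \geq \nicefrac{\alpha}{\beta}$. This gives $\sum_{c\in W}\max_{i\in S}u_i(c) \geq \alpha$, which is more than required. Otherwise $|T\cup X| > |W|$, and the matroid exchange property shows that before every purchase up to $t$ some available candidate $c_T$ with $\alpha(c_T) \geq \nicefrac{\alpha}{\beta}$ and unit cost remains.

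\textbf{Rate of gain in the good case.} Let $m(t)$ be the maximal amount spent by any voter of $S$. Repeat the argument of \Cref{thm:phrag_guarantee_pjr} verbatim with $\cost(c_T) = 1$. The local scaling factor is at least $|S|(t - m(t))\alpha(c_T)$. Bounding the payment by \eqref{eq:caps} then shows that each increase of $m$ produces a PJR-satisfaction increase of at least
\begin{align*}
\frac{1}{\kappa+1}\cdot\frac{|S|\alpha}{\beta}
\end{align*}
per unit of increase of $m$.

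\textbf{Bounding the spending.} Next, lower bound the final value of $m$ at time $t$. In case (1), $S$ spent in total at least $t|S| - \Delta \geq \beta - 1$, so $m \geq \nicefrac{(\beta-1)}{|S|}$, since the maximum is at least the average. In case (2), the candidate from $X$ was not affordable. Since $x_i \geq p_i\alpha(c)/u_{\max}$ and the money left in $S$ is at most $u_{\max}\beta/\alpha$, some voter spent at least $\nicefrac{\beta}{|S|} - \nicefrac{u_{\max}\beta}{(|S|\alpha)}$. In both cases $m \geq \nicefrac{\beta}{|S|}(1 - \nicefrac{u_{\max}}{\alpha})$, assuming $u_{\max} \geq \nicefrac{\alpha}{\beta}$ to merge the case (1) bound.

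\textbf{Combining.} Multiplying the rate by $m$ gives
\begin{align*}
\frac{\beta}{|S|}\left(1 - \frac{u_{\max}}{\alpha}\right)\cdot \frac{|S|\alpha}{(\kappa+1)\beta} = \frac{\alpha - u_{\max}}{\kappa+1},
\end{align*}
as claimed.

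\textbf{Main obstacle.} The delicate point is the max-versus-average mismatch. The PJR accounting tracks the maximal spender, while the stopping time $t$ is defined through total spending in $S$. I would handle this as above, using that the maximum spend dominates the average. I would also check that the per-step inequality of \Cref{thm:phrag_guarantee_pjr}, which uses $t - m(t) \leq t - m_i$, still holds for fractional purchases at the boundary time $t$.
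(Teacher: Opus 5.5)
Your proposal is correct and is essentially the argument the paper has in mind. The paper gives no separate proof, saying only that it follows the idea of \Cref{thm:phrag_guarantee_pjr} combined with the matroid-exchange construction of \Cref{thm:phrag_guarantee_matroid}; that is exactly what you do, and ``maximal spending $\geq$ average spending'' correctly bridges the two accountings.

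Two small points. First, your assumption $u_{\max} \geq \nicefrac{\alpha}{\beta}$ is automatic, since every candidate of $X$ gives each voter of $S$ at least $\nicefrac{\alpha}{\beta}$. Second, in case (2) you should let $W$ contain all purchases up to time $t$; the bound $|W| < tn - 1$ still holds because $S$ keeps more than one dollar, and this guarantees that the candidate of $X$ whose non-affordability you invoke is still available at $t$.
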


\subsection{Remark on Global Scaling Factors}

Observe that the proof of \Cref{thm:phrag_guarantee_pjr} does not rely on the monotonicity of the scaling factors. Therefore, if we are only interested in the PJR degree, the rule can be simplified by removing the global scaling factors. We will use this observation also later, when designing the variant of the method of equal shares for general constraints.

\section{Method of Equal Shares for General Constraints}

Now, we build on the ideas from \Cref{sec:prop_rank} to design an extension of the method of equal shares~\cite{pet-sko:laminar,pet-pie-sko:c:participatory-budgeting-cardinal} to the setting with general feasibility constraints. Let us start by emphasizing the main difference between two well-established voting rules for committee elections: Phragm{'e}n's rule~\cite{aaai/BrillFJL17-phragmen, lac-sko:multiwinner-book} and the method of equal shares. While both are based on the idea of voters buying approved candidates, they differ in one important aspect. Phragm{'e}n's rule distributes money gradually over time, with purchases made greedily as soon as they become affordable. In contrast, Equal Shares allocates fixed budgets to voters upfront, which enables more purchases with a good payment-per-utility ratio and ultimately leads to stronger proportionality guarantees. The PropRank algorithm, similarly to Phragm{'e}n's rule, also proceeds greedily but incorporates a foresight mechanism: it evaluates whether a purchase is sufficiently profitable or whether it is better to delay spending in anticipation of more valuable candidates becoming affordable later. However, PropRank itself does not implement the idea of providing voters with sufficient upfront budgets to enable multiple purchases within a given allocation of money.

Our generalized method of equal shares combines the ideas of PropRank and the original method of equal shares. It can be viewed as a sequence of invocations of the Equal Shares ADD1 subroutine, where in each invocation voters are given as much money as possible upfront. However, within each invocation we may deliberately postpone purchasing candidates if we anticipate that more valuable ones will become available in later invocations of the subroutine---just as in PropRank.

\subsection{The Main Loop of the Algorithm}

In the main algorithm we seek the largest amount of money that can be allocated to the voters such that the Equal Shares subroutine purchases only candidates that respect the feasibility constraints. After executing these purchases, we remove candidates that are no longer feasible given the current partial solution and repeat the process. In rare cases, it may happen that the largest feasible allocation of money does not allow for any purchase. If we were to increase the voters' entitlements by an arbitrarily small amount, the subroutine would then purchase candidates that, taken together, violate the feasibility constraints. In this case, we first identify the set of candidates that would be purchased under such a slightly increased entitlement, and then select from this set a feasible subset. This guarantees that each invocation of the Equal Shares subroutine adds at least one candidate to the outcome, hence the termination of the algorithm. 

The pseudo-code of the main algorithm is given in \Cref{alg:equal-shares-general}. 

\subsection{The Equal Shares with a Foresight Subroutine}

As we have noted, the overall rule can be seen as a sequence of invocations of the Equal Shares ADD1 subroutine. However, in each invocation we may deliberately postpone purchasing  candidates if we expect that more valuable ones may become available in future iterations---just as in PropRank. This is done by computing the upper bounds on spending on particular candidates. Compared to PropRank, where such upper bounds are given by Equality~\eqref{eq:caps}, in case of the Equal Shares in round $r$ they are defined as follows. Assuming that $W$ is the set of already selected candidates, we first compute the best possible payment-per-utility if all the voters would had enough money:
\begin{align*}
	\rho_{\best}(i) = \min_{c \in \rem} \frac{\cost(c)}{\sum_{j \in N} u_j(c)} \text{~~where~~} \rem = \{c \in C \setminus W \colon W \cup \{c\} \in \feasibles \} \text{.}
\end{align*}
Then, we compute the upper bound $\mathrm{cap}_i$ as:
\begin{align}\label{eq:mes-general-equation}
	\begin{split}
		&\mathrm{cap}_i(c, r) = \max(x_i(c, r), y_i(c, r)) \text{~~where:} \\
		&x_i(c, r) = \kappa \cdot \frac{2p_i(r)u_i(c)}{u_{i}(c) + \max(\lambda_i(r), u_i(c))} + (1 - \kappa) \cdot \frac{p_i(r)u_i(c)}{\max(\lambda_i(r), u_i(c))} \text{,} \\
		&y_i(c, r) = \min(\rho_{\best}(i) \cdot u_i(c) \cdot (1 + \kappa), p_i(r)).
	\end{split}
\end{align}

This equation is similar to Equality~\eqref{eq:caps} for PropRank. The key difference is that, unlike in PropRank, the initial voter budgets $p_i$ can be large. As a result, at the early stages we may be able to cover substantial fractions of candidates' costs---often fractions exceeding $1$. Consequently, the local scaling factors $\lambda$ may become very large, which in turn drives the values of $x_i(c, r)$ down. For this reason, $x_i(c, r)$ can be larger than $y_i(c, r)$, a situation that does not arise in the original PropRank. Equality~\Cref{eq:mes-general-equation} accounts for this effect. To minimize the amount of unused budget, it also takes $y_i(c, r)$ into consideration, allowing $\mathrm{cap}_i(c, r)$ to be as large as possible while still ensuring that payments per unit of utility remain efficient.

Below we present an analogous lemma to \Cref{lem:scaling_factor} with (almost) the same proof.

\begin{lemma}\label{lem:scaling_factor_mes}
At the end of each iteration of MES the local scaling factors are no greater than $u_{\max}$.
\end{lemma}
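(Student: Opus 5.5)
The plan is to reuse the contradiction argument of \Cref{lem:scaling_factor}, adapted to the end of an iteration of MES. Suppose that at the end of some iteration some voter $i$ has local scaling factor $\lambda_i(r) > u_{\max}$. Local scaling factors are defined as maxima of quantities $\mu\alpha$ over available candidates. So there is an available candidate $c$ and a group $S \ni i$ such that $c$ is $(\mu, S, \alpha)$-affordable with $\mu\alpha > u_{\max}$. Among all such witnesses, I would take the one maximising $\mu\alpha$, so that every voter $j \in S$ has $\lambda_j(r) \geq \mu\alpha$ (if the global factors $g_j$ are also used, they only make $\lambda_j(r)$ larger). Since $\alpha \leq \min_{j \in S} u_j(c) \leq u_{\max}$, we get $\mu > 1$ and $\mu u_j(c) \leq \lambda_j(r)$ for every $j \in S$.

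Next, I would lower-bound each voter's cap exactly as in \Cref{lem:scaling_factor}. Since $\mathrm{cap}_j(c,r) = \max(x_j(c,r), y_j(c,r)) \geq x_j(c,r)$, the $y$-term can be ignored. We have $\max(\lambda_j(r), u_j(c)) = \lambda_j(r)$, and the displayed chain in that proof gives
\[
x_j(c,r) \geq \kappa \cdot \frac{2p_j(r)}{1+\mu} + (1-\kappa) \cdot \frac{p_j(r)}{\mu} \geq \frac{p_j(r)}{\mu}.
\]
Here the last step uses $2/(1+\mu) \geq 1/\mu$ for $\mu \geq 1$. Summing over $S$ and using $\sum_{j\in S} p_j(r) = \mu \cdot \cost(c)$ yields $\sum_{j \in S} \mathrm{cap}_j(c,r) \geq \cost(c)$. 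If the inequality $\lambda_j(r) \geq \mu\alpha$ is strict, or the bound $2/(1+\mu) > 1/\mu$ is strict (which holds since $\mu>1$ and $\kappa > 0$), the sum is strictly larger than $\cost(c)$. For $\kappa = 0$ we get at least equality, which is still enough for affordability.

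Finally, I would derive the contradiction from the termination condition of an iteration. At the end of an iteration, the Equal Shares ADD1 subroutine has stopped, and an affordable available candidate would be purchased. The iteration only ends once no remaining candidate $c$ satisfies $\sum_{j} \mathrm{cap}_j(c,r) \geq \cost(c)$. Otherwise the subroutine, or the main loop's feasible-subset selection step, would still buy something. The candidate $c$ is available and affordable already from the voters in $S$, which contradicts the fact that the iteration has ended.

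The main obstacle is making this last step precise. In MES, candidates can also remain unbought because the main loop capped the entitlement to preserve feasibility. I would therefore argue about the state after the purchases and removals of the iteration: $c$ is still in $\rem$ and hence feasible to add, so the subroutine would have bought it. If instead it was excluded by the feasibility stop, the main loop's tie-breaking step guarantees a purchase, and the factors are recomputed afterwards. This is the one place where the proof is only ``almost'' the same as that of \Cref{lem:scaling_factor}.
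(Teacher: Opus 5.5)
Your overall plan is the paper's: argue by contradiction, take a witness $(c,S,\mu,\alpha)$ maximising $\mu\alpha$, deduce $\mu>1$, lower-bound what each $j\in S$ is willing to pay for $c$, and conclude that $c$ is affordable although the iteration has ended. However, the step that carries the lemma is justified incorrectly.

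From $\lambda_j(r)\geq\mu\alpha$ and $\alpha\leq u_j(c)$ you cannot infer $\mu u_j(c)\leq\lambda_j(r)$. In fact, for the maximal witness this is false for every $j\in S$ with $u_j(c)>\alpha$. In the MES subroutine the factors are reset to $0$, so there are no global factors, and $\lambda_j(r)$ is the maximum of $\mu'\alpha'$ over witnesses containing $j$. Maximality then gives $\lambda_j(r)=\mu\alpha<\mu u_j(c)$.

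The inequality also points the wrong way. Since $x_j(c,r)$ is decreasing in $\lambda_j(r)$, a lower bound on $\lambda_j(r)$ only yields the upper bound $x_j(c,r)\leq\kappa\frac{2p_j(r)}{1+\mu}+(1-\kappa)\frac{p_j(r)}{\mu}$, the opposite of what you then use. For the same reason, your remark that global factors ``only make $\lambda_j(r)$ larger'' is backwards: larger factors would break the argument, not help it.

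The missing idea is that maximality of the witness is needed for an \emph{upper} bound on the factors. For every $j\in S$ we have $\lambda_j(r)\leq\mu\alpha\leq\mu u_j(c)$ and $u_j(c)<\mu u_j(c)$, hence $\max(\lambda_j(r),u_j(c))\leq\mu u_j(c)$. This is exactly what yields $x_j(c,r)\geq\kappa\frac{2p_j(r)}{1+\mu}+(1-\kappa)\frac{p_j(r)}{\mu}\geq\frac{p_j(r)}{\mu}$. It is also why the absence of global factors in MES matters; in \Cref{lem:scaling_factor} they are controlled by looking at the first moment some factor exceeds $u_{\max}$.

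With this repair your argument coincides with the paper's. Your form of the bound, with $(1-\kappa)p_j(r)/\mu$ leading to $\sum_{j\in S}x_j(c,r)\geq\cost(c)$, is actually the correct version of the paper's displayed chain, which drops the $1/\mu$.

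For the final step, the clean way is to argue in the subroutine's terminal state. There, every candidate in its remaining set is unaffordable under the factors computed over that set. The factors after the main loop's removals are maxima over fewer candidates, hence no larger. Arguing that $c$ ``would have been bought'' using the post-removal factors is not quite right, because the subroutine tested affordability with the larger pre-removal factors. The $t_{\max}+\epsilon$ branch you worry about is not treated in the paper either; its proof simply asserts that $c$ could not have been bought.
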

\begin{proof}
For the sake of contradiction assume that this is not the case. This means that when all the purchases are made, there must exist an $(\mu, S, \alpha)$-affordable candidate $c$ such that the new scaling factor becomes $\mu \alpha > u_{\max}$. In particular, $\mu > 1$. Without loss of generality, assume that $c$ is the candidate with the highest value of $\mu \alpha$ at the moment. 
For each voter $i \in S$ we have that:
\begin{align*}
x_i(c, r) &= \kappa \cdot \frac{2p_i(r)u_i(c)}{u_{i}(c) + \max(\lambda_i(r), u_i(c))} + (1 - \kappa) \cdot \frac{p_i(r)u_i(c)}{\max(\lambda_i(r), u_i(c))} \\
		  &> \kappa \cdot \frac{2p_i(r)u_i(c)}{u_{i}(c) + \mu u_i(c)} + (1 - \kappa) \cdot \frac{p_i(r)u_i(c)}{\mu u_i(c)} \\
		  &= \kappa \cdot \frac{2p_i(r)}{1 + \mu} + (1 - \kappa) \cdot p_i(r) > \kappa \cdot \frac{p_i(r)}{\mu} + (1 - \kappa) \cdot p_i(r) \\
		  &= \kappa \cdot \frac{p_i(r)}{\sum_{j \in S}p_j(r) / \cost(c)} + (1 - \kappa) \cdot p_i(r) \text{.}
\end{align*}
Since $c$ could not have been bought (at the end of the iteration all the purchases are alerady made) it must be the case that:
\begin{align*}
\cost(c) &> \sum_{i \in S} x_i(c, r) > \kappa \cdot \sum_{i \in S} \frac{p_i(r)}{\sum_{j \in S}p_j(r) / \cost(c) } + (1 - \kappa) \sum_{i \in S} p_i(r) \\
         &= \kappa \cdot \cost(c) + (1 - \kappa) \sum_{i \in S} p_i(r)
\end{align*}
This means that:
\begin{align*}
\cost(c) > \sum_{i \in S} p_i(r) \text{.}
\end{align*}
From that we infer that $\mu < 1$, a contradiction.
\end{proof}

\SetKw{Return}{return}
\SetKw{Input}{Input:}
\begin{algorithm}[!thb]
	\small
	\captionsetup{labelfont={sc,bf}, labelsep=newline}
	\caption{Pseudo-code of the method of equal shares for general constraints.}\label{alg:equal-shares-general}
	\Input{an election with feasibiity constraints $\feasibles$\\}
	\DontPrintSemicolon
	\SetAlgoNoEnd
	\SetAlgoLined
	\SetKwFunction{FMES}{Equal-Shares-Subroutine}
    \SetKwProg{Fn}{Function}{:}{}

	$W \gets \emptyset$ \; 
	$\rem \gets C$ {\color{winered}\tcc*[l]{remaining candidates}}
	$\spent_i \gets 0  \text{~for each~} v_i \in V$  {\color{winered}\tcc*[l]{money spent by each voter}}
	\While{$\rem \neq \emptyset$}{
		{\color{winered}\tcc{Formally, the subroutine returns a set of candidates and payments, but below we treat the outcome as simply the set of candidates.}}
		$t_{\max} \gets \text{largest value}~t~\text{such that \FMES}(\spent, t, \rem) \cup W \in \feasibles$\;
		$W_{\nxt}, p_{\nxt} \gets \text{\FMES}(\spent, t_{\max}, \rem)$\;
		\If{$W_{\nxt} = \emptyset$}{
			$W_{\nxt}', p_{\nxt} \gets \text{\FMES}(\spent, t_{\max} + \epsilon, \rem)$\;
			$W_{\nxt} \gets \text{~largest subset of~}W_{\nxt}' \text{~such that~}W_{\nxt} \cup W \in \feasibles$\;
		}
		$W \gets W \cup W_{\nxt}$ \; 
		$\spent_i \gets \spent_i + \sum_{c \in W_{\nxt}} p_{\nxt}[v_i][c]$\;
		$\rem \gets \{ c \in \rem \setminus W \colon W \cup \{c\} \in \feasibles\}$\;
	}
	\Return{W}\;
	\;
	\Fn{\FMES{$\spent$, $t$, $\rem$}}{
		$W \gets \emptyset$ \; 
		$\textrm{pay}[v_i][c] \gets 0 \text{~for all~} v_i \in V, c \in C$ \; 
		\While{$\rem \neq \emptyset$}{
			$\lambda_i = 0  \text{~for each~} v_i \in V$ {\color{winered}\tcc*[l]{local scaling factors}}
			$\text{Compute efficiently local scaling factors (lines \ref{alg:loc-scalings-begin}--\ref{alg:loc-scalings-end} in \Cref{alg:base_prop_rank_algorithm})}$\;
			{\color{winered}\tcc{This part is similar to the method of equal shares.}}
			\For{$c \in \rem$}{
				$\text{compute~}\mathrm{cap}_i(c) \text{~for all~} v_i \in V, c \in \rem \text{~according to \eqref{eq:mes-general-equation} using}~\lambda_i$\;
				$x_i(c) \gets \min(\mathrm{cap}_i(c), t -  \spent_i)  \text{~for each~} v_i \in V$ \;
				\eIf{$\sum_{i \in N} x_i(c) < \cost(c)$}{
					$\rho(c) \gets +\infty$ {\color{winered}\tcc*[l]{candidate not affordable by what voters agree to pay}}
				}{
				$p \gets \cost(c)$ \;
				$u \gets \sum_{i \in N_c} u_i(c)$ \;
				\For{$i \in N_c \text{~sorted in the increasing order of~} x_i(c) / u_i(c)$}{
					\If{$x_i(c) \cdot p \cdot u_i(c)/u $}{
						$\rho(c) \gets p / u$ \;
						\textbf{break} {\color{winered}\tcc*[l]{we have found the optimal $\rho(c)$}}}
						$p \gets p - x_i(c)$\;
						$u \gets u - u_i(c)$\;
					}
				}
			}
			\eIf{$\rho(c) = \infty \text{~for all~} c \in \rem$}{ 
				\Return{$W, \textrm{pay}$}\;
			}{
				$c^* \gets \argmin_{c \in \rem} \rho(c)$ \;
				$W \gets W \cup \{c^*\}$ \;
				$\rem \gets \rem \setminus \{c^*\}$  \;
				\For{$i \in N_{c^*}$}{
					$\textrm{pay}[v_i][c^*] \gets \min(x_i(c^*), \rho(c^*) \cdot u_i(c^*))$ \;
					$\spent_i \gets \spent_i + \textrm{pay}[v_i][c^*]$ \;
				}
			}
		}
    }
	\;
\end{algorithm}

\subsection{PJR Degree of Equal Shares with General Constraints}

Note that the estimation of the potential function in the proofs of \Cref{thm:phrag_guarantee,thm:phrag_guarantee_constraints,thm:phrag_guarantee_matroid} remains unchanged, since it does not rely on the specific timestamps at which purchases occur. However, because all purchases now occur at much later timestamps (compared to the standard PropRank), the global scaling factors may be set to excessively high values. As a result, voters might wait far too long before making purchases in subsequent rounds. In other words, in these later rounds voters could maintain unreasonably high savings, potentially leading to unbalanced satisfaction, even though the purchases they make would still exhibit good payment-per-utility ratios. In order to prevent that, we resign from global scaling factors. As it is explained in \Cref{pjr_degree} this variant preserves the PJR degree guarantees, though it is hard to formally reason about the stronger property of proportionality degree. The following theorem preserves with the same proof as in the case of the PropRank algorithm.

\begin{theorem}\label{thm:mes_pjr}
	The method of equal shares with the parameter $\kappa$ has the PJR degree of $d(\alpha) = (\alpha - u_{\max})/(\kappa + 1)$ for participatory budgeting constraints.
\end{theorem}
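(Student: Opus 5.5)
We will transfer the proof of \Cref{thm:phrag_guarantee_pjr} to the generalized method of equal shares. The key observation is that that proof never used the timestamps of purchases or the global scaling factors (see the remark in \Cref{pjr_degree}). It used only two facts: that every voter pays at most her cap, and that the local scaling factor at the moment of a purchase is lower-bounded via a not-yet-selected candidate of $T$. So we fix an $\alpha$-cohesive group $S$ with witness $T=\{c_1,\dots,c_z\}$, ordered by non-increasing $\alpha(c)/\cost(c)$. We treat the whole run of the algorithm as one sequence of purchases, concatenating all invocations of the subroutine. Along this sequence we track $m$, the maximal amount spent by a voter of $S$. In place of the PropRank budget $t-m_i$, we use the money $p_i$ that the voter holds within the current invocation, i.e.\ $t_{\max}-\spent_i$.

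\textbf{Step 1 (rate bound).} Consider a purchase that raises $m$ from $m$ to $m'$, witnessed by $i\in S$, while some $c_T\in T$ is still available. The local scaling factor computed inside the subroutine satisfies $\lambda\ge |S|(t-m)\alpha(c_T)/\cost(c_T)$, exactly as before, because the local factors are recomputed from the current balances. The payment of $i$ is at most $\mathrm{cap}_i=\max(x_i,y_i)$. The $x_i$ part is bounded by $(\kappa+1)(t-m_i)u/\lambda$ as in the PropRank proof. The new ingredient is the $y_i$ part: we bound $y_i\le(1+\kappa)\rho_{\best}u_i(c)$. Since $c_T$ is available, $\rho_{\best}\le \cost(c_T)/\sum_{j\in N}u_j(c_T)\le \cost(c_T)/(|S|\alpha(c_T))$, which gives $y_i\le (1+\kappa)u\,\cost(c_T)/(|S|\alpha(c_T))$. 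In both cases we obtain a utility-per-increment-of-$m$ ratio of at least $\frac{1}{\kappa+1}\cdot\frac{\alpha(c_T)|S|}{\cost(c_T)}$. From this we build the concave lower-bound function $f$ exactly as in \Cref{thm:phrag_guarantee_pjr}, and get $f(b/n)\ge \alpha/(\kappa+1)$.

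\textbf{Step 2 (removal of a candidate of $T$).} Let $c_T$ be the first candidate of $T$ that is removed. We must show that at that point some voter of $S$ has spent at least $b/n-\cost(c_T)u_{\max}/(|S|\alpha(c_T))$. The main loop gives voters as much money as possible while staying feasible. Hence, if $t_{\max}<b/n$ when $c_T$ becomes infeasible, the total spending is below $b-\cost(c_T)$, which contradicts infeasibility of $c_T$ under PB constraints. This is the same counting as before, now applied to the entitlement $t_{\max}$ in place of time. Otherwise the entitlement reached $b/n$ and $c_T$ was still not bought by the subroutine at its end. By \Cref{lem:scaling_factor_mes} the local factors are at most $u_{\max}$, so $\sum_{i\in S}x_i(c_T)\ge |S|(b/n-m)\alpha(c_T)/u_{\max}$, and this must be less than $\cost(c_T)$. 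That yields the required spending bound. Beyond that point the slope of $f$ is at most $\frac{1}{\kappa+1}\frac{\alpha(c_T)|S|}{\cost(c_T)}$, and subtracting it over the residual interval gives $d(\alpha)=(\alpha-u_{\max})/(\kappa+1)$.

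\textbf{Expected obstacle.} The main difficulty is the $\epsilon$-step of the main loop, where the subroutine at $t_{\max}+\epsilon$ proposes an infeasible set and only a largest feasible subset is kept. There the ``not yet bought at the end of the invocation'' argument must still apply to $c_T$. I would handle this by arguing in the limit $\epsilon\to0$: $c_T$ is removed only once its addition is infeasible, which is a budget statement independent of which subset was kept. The spending bound then follows from the state at entitlement $t_{\max}$, where \Cref{lem:scaling_factor_mes} applies.
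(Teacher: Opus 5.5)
Your proposal follows the paper's proof essentially step for step. The rate bound is split the same way: the $\mathrm{cap}_i=x_i$ case reuses the PropRank calculation with the entitlement $t_{\max}$ in place of time, and the $\mathrm{cap}_i=y_i$ case bounds $\rho_{\best}$ through the still-available $c_T$. The removal step is also the same: the counting argument rules out entitlements below $b/n$, and \Cref{lem:scaling_factor_mes} is then applied at the end of a subroutine invocation.

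Your explicit treatment of the $\epsilon$-step goes slightly beyond the paper, which only remarks that candidates are removed after a full invocation of the subroutine. Note, though, that your limit argument covers only the removal step. The rate bound for payments of the retained subset in that step is taken for granted, exactly as in the paper, even though those payments were computed in a run that also contained the discarded purchases.
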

\begin{proof}
For each time $t$ let $m(t)$ denote the maximal amount of money that some voter from $S$ has spent until $t$. Consider a purchase that changed this value from $m(t)$ to $m(t')$, , which is witnessed by a voter $i \in S$, and assume that at this moment the candidate $c_T \in T$ has not yet been selected. Then we have two cases. If $\mathrm{cap}_i(c, r) = x_i(c, r)$, then we repeat the caluculations from the proof of \Cref{thm:phrag_guarantee_pjr}, and get that the satisfaction of some voter in $S$ due to the purchase increased by the value $u$ such that:
\begin{align*}
	\frac{u}{m(t') - m(t)} \geq \frac{1}{\kappa + 1} \cdot \frac{\alpha(c_T)|S|}{\cost(c_T)} \text{.}
\end{align*}

Otherwise, that is if $\mathrm{cap}_i(c, r) = y_i(c, r)$, then we get that:
\begin{align*}
	\frac{u}{m(t') - m(t)} &\geq \frac{u_i(c)}{y_i(c, r)} = \frac{u_i(c)}{\rho_{\best}(i) \cdot u_i(c) \cdot (\kappa + 1)} = \frac{1}{\rho_{\best}(i) \cdot (\kappa + 1)} \\
						   &\geq \frac{\sum_{j \in N} u_j(c_T)}{\cost(c_T) \cdot (\kappa + 1)} \geq \frac{1}{\kappa + 1} \cdot \frac{\alpha(c_T)|S|}{\cost(c_T)} \text{.}
\end{align*}

It remains to show that when $c_T$ was removed then at least one voter from $S$ must have spent at least $\nicefrac{b}{n} - \frac{\cost(c_T)u_{\max}}{|S|\alpha(c_T)}$. If we prove this, the proof will continue the same way as the proof of \Cref{thm:phrag_guarantee_pjr}. For the sake of contradiction, assume that this is not the case. We will first show that $c_T$ could not have been removed at time $t' < \nicefrac{b}{n}$. 

Indeed, the total money spent at $t'$ would be at most:
	\begin{align*}
		t'(n - |S|) + |S| \cdot \left(\nicefrac{b}{n} - \frac{\cost(c_T)u_{\max}}{|S|\alpha(c_T)}\right) &< \nicefrac{b}{n} \cdot (n - |S|) + |S| \cdot \left(\nicefrac{b}{n} - \frac{\cost(c_T)u_{\max}}{|S|\alpha(c_T)}\right) \\
		&= b - \frac{\cost(c_T)u_{\max}}{\alpha(c_T)} \leq b - \cost(c_T) \text{.}
	\end{align*}
	This gives a contradiction, and so $c_T$ must had been removed at least at time $t'' = \nicefrac{b}{n}$. At this time it must had been the case that:
	\begin{align*}
		\cost(c_T) &> \sum_{i \in S} \kappa \cdot \frac{2 (t'' - m(t''))u_i(c_T)}{u_i(c_T) + \max(\lambda, u_i(c_T))} + (1 - \kappa) \cdot \frac{(t'' - m(t''))u_i(c_T)}{\max(\lambda, u_i(c_T))} \\
		           &\geq \sum_{i \in S} \cdot \frac{(t'' - m(t''))u_i(c_T)}{\max(\lambda, u_i(c_T))} \geq |S| \cdot \frac{(t'' - m(t''))\alpha(c_T)}{u_{\max}} \text{.}
	\end{align*} 
	In the above estimation we use the bound on the local scaling factors, which we can do since the candidates are removed only at the end of an iteration, that is after an invocation of a single subrouting. 

	After reformulation:
	\begin{align*}
		t'' - m(t'') < \frac{\cost(c_T)u_{\max}}{|S|\alpha(c_T)} \text{.}
	\end{align*}
	Which shows that some voter must have spent at least $\nicefrac{b}{n} - \frac{\cost(c_T)u_{\max}}{|S|\alpha(c_T)}$ money. This completes the proof.
\end{proof}

The proofs of the next two theorems follow exactly the same strategies as for the case of PropRank. We provide one of them with details as an illustration.

\begin{theorem}\label{thm:mes_guarantee_constraints}
	Consider an outcome $W$ returned by the method of equal shares with parameter~$\kappa$. For each $(\alpha, \beta)$-cohesive group of voters $S$, it holds that:
	\begin{align*}
	\sum_{c \in W} \max_{i \in S} u_i(c)  \geq \alpha \cdot \frac{1 - \gamma}{\kappa + 1}  - \frac{ u_{\max}}{\kappa + 1} \text{.}
	\end{align*}
	where $\gamma = \nicefrac{|S|}{n}$, and $u_{\max}$ is the highest utility a voter assigns to a candidate.
\end{theorem}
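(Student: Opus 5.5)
The plan is to combine the per-purchase efficiency bound from the proof of \Cref{thm:mes_pjr} with the time-threshold argument from the proof of \Cref{thm:phrag_guarantee_constraints}, adapted to the PJR quantity $\sum_{c \in W}\max_{i\in S}u_i(c)$ as in \Cref{thm:phrag_guarantee_pjr}.

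Fix an $(\alpha,\beta)$-cohesive group $S$. Let $m$ denote the maximal amount spent by any voter of $S$, and let $f(m)$ lower-bound the PJR satisfaction obtained from purchases in which no voter of $S$ has spent more than $m$. Consider the purchases made while the entitlement is below
\begin{align*}
t^* = \frac{\beta}{|S| n}\,(n-|S|)\cdot n/n .
\end{align*}
More simply, take $t^*$ such that at most $t^* n < \beta (n-|S|)/|S|$ unit-cost candidates have been bought. The set $T$ of candidates bought so far is feasible and has $|T| < \beta(n-|S|)/|S|$. Hence cohesiveness yields $X$ with $|X|\le\beta$, $\sum_{c\in X}\alpha(c)\ge\alpha$, and $X\cup T\in\feasibles$. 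Every candidate of $X\setminus T$ is therefore still available, because the family is downward closed.

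Order $X=\{c_1,\dots\}$ by decreasing $\alpha(c)$. While the top-indexed remaining candidate $c_z\in X$ is available, each purchase that raises $m$ contributes satisfaction at rate at least $\frac{1}{\kappa+1}\alpha(c_z)|S|$ per unit of $m$, with unit costs. This uses the two-case computation from \Cref{thm:mes_pjr}. In the $x_i$ case the local scaling factor is at least $|S|(t-m)\alpha(c_z)$. In the $y_i$ case the bound $\rho_{\best}\le 1/\sum_j u_j(c_z)$ applies. So $f$ is concave with slopes $\frac{|S|\alpha(c_j)}{\kappa+1}$ on consecutive segments of length $1/|S|$. Over a total spending of $\beta/|S|$ it would accumulate at least $\alpha/(\kappa+1)$. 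Restricting to spending $t^*\approx \frac{\beta}{|S|}\cdot\frac{n-|S|}{n}$ and using concavity gives at least $\frac{\alpha}{\kappa+1}\cdot(1-\gamma)$.

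The remaining step, and the main obstacle, is showing that when $c_z$ ceases to be affordable or available, some voter of $S$ has already spent nearly the full entitlement $t^*$, within $\frac{u_{\max}}{|S|\alpha(c_z)}$. This is done as in \Cref{thm:mes_pjr}. While $c_z$ is still available at the end of an iteration and not bought, we have
\begin{align*}
1>\sum_{i\in S}\mathrm{cap}_i(c_z)\ge |S|(t-m)\alpha(c_z)/u_{\max}.
\end{align*}
This step needs \Cref{lem:scaling_factor_mes}, which is valid because removals happen only between subroutine invocations. From it, $t-m< u_{\max}/(|S|\alpha(c_z))$. The subtle point is that the main loop chooses $t_{\max}$ by feasibility rather than by time, so I must argue that $c_z$ cannot become infeasible before the purchased set reaches size $t^* n$; the choice of $t^*$ guarantees this. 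The lost spending, times the slope bound $\frac{|S|\alpha(c_z)}{\kappa+1}$, costs at most $\frac{u_{\max}}{\kappa+1}$. Subtracting this loss gives the claimed bound $\alpha\frac{1-\gamma}{\kappa+1}-\frac{u_{\max}}{\kappa+1}$.
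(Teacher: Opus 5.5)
Your proposal is correct and follows the paper's proof essentially step for step. It uses the entitlement threshold $t=\frac{\beta(n-|S|)}{|S|n}$ to apply cohesiveness to the purchased set $T$, the two-case per-purchase rate $\frac{|S|\alpha(c_z)}{\kappa+1}$ from \Cref{thm:mes_pjr}, concavity of $f$ to reach $\frac{(1-\gamma)\alpha}{\kappa+1}$, and the removal argument via \Cref{lem:scaling_factor_mes} costing at most $\frac{u_{\max}}{\kappa+1}$. The differences are cosmetic: you take $t^*$ strictly below the threshold so that $|T|<\beta\cdot\frac{n-|S|}{|S|}$ holds strictly (the paper's use of equality glosses over this), and you use segments of length $1/|S|$ with gains $\alpha(c_j)/(\kappa+1)$ instead of length $(\kappa+1)/|S|$ with gains $\alpha(c_j)$, which gives the same interpolated value at $t^*$.
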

\begin{proof}
	The proof combines the ideas from the proofs of \Cref{thm:phrag_guarantee_constraints} and \Cref{thm:mes_pjr}. Consider an $(\alpha, \beta)$-cohesive group of voters $S\subseteq N$, and the purchases made until $t$:
	\begin{align*}
		t = \frac{\beta}{|S| n} \cdot (n - |S|) \text{.}
	\end{align*}
	Let $T$ be the set of candidates purchased until $t$; clearly $|T| \leq tn$. For each candidate $c'$ we use the notation $\alpha(c') = \min_{i \in S} u_i(c')$. Since $S$ is $(\alpha, \beta)$-cohesive we know that there exists a set $X$ such that $|X| \leq \beta$, $\sum_{c \in X} \alpha(c) \geq \alpha$, and $X \cup T \in \feasibles$. Let us rename the candidates in $X$ so that $X = \{c_1, c_2, \ldots, c_{|X|}\}$ and 
	\begin{align*}
		\alpha(c_1) \geq \alpha(c_2) \geq \ldots \geq \alpha(c_{|X|}) \text{.}
	\end{align*}
	Let $c_z$ be the element of $X \setminus T$ with the lowest index (that is, with the highest value of $\alpha(c_z)$).
	
	For each time $t$ let $m(t)$ denote the maximal amount of money that some voter from $S$ has spent until $t$. Consider a purchase that changed this value from $m(t)$ to $m(t')$, , which is witnessed by a voter $i \in S$, and assume that at this moment the candidate $c_X \in X$ has not yet been selected. We repeat the caluculations from the proof of \Cref{thm:mes_pjr}, and get that the satisfaction of some voter in $S$ due to the purchase increased by the value $u$ such that:
	\begin{align*}
		\frac{u}{m(t') - m(t)} \geq \frac{1}{\kappa + 1} \cdot\alpha(c_X)|S| \text{.}
	\end{align*}

	Now, as before we consider a function $f$ that given the value of $m(t)$ lower bounds the satisfaction from the condition of PJR that can be obtained with the purchases such that no voter pays more than $m(t)$. We infer that for $x = \frac{\kappa + 1}{|S|}$ the value of the function is at least $\alpha(c_1)$. Similarly, for $x = \frac{2(\kappa + 1)}{|S|}$  the value of the function is at least $\alpha(c_1) + \alpha(c_2)$, and for  $x = \frac{\beta(\kappa + 1)}{|S|}$ its value is at least $\alpha$. Thus, for $x = t$ the function has the value of at least
	\begin{align*}
		\frac{\alpha \cdot t|S|}{\beta(\kappa + 1)} = \frac{\beta}{|S| n} \cdot (n - |S|) \cdot \frac{\alpha \cdot|S|}{\beta(\kappa + 1)} = \frac{n - |S|}{n} \cdot \frac{\alpha}{\kappa + 1}\text{.}
	\end{align*}

	Similarly, as in the proof of \Cref{thm:mes_pjr} we infer that when $c_z$ was removed the voters spent at least:
	\begin{align*}
	m_t = t - \frac{u_{\max}}{|S|\alpha(c_z)} \text{.}
	\end{align*}
	Now, consider the moment when $c_z$ was removed or when the last candidate with the index lower than $z$ was purchased, whichever is later. Assume that some voter form $S$ spent $m_t'$ of their virtual budgets by then. Clearly, $m_t' \geq m_t$, and for arguments greater than $m_t'$ the slope of the function $f$ is at most $\nicefrac{|S|\alpha(c_z)}{\kappa + 1}$.
	Since $f$ is concave, we get that:
	\begin{align*}
	f(m_t') \geq f(t) - \frac{|S|\alpha(c_z)}{\kappa + 1} \cdot \frac{u_{\max}}{\alpha(c_z)|S|} \geq  \frac{n - |S|}{n} \cdot \frac{\alpha}{\kappa + 1} - \frac{u_{\max}}{\kappa + 1}
	\end{align*}
	This completes the proof.
\end{proof}

\begin{theorem}\label{thm:phrag_pjr_guarantee_matroid}
	The method of equal shares with parameter $\kappa$ has the weak PJR degree of $d(\alpha) = (\alpha - u_{\max})/(\kappa + 1)$ for matroid constraints.
\end{theorem}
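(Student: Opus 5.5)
The plan is to combine the matroid exchange argument from the proof of \Cref{thm:phrag_guarantee_matroid} with the PJR-degree accounting from the proofs of \Cref{thm:mes_pjr} and \Cref{thm:mes_guarantee_constraints}. Fix a strongly $(\alpha,\beta)$-cohesive group $S$. For every candidate $c'$ let $\alpha(c')=\min_{i\in S}u_i(c')$. Track $m(t)$, the maximal amount spent by a voter of $S$ up to time $t$, where time means the entitlement $t$ passed to the subroutine. Let $f$ lower-bound $\sum_{c\in W'}\max_{i\in S}u_i(c)$ as a function of $m(t)$.

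\textbf{Step 1: choose the cut-off time.} Set $t=\frac{\beta}{|S|}+\frac{\Delta-1}{n}$, with $\Delta$ defined exactly as in \Cref{thm:phrag_guarantee_matroid}. Let $W$ be the candidates bought before the last purchase at $t$. Remove the $\beta-1$ candidates of $W$ with the largest $\alpha(c)$ to obtain $T$. The same calculation as there gives $|T|<\beta\frac{n-|S|}{|S|}$, so strong cohesiveness yields $X$ with $|X|=\beta$, $\alpha(c)\ge\alpha/\beta$ for all $c\in X$, and $X\cup T\in\feasibles$.

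\textbf{Step 2: case split on $T\cap X$.} If $T\cap X\neq\emptyset$, then $W$ contains $\beta$ candidates each worth at least $\alpha/\beta$ to every voter of $S$. This already gives $\sum_{c\in W}\max_{i\in S}u_i(c)\ge\alpha$. If $T\cap X=\emptyset$, then $|T\cup X|>|W|$, and the matroid augmentation property gives some $c\in X$ with $W\cup\{c\}\in\feasibles$. Hence throughout the purchases before $t$, some candidate $c_X$ with $\alpha(c_X)\ge\alpha/\beta$ is still available (not removed).

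\textbf{Step 3: bound the rate of utility per unit of spending.} For each purchase raising $m$, witnessed by $i\in S$, with $c_X$ available, there are two cases.
\begin{itemize}
\item If $\mathrm{cap}_i=x_i$, the calculation of \Cref{thm:phrag_guarantee_pjr} with $\lambda\ge |S|(t-m(t))\alpha(c_X)/\cost(c_X)$ applies, using unit costs.
\item If $\mathrm{cap}_i=y_i$, we use $\rho_{\best}\le 1/\sum_j u_j(c_X)\le 1/(|S|\alpha(c_X))$, as in \Cref{thm:mes_pjr}.
\end{itemize}
Both cases give rate at least $\frac{|S|\alpha}{(\kappa+1)\beta}$. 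Therefore $f(m)\ge m\cdot\frac{|S|\alpha}{(\kappa+1)\beta}$ for all $m$ up to the amount spent by time $t$.

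\textbf{Step 4: lower-bound the spending.} It remains to show that by time $t$ some voter of $S$ has spent at least $m_t$ with $|S|m_t\ge\beta-u_{\max}\beta/\alpha$ (using max spending $\ge$ average spending). In case (1) this follows from $\Delta$ as in \Cref{thm:phrag_guarantee_matroid}. In case (2), the available $c_X$ is not affordable, so $\sum_{i\in S}\mathrm{cap}_i(c_X)<1$. By \Cref{lem:scaling_factor_mes}, each $\mathrm{cap}_i(c_X)\ge x_i(c_X)\ge p_i\alpha(c_X)/u_{\max}$, which gives $\sum_{i\in S}p_i\le u_{\max}\beta/\alpha$.

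\textbf{Expected obstacle.} The delicate point is that \Cref{lem:scaling_factor_mes} holds only at the end of a subroutine invocation. I would therefore apply the non-affordability argument at the boundary $t_{\max}$ of an iteration rather than mid-run, exactly as done in \Cref{thm:mes_pjr}. I would also take care that the $\epsilon$-fallback step only helps.

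\textbf{Combining.} Putting Steps 3 and 4 together gives $f\ge(\beta-u_{\max}\beta/\alpha)\cdot\frac{\alpha}{(\kappa+1)\beta}=\frac{\alpha-u_{\max}}{\kappa+1}$, which is the claimed weak PJR degree.
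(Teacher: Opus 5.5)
Your outline follows the route the paper intends; the paper gives no separate proof of this theorem, only the remark that the PropRank arguments carry over. Most of it does transfer. The count $|W|<tn-\Delta$ only uses that every voter has received at most $t$ dollars. Case~(1) of Step~4 needs nothing MES-specific. The rate bound of Step~3 holds for every purchase made while some element of $X$ is in the subroutine's list of remaining candidates.

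The genuine gap is case~(2) of Step~4, which you flag but do not close. There you need $c_X$ to be unaffordable in the state at entitlement $t=\beta/|S|$. In PropRank this holds at every moment. In MES it, and \Cref{lem:scaling_factor_mes}, hold only when a call of the subroutine returns. Let $t_k\le t<t_{k+1}$ be the values of $t_{\max}$ in two consecutive iterations, with $t_k<t$. Nothing is bought in between, so at $t$ the voters of $S$ hold their leftover from $t_k$ plus $|S|(t-t_k)$, and $c_X$ may well be affordable.

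Moving to an iteration boundary, as you propose, does not repair this.
\begin{itemize}
\item At $t_k$ you get $\sum_{i\in S}p_i\le u_{\max}\beta/\alpha$. That only yields spending $|S|t_k-u_{\max}\beta/\alpha$, which can be far below $\beta-u_{\max}\beta/\alpha$.
\item At $t_{k+1}$ you have left the cut-off. Iteration $k+1$ pays with entitlement $t_{k+1}>t$, so its purchases are not covered by the count that produced $X$, and $c_X$ may itself be bought during that call.
\end{itemize}
The analogy with \Cref{thm:mes_pjr} fails for a structural reason. There the boundary used is the iteration that removes a fixed $c_T$, which is automatically an iteration end at entitlement at least $b/n$. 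Here $X$ is determined by the cut-off set $W$, and the spending bound is needed at the cut-off itself.

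A repair has to analyse the call of iteration $k+1$ directly. That call discards no candidate before returning, so every element of $X$ addable to $W$ stays in its list until bought, and the argument of Step~3 keeps applying. Suppose some such element is still unbought when the call returns; this includes the $\epsilon$-fallback, where the call at $t_{k+1}$ buys nothing. Then unaffordability at entitlement $t_{k+1}\ge t$ gives $\sum_{i\in S}(t_{k+1}-\spent_i)\le u_{\max}\beta/\alpha$. That is exactly the spending bound you need, and your combination step goes through.

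What remains is the sub-case in which that call buys every element of $X$ addable to $W$. The augmentation property guarantees only one such element. From that point on you have neither an available good candidate, hence no rate bound, nor an unaffordability bound, and nothing in your argument controls how much $S$ has spent. This sub-case needs an additional idea that neither your proposal nor the paper supplies.
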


\section{Equal Shares with Bounded Overspending and Backtrack Heuristics}\label{sec:heuristics}

In extensive experiments on the PabuLib datasets~\cite{fal-fli-pet-pie-sko-sto-szu-tal:pb-experiments}, we observed that the primary source of suboptimal proportionality in the PropRank algorithm is that voters often end up with substantial amounts of unspent budget (captured in our theoretical analysis by the $u_{\max}$ parameter). Indeed, our experimental results show that PropRank performs significantly better for $\kappa = 1$ than for $\kappa = 0$. Motivated by this observation, we propose two heuristic algorithms that remain closely aligned with the core design principles of PropRank:

\begin{description}
	\item[PropRankRem.] In the PropRank algorithm, the local scalings are computed by taking
the minimum  of $\mu \cdot u_i(c)$ over all available candidates
(lines~\ref{alg:loc-scalings-begin}--\ref{alg:loc-scalings-end}
in~\Cref{alg:base_prop_rank_algorithm}). Intuitively, these are the candidates that may yield higher utility and are therefore worth waiting for. 
However, this
conservative behavior can lead to voters retaining large amounts of unused
budget by the end of the algorithm.
PropRankRem addresses this issue by restarting the procedure whenever a
candidate $c_{\mathrm{rem}}$ is removed from consideration. Intuitively, if such a candidate is not selected, then the voters should not keep their money waiting for it being affordable in the future. After
restarting, the local scaling factors are computed the same way as before, but now the removed
candidate $c_{\mathrm{rem}}$ is excluded from the set over which the
minimum is taken. This means that voters who were previously holding back
their budget in anticipation of $c_{\mathrm{rem}}$ are no longer influenced
by it. Since each restart removes exactly one candidate from
consideration, the PropRank algorithm is repeated at most $m$ times, with
the set of candidates used in the computation of local scalings shrinking by one
in each restarted run.

\item[PropRankBacktrack.] This algorithm is similar in spirit to PropRank, but performs backtracking in a more controlled manner. It consists of multiple executions of PropRank. In each execution, local scaling factors are computed using only candidates from a designated waiting set, which we denote as $C_{\mathrm{wait}}$. Initially, we set $C_{\mathrm{wait}} := C$.

After each run, we update $C_{\mathrm{wait}}$ as follows. First, we add to $C_{\mathrm{wait}}$ all candidates selected during the previous run. Then, we remove from $C_{\mathrm{wait}}$ the candidate that was removed earliest during that previous run. If no such candidate exists---that is, if $C_{\mathrm{wait}}$ is already a subset of the previously selected set $W_{\mathrm{prev}}$---the procedure terminates and returns $W_{\mathrm{prev}}$. Otherwise, PropRank is executed again with the updated $C_{\mathrm{wait}}$.

During a new execution of PropRank, we retain the previous execution for reference. Let $C_{\mathrm{prev}}$ and $C_{\mathrm{new}}$ denote the waiting sets used for computing local scalings in the previous and current executions, respectively. At each timestamp, we compare which candidate would be selected under each of these two sets, that is, when local scalings are computed with respect to $C_{\mathrm{prev}}$ or $C_{\mathrm{new}}$.
If a candidate would be selected under $C_{\mathrm{prev}}$, we follow the previous execution and select that candidate. If instead a candidate would be selected under $C_{\mathrm{new}}$ but no candidate would be selected under $C_{\mathrm{prev}}$ at that time, we select it and from that point onward continue using only $C_{\mathrm{new}}$ to compute local scalings, discarding the previous execution entirely.

In other words, the new execution initially mirrors the previous one. However, it may eventually reach a timestamp at which a candidate would be selected under $C_{\mathrm{new}}$, while no candidate was selected at that time in the previous run. This candidate is then accepted, and from that moment on the previous execution is no longer considered.

The procedure terminates because in each successive run, either the vector of timestamps at which candidates are purchased decreases lexicographically, or the set of candidates used for computing local scalings strictly shrinks. Nevertheless, the procedure may require exponential time. To address this, we introduce an additional parameter $\sigma$ that controls the running time. Specifically, each candidate may be reintroduced to $C_{\mathrm{wait}}$ at most $\sigma$ times. Whenever a candidate that is not currently in $C_{\mathrm{wait}}$ is added back (because it was selected in the previous execution), we increment a counter associated with that candidate. Once the counter reaches $\sigma$, the candidate is no longer reintroduced to $C_{\mathrm{wait}}$. Consequently, the procedure consists of at most $\sigma m$ runs of PropRank.

\item[Bounded Overspending (BOS).] We additionally explain how to adapt the recently introduced Method of Equal Shares with Bounded Overspending (BOS)~\cite{pap-pis-ski-sko-was:bos} to the setting with general feasibility constraints. The method follows the same high-level structure as MES: at each step, it identifies a timestamp, increases the voters’ monetary entitlements accordingly, performs purchases using the original BOS procedure, and removes candidates whose inclusion would violate feasibility.

The main component that requires further specification is how to determine the timestamp at which the next invocation of BOS should take place. Consider a single execution of BOS. If, during such an execution, BOS attempts to select a candidate that is infeasible, that candidate is simply skipped, as in the standard BOS procedure. However, if BOS would attempt to select an infeasible candidate that could be fully funded without overspending, we additionally declare this entire execution of BOS invalid. We then define the timestamp as the largest value for which the corresponding execution of BOS is valid.

We provide the pseudo-code of BOS for general constraints in \Cref{alg:bounded-overspending}.
\end{description}

\SetKw{Return}{return}
\SetKw{Input}{Input:}
\begin{algorithm}[t]
	\small
	\captionsetup{labelfont={sc,bf}, labelsep=newline}
	\caption{Pseudo-code of the Method of Equal Shares with Bounded Overspending.}\label{alg:bounded-overspending}
	\Input{an election with feasibiity constraints $\feasibles$\\}
	\DontPrintSemicolon
	\SetAlgoNoEnd
	\SetAlgoLined
	\SetKwFunction{FBOS}{BOS-Subroutine}
    \SetKwProg{Fn}{Function}{:}{}

	$W \gets \emptyset$ \; 
	$\rem \gets C$ {\color{winered}\tcc*[l]{remaining candidates}}
	$\spent_i \gets 0  \text{~for each~} v_i \in V$  {\color{winered}\tcc*[l]{money spent by each voter}}
	\While{$\rem \neq \emptyset$}{
		$t_{\max} \gets \text{largest}~t~\text{such that \FBOS}(\spent, t, \rem)~\text{returns True as last coordinate}$\;
		$W_{\nxt}, p_{\nxt}, \_ \gets \text{\FBOS}(\spent, t_{\max}, \rem)$\;
		$W \gets W \cup W_{\nxt}$ \; 
		$\spent_i \gets \spent_i + \sum_{c \in W_{\nxt}} p_{\nxt}[v_i][c]$\;
		$\rem \gets \{ c \in \rem \setminus W \colon W \cup \{c\} \in \feasibles\}$\;
	}
	\Return{W}\;
	\;
	\Fn{\FBOS{$\spent$, $t$, $\rem$}}{
	$\textrm{pay}[v_i][c] \gets 0 \text{~for all~} v_i \in V, c \in C$ \; 
	$b_i \gets t - \spent_i  \text{~for each~} v_i \in V$ \;
	\While{\emph{exists} $c \in C \setminus W$ \emph{such that} $W \cup \{c\} \in \feasibles$ \emph{and} $\sum_{v_i \in V: u_i(c) > 0} b_i > 0$}{
		$(\alpha^*, \rho^*, c^*) \gets (1, +\infty, c)$ \;
		
		\For{$c \in C \setminus W$ \emph{such that} $W \cup \{c\} \in \feasibles$}{
			$\rho' \gets \rho$ satisfying $\cost(c) = \sum_{i=1}^n \min(b_i, u_i(c) \cdot \rho)$ \;
			\For{$\rho \in \{b_i/u_i(c) : v_i \in V, b_i >0, u_i(c) > 0\} \cup \{\rho'\}$}{
				$\alpha \gets \min(\left(\sum_{i=1}^n \min(b_{i},u_{i}(c) \cdot \rho)\right)/\cost(c),1)$ \;
				\If{$\rho/\alpha < \rho^*/\alpha^*$}{
					$(\alpha^*, \rho^*, c^*) \gets (\alpha, \rho, c)$ \;
				}
			}
		}
		\For{$c \in C \setminus W$ \emph{such that} $W \cup \{c\} \notin \feasibles$ \emph{and} $\sum_{v_i \in V: u_i(c) > 0} b_i > 0$}{
			$\rho' \gets \rho$ satisfying $\cost(c) = \sum_{i=1}^n \min(b_i, u_i(c) \cdot \rho)$ \;
			\If{$\rho' < \rho^*/\alpha^*$}{
				\Return{$\emptyset$, $\mathrm{pay}$, $\mathrm{False}$}\;
			}
		}
		$W \gets W \cup \{c^*\}$ \;
		\For{$v_i \in V$ \emph{such that} $b_i > 0$ \emph{and} $u_i(c^*) > 0$}{
			$b_i \gets \max(0, b_i - u_i(c^*) \cdot \rho^*)$ \;	
		}
	}
	\Return{$W$, $\mathrm{pay}$, $\mathrm{True}$}\;
	}
\end{algorithm}

\section{Experimental Evaluation}

In this section, we present an experimental evaluation of our rules using data from real participatory budgeting instances~\cite{fal-fli-pet-pie-sko-sto-szu-tal:pb-experiments}, encompassing over 1,300 instances. Following standard assumptions in the literature, we adopt the cost-utility framework. Specifically, for elections with approval ballots, if a voter approves a candidate $c$, she assigns that candidate a utility of $\cost(c)$; otherwise, she assigns zero utility.

We evaluate the instances using two fairness metrics and one utilitarian efficiency metric:
\begin{description}
\item[Extended Justified Representation Plus Violations (EJR+)]~\cite{BP-ejrplus}:
This metric measures violations of one of the strongest proportionality axioms studied in the literature---a slightly stronger variant of EJR that is verifiable in polynomial time. This serves as our primary evaluation metric. We use the up-to-one variant of EJR+ adapted to cost-utilities~\cite{BP-ejrplus}.

\item[Exclusion Ratio (ER)]~\cite{fal-fli-pet-pie-sko-sto-szu-tal:pb-experiments}:
This metric represents the fraction of voters who are completely excluded, meaning none of the candidates they support receive funding:
\begin{align*}
\frac{1}{n} \cdot \left|\left\{v_i \in N : \sum_{c \in W} u_i(c) = 0\right\}\right|.
\end{align*}

\item[Average Cost-Satisfaction (Cost-Sat)]~\cite{fal-fli-pet-pie-sko-sto-szu-tal:pb-experiments}:
This metric represents the average total utility received by voters:
\begin{align*}
\frac{1}{n} \cdot \sum_{v_i \in N} \sum_{c \in W} u_i(c) \text{.}
\end{align*}
To enable cross-election comparisons, we normalize this value by dividing by the corresponding metric value obtained from the greedy rule used in the actual election, following \citet{pap-pis-ski-sko-was:bos}.
\end{description}

We also compare our rules with three well-established rules tailored to participatory budgeting: the method of equal shares with ADD1 completion (MES-PB)~\cite{pet-sko:laminar, pet-pie-sko:c:participatory-budgeting-cardinal}, the method of equal shares with bounded overspending (BOS-PB)~\cite{pap-pis-ski-sko-was:bos}, and the greedy method (Greedy-PB)~\cite{fal-fli-pet-pie-sko-sto-szu-tal:pb-experiments}, which simply selects candidates in order of the total utility they received from voters, skipping those whose cost would exceed the remaining budget given the current selection.

\begin{table*}[t!bh!]
\caption{Average cost satisfaction, exclusion ratio, and average number of EJR+ violations of the outputs produced by the considered rules, evaluated on \pabulib{} data as a function of the number of projects in an instance. Cost satisfaction is reported as a fraction of the corresponding satisfaction under the utilitarian rule. }
\label{table:statistics}
\begin{center}
\footnotesize
\begin{tabular}{ c || c | c | c || c | c | c || c | c | c } 
 & \multicolumn{3}{c||}{Cost-Sat} & \multicolumn{3}{c||}{Exclusion ratio} & \multicolumn{3}{c}{EJR+} \\
 \toprule
 instance size & $\leq 10$ & $11$ -- $30$ & $> 30$ & $\leq 10$ & $11$ -- $30$ & $> 30$ & $\leq 10$ & $11$ -- $30$ & $> 30$ \\
\toprule
Greedy  &  1 &1 & 1                     & 20.6\% & 19.06\% & 19.9\%              & 0.17 & 0.57  & 2.9\\ 
\midrule
MES-PB  &  0.82 & 0.82 & 0.84             & 23\%   & 16.3\% & 13.7\%                 & 0 & 0  & 0\\ 
BOS-PB  &  0.94 & 0.87 & 0.87             & 19.8\% & 14.8\% & 12.7\%                & 0.1 & 0.027  & 0.032\\ 
  
\midrule
PropRank ($\kappa = 0$)  &  0.93 & 0.9 &0.91             & 21.2\% & 16.1\% & 14.4\%             & 0.12 & 0.2  & 0.15\\        
PropRank ($\kappa = 1$)  &  0.82 & 0.8 & 0.83             & 22\% & 15.6\% & 13.4\%                & 0.005 & 0  & 0\\   

\midrule
MES ($\kappa = 0$)  &  0.92 & 0.9 &0.91             & 21.2\% & 16.1\% & 14.4\%             & 0.12 & 0.2  & 0.15\\        
MES ($\kappa = 1$)  &  0.86 & 0.8 & 0.83             & 21.8\% & 15.6\% & 13.4\%                & 0.005 & 0  & 0\\   

\midrule
PropRankRem ($\kappa = 0$) &  0.84 & 0.87 & 0.83             & 21.5\% & 15.4\% & 12.9\%             & 0.005 & 0.034  & 0\\
PropRankRem ($\kappa = 1$) &  0.81 & 0.8 & 0.78             & 22.1\% & 15.6\% & 12.9\%             & 0 & 0  & 0\\

\midrule
BOS  &  0.92 & 0.89 & 0.88             & 21\% & 15.2\% & 13.3\%             & 0.033 & 0.01  & 0.005\\

\end{tabular}
\end{center}
\end{table*}

We draw the following conclusions:
\begin{enumerate}
    \item First, we observe that the PropRank algorithm with $\kappa = 1$ achieves substantially better proportionality than the variant with $\kappa = 0$.
    \item The heuristic algorithms discussed in \Cref{sec:heuristics} yield further improvements over the guarantees of the basic PropRank variant, even though we cannot prove the analogous theoretical guarantees for these rules. While the base PropRank algorithm performs best with $\kappa = 1$, our experiments indicate that combining it with the mechanisms from \Cref{sec:heuristics} makes $\kappa = 0$ preferable. In this configuration, we observe very few EJR+ violations and favorable exclusion ratio values, while maintaining total utility comparable to that of the best algorithms specifically designed for participatory budgeting.
    \item The variant of the Method of Equal Shares with Bounded Overspending, adapted to general constraints, performs particularly well---even outperforming the original BOS-PB variant in terms of EJR+ violations and performing similarly in terms of exclusion ratio and total utility. This is especially noteworthy given that this rule is also the most computationally efficient among the techniques discussed in \Cref{sec:heuristics}.
\end{enumerate}

\section{Conclusion}

We have developed proportional algorithms for a general election model with arbitrary feasibility constraints and additive voter utilities. In particular, we demonstrated how established methods from the committee election literature---namely the Phragm\'en sequential rule and the Method of Equal Shares, both with and without bounded overspending---can be extended to this broader setting.
These extensions required new conceptual tools, most notably a mechanism for determining when voters should spend their virtual budgets on currently affordable candidates rather than postponing their spending in anticipation of more preferred candidates becoming affordable in later rounds.

Since our PropRank algorithm satisfies committee monotonicity, it can be used for constructing proportional rankings of the candidates. 

We proved that our rules satisfy strong proportionality guarantees. At the same time, we found that the underlying ideas can be pushed further toward more advanced heuristic variants, albeit at the cost of increased running time. Although the proof techniques used for the base algorithms do not directly extend to these heuristics, the heuristics remain firmly grounded in the same conceptual principles as the rules with provable guarantees.
Our experimental results confirm that the proposed rules perform well with respect to proportionality metrics when applied to participatory budgeting instances. In particular, the heuristic variants perform exceptionally well, producing virtually no EJR+ violations.

\subsection*{Acknowledgements}
We thanks Dominik Peters and Jannik Peters for the most valuable discussions over the ideas presented in the paper. 

The author was supported by the European Union (ERC, PRO-DEMOCRATIC, 101076570). Views and opinions expressed are however those of the authors only and do not necessarily reflect those of the European Union or the European Research Council. Neither the European Union nor the granting authority can be held responsible for them.

\begin{center}
\includegraphics*[scale=0.22]{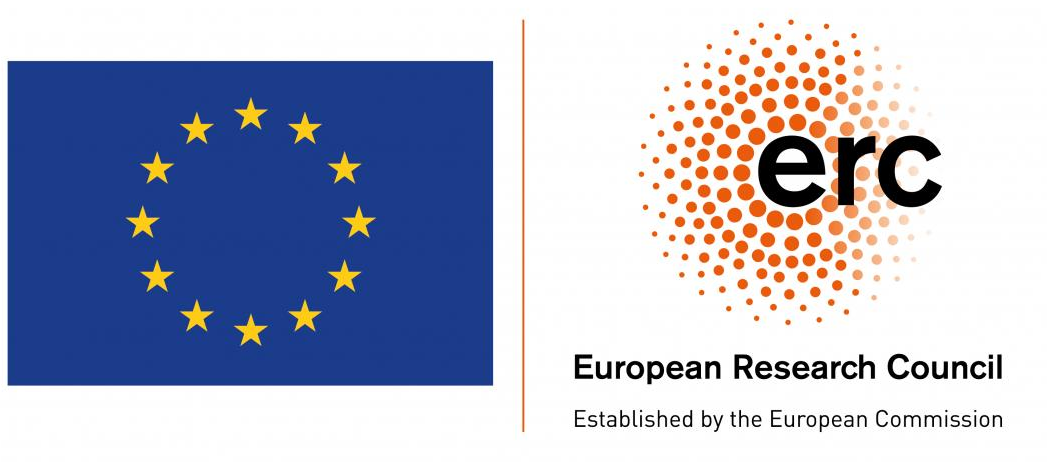}	
\end{center}

\bibliographystyle{plainnat}
\bibliography{main}

\end{document}